\newcommand{\p}{\partial}
\newcommand{\dd}{{\rm d}}
\begin{document}

\title{Affine sphere relativity}

\author{E. Minguzzi}
\institute{Dipartimento di Matematica e Informatica ``U. Dini'', Universit\`a degli
Studi di Firenze,  Via S. Marta 3,  I-50139 Firenze, Italy \\
\email{ettore.minguzzi@unifi.it} }
\authorrunning{E. Minguzzi}


\date{}

\maketitle

\begin{abstract}
\noindent
We investigate spacetimes whose light cones could be anisotropic.
We prove the equivalence of the structures: (a) Lorentz-Finsler manifold for which the mean Cartan torsion  vanishes, (b) Lorentz-Finsler manifold for which the indicatrix (observer space) at each point is a convex hyperbolic affine sphere centered on the zero section, and (c) pair given by a spacetime volume and a sharp convex cone distribution.
The equivalence suggests to describe {\em (affine sphere) spacetimes} with this structure, so that no algebraic-metrical concept enters the definition.
As a result, this work shows how the metric features of spacetime  emerge from elementary concepts such as measure and order.
 Non-relativistic spacetimes are obtained replacing  {\em proper} spheres with {\em improper} spheres, so the distinction does not call  for group theoretical elements.
 In physical terms, in affine sphere spacetimes the light cone distribution and the spacetime measure determine the motion of massive and massless particles (hence the dispersion relation). Furthermore, it is shown that, more generally,  for Lorentz-Finsler theories non-differentiable at the cone, the lightlike geodesics and the transport of the particle momentum over them are well defined though the curve parametrization could be undefined. Causality theory is also well behaved.
  Several results for affine sphere spacetimes are presented. Some results in Finsler geometry, for instance in the characterization of Randers spaces, are also included.
\end{abstract}

\setcounter{tocdepth}{2}
\tableofcontents

\section{Introduction}

In recent years the Finslerian generalization of general relativity has made considerable progress. Several results including much of the edifice of causality theory and the famous singularity theorems have been generalized \cite{minguzzi13d,minguzzi14c,aazami14,minguzzi15}. Only a few but important difficulties still remain; this work is devoted to the solutions of some of those. As we  shall see their resolution will make us look at the spacetime concept in some novel ways.


In general relativity it is possible to recover the Lorentz metric from  the spacetime volume form and the light cone distribution.  In fact,  it is well known that in spacetime dimension $n+1\ge 3$ two Lorentzian metrics on the same manifold share the same light cones if and only if they  are proportional, see e.g.\ \cite[App.\ D]{wald84}. The conformal factor can then be fixed to one imposing the equality of the volume forms.

This very simple property has prominent importance because it shows that the gravitational phenomena is encoded in two simple  concepts: the causal order and the spacetime measure. One could also add to this pair a further element, namely the spacetime topology.

This observation has led several researchers to believe that the quantization of gravity or better, of spacetime itself, must be formulated in terms of these structures. Among the theories that embody these ideas 
we  might mention Causal Set Theory \cite{bombelli87}
and unimodular gravity \cite{anderson71,henneaux89,bock03}.

%

We share the opinion that a fundamental theory should pass through the concepts of order, measure and topology and so that once the  manifold is given, one should be able to recover the metric from a volume form and a cone structure.
Unfortunately, this correspondence is lost for the so far proposed Finslerian generalizations of Einstein's general relativity, so this work aims to solve this problem.

It is perhaps worth to recall what is Finsler geometry before we become more specific. We might say that Riemannian spaces can be obtained from differentiable manifolds $M$ introducing a point dependent scalar product $g$ (Riemannian metric), which has the effect of converting each tangent space $T_xM$ into a (finite dimensional) Hilbert space. Similarly, Finsler spaces can be obtained  from manifolds $M$ by introducing a point dependent Minkowski norm $F_x$ or, which is the same,  a Finsler Lagrangian $\mathscr{L}=F^2/{2}$, which converts each tangent space $T_xM$ into a   Minkowski space, namely into a Banach space with strongly convex unit balls. These unit balls are also called {\em indicatrices}.

As the Minkowski ball is no more round (ellipsoidal), namely since it cannot be brought to a sphere through a linear change of coordinates on $T_xM$, Finsler geometry is essentially related to anisotropic features of the space.

In  Finslerian generalizations of Einstein's theory there are further complications related to the fact that the Finsler Lagrangian $\mathscr{L}$, having Lorentzian Hessian, induces non-compact unit balls (indicatrices).

We shall recognize that anisotropic theories of relativity can preserve the correspondence,
\begin{quote}
Finsler Lagrangian $\Leftrightarrow$ spacetime measure + light cone structure,
\end{quote}
provided the Finsler indicatrix is an affine sphere, or equivalently, provided the mean Cartan torsion (Tchebycheff form)  vanishes:
\begin{equation} \label{mei}
I_\alpha=0.
\end{equation}
This idea is the result of the physical interpretation of deep mathematical results by several distinguished mathematicians including Pogorelov, Calabi, Cheng and Yau. We shall also show that the coordinates introduced by Gigena to study affine spheres have a transparent physical interpretation. In particular, inhomogeneous projective coordinates should be used on the tangent space while homogeneous projective coordinates should be used on the cotangent space; in this way the former can be interpreted as velocity components while the latter as momentum components. The function $u({\bm v})$ solving the Monge-Amp\`ere equation of the affine sphere will receive the interpretation of observer Lagrangian of the theory.

In order to fully understand this solution we will have to introduce some concepts from {\em affine differential geometry}, as the reader might not be acquainted with this beautiful mathematical theory \cite{li93,nomizu94}. Thus  portions of the work will have a review character. We do not claim particular originality in this exposition, saved perhaps for the Finslerian point of view which at this stage is necessary in order to establish a connection with current literature on anisotropic gravity theories.

We recall that affine differential geometry originated with
Blaschke's  construction of a natural {transverse direction} - the {\em affine normal} - to any point on a  non-degenerate hypersurface immersed on  affine space. Remarkably, the construction does not require a scalar product, a fact which, ultimately,
will allow us to give a definition of spacetime free from algebraic-metrical elements. For instance, the distinction between non-relativistic and relativistic physics will be devoid of group theoretical characterizations and related instead to the center of the affine sphere distribution, whether placed at infinity or not.

Since the vacuum equations of general relativity demand the proportionality between the Ricci tensor and the metric, one might ask whether the condition $I_\alpha=0$ has a similar characterization. We shall prove that the answer is affirmative in at least three different ways as $I_\alpha=0$ can be regarded as the K\"ahler-Einstein condition for the Lorentz-Finsler metric (Theor.\ \ref{poj}), as the K\"ahler-Einstein condition for the Monge-Amp\`ere (Cheng-Yau) Riemannian metric of the timelike cone (Theor.\ \ref{poh}), and also as a kind of  Einstein condition for the Blaschke structure of the indicatrix (Prop.\ \ref{ein}).

Much of this work will be devoted to the kinematics of the theory and to its interpretation. The many proposed Finslerian gravitational dynamics \cite{horvath50,horvath52,takano74,takano74b,ikeda79,ishikawa80,miron87,miron92,rutz93,storer00,stavrinos08,
stavrinos09,voicu10,vacaru12,castro12,pfeifer12,lammerzahl12,li14} can then be adapted to our kinematical model, adding the condition $I_\alpha=0$.  A dynamics proposed by the author which first suggested to consider a vanishing mean Cartan torsion can be found in \cite{minguzzi14c}. There it was shown that the (hh-)Ricci tensor $R^\mu_{\ \alpha \mu \beta}$ appearing in most dynamical proposals is symmetric if $I_\alpha =0$, and also that these spaces are weakly-Berwald and weakly-Landsberg.

In (positive definite) Finsler spaces the condition $I_\alpha=0$ was already considered by Cartan \cite{cartan23}, but later Deicke \cite{deicke53} discovered that Finsler spaces satisfying this condition were Riemannian and hence isotropic. Of course, the interest in this condition faded, since the many results obtained through its imposition were a consequence of the triviality of the Finsler space.
 Early authors working in Finsler gravity did not pay much attention to the signature of  the metric, so some of them  discarded this condition \cite{ikeda79} although  Deicke's theorem really holds only for positive definite metrics.

Exact solutions will not be considered in this work but in a related paper \cite{minguzzi16c} we shall provide examples of affine sphere  spacetimes which reduce themselves to the Schwarzschild, Kerr, FLRW  spacetimes in a suitable velocity limit, and hence which satisfy the Lorentzian Einstein equations in the same limit.
  In general, the affine sphere condition $I_\alpha=0$ is quite hard to solve
 though, as we shall recall, general theorems  guarantee the existence of solutions. Mathematicians are working to find new methods to generate affine spheres in closed form \cite{fox14,hildebrand14b}. A perturbative approach seems more amenable but will be pursued in a different work.

This work is organized as follows. In the first section we recall some elements of Lorentz-Finsler theory, we define the indicatrix and we introduce the quotient and the induced metrics on the indicatrix. We introduce the canonical Hessian metric of the timelike cone and relate the Finsler Lagrangian to the K\"ahler potential of the cone.
We also give some arguments pointing to a null mean Cartan torsion, which can be added to those already discussed in \cite{minguzzi14c}. This condition makes it possible to identify the spacetime volume form in the usual way and can be regarded as a  K\"ahler affine condition of Ricci flatness on the vertical degrees of freedom.

In the second section we introduce the mathematics of affine spheres, we characterize affine spheres through the mean Cartan torsion $I_\alpha$, we clarify the role of the volume form on spacetime, we show how to convert affine sphere theoretical results into Finslerian results (and conversely), and  prove some theorems required for the physical interpretation of affine spheres. We introduce both proper and improper spheres, the physical theory constructed from those leading respectively to relativistic and non-relativistic physics.

The third section is devoted to the application of the results of the previous sections to the geometrical and physical interpretation of Lorentz-Finsler spaces having vanishing mean Cartan torsion. Here we use  a deep mathematical theorem, first conjectured by Calabi, in order to connect volume and conic order on spacetime with the affine sphere distribution on the tangent bundle. We are then able to give a definition of affine sphere spacetime that does not involve metrical or group theoretical elements.

In the fourth section   we return to the broader framework of Lorentz-Finsler theories. We show that the lightlike geodesic flow follows solely from the distribution of light cones and so does the transport of the photon momenta along the geodesic. These results are really independent of the Lagrangian and so do not use its differentiability at the light cone. They require just differentiability and convexity conditions on the distribution of light cones. Finally, we prove that the standard results of causality theory are preserved.

For space reasons the discussion of the relativity principle and a study of some models satisfying it will be given in a different work \cite{minguzzi16c}.


\subsection{Elements of Lorentz-Finsler theory}

Concerning notation and terminology, the Lorentz signature is $(-,+,\cdots, +)$. The wedge product between 1-forms is defined by $\alpha \wedge \beta=\alpha \otimes \beta-\beta\otimes \alpha$. On an affine space $E$ the Hessian metric of a function $f$ with respect to affine coordinates will be denoted, with some abuse of notation, $\dd^2 f$. The inclusion is reflexive: $X\subset X$. The manifold $M$  has dimension $m=n+1\ge 2$ and it will be physically interpreted as  the spacetime. Greek indices take values $0,1,\cdots,n$ while Latin indices take values in $1,\cdots,n$.  We shall often write ${\bm y}$ in place of $y^i$.  Local coordinates on $M$ are denoted $\{x^\mu\}$ while the induced local coordinate system on $TM$ is $\{x^\mu,y^\nu\}$, namely $y^\nu \p/\p x^\nu \in T_xM$.

A point in the space $TTM$ will be denoted with $(x,y,\dot x,\dot y)$. Observe that the canonical projection $\pi\colon TM\to M$, $(x,y)\mapsto x$, has pushforward $\pi_*\colon TTM\to TM$, $(x,y,\dot x,\dot y)\mapsto (x,\dot x)$ so the vertical space $VTM$ consists of the points $(x,y,0,\dot y)$ and it is naturally diffeomorphic to $TM\times_M TM$ (it can be easily checked calculating the cocyle after a change of coordinates $\tilde x=\tilde x(x)$ on the base \cite{godbillon69,minguzzi14c}).

We start giving a quite general setting for Finsler spacetime theory, which we call the  {\em rough model} \cite{minguzzi14h,asanov85,pimenov88}.

Let $\Omega$ be a  subbundle of  the slit tangent bundle, $\Omega \subset TM\backslash 0$, such that $\Omega_x$ is an open sharp convex cone for every $x$.
A {\em Finsler Lagrangian} is a map  $\mathscr{L}\colon \Omega \to \mathbb{R}$ which
is  positive homogeneous of degree two in the fiber coordinates
\[
\mathscr{L}(x,sy)=s^2 \mathscr{L}(x,y), \qquad \forall s>0.
  \]
  It is assumed that the fiber dependence  is at least $C^3(\Omega)$,
 that $\mathscr{L}<0$ on $\Omega$ and that $\mathscr{L}$ can be continuously extended setting $\mathscr{L}=0$ on $\p \Omega$. We might denote $\mathscr{L}_x:=\mathscr{L}|_{\Omega_x}$.
The matrix metric is defined as the Hessian  of $\mathscr{L}$ with respect to the fibers
\[
g_{\mu \nu}(x,y)= \frac{\p^2 \mathscr{L}}{\p y^\mu \p y^\nu}.
\]
This matrix can be used to define a metric in two different, but essentially equivalent  ways. The Finsler metric is typically defined as $g=g_{\mu \nu}(x,y) \dd x^\mu \dd x^\nu$ and is a map $g\colon \Omega \to  T^*M \otimes T^*M$.  For any given $x$ one could also use this matrix  and the mentioned diffeomorphism with the vertical space to define a vertical metric on $\Omega_x$ as follows $g_{\mu \nu}(x,y) \dd y^\mu \dd y^\nu$. Most often we shall use the latter metric, but should nevertheless be clear from the context which one is meant. In index free notation the metric will be also denoted  $g_y$ to stress the dependence on the fiber coordinates.
Given a non-linear connection one could also interpret these two metrics as two different restrictions, horizontal or vertical, of the Sasaki metric on $\Omega$
\[
g_S=g_{\mu \nu}(x,y) \,\dd x^\mu \dd x^\nu+g_{\mu \nu}(x,y)\, \delta y^\mu \delta y^\nu,
\]
where $\delta y^\mu=\dd y^\mu+N^\mu_\nu(x,y) \dd x^\nu$ and $N^\nu_\mu$ are the coefficients of the non-linear connection.

The manifold $(M,\mathscr{L})$ is called a {\em Finsler spacetime} whenever   $g_y$ is Lorentzian, namely of signature $(-,+,\cdots,+)$. By positive homogeneity we have $\mathscr{L}=\frac{1}{2} g_y(y,y)$ and $\dd \mathscr{L}=g_y(y,\cdot)$. The usual Lorentzian-Riemannian case is obtained for $\mathscr{L}$ quadratic in the fiber variables. The vectors belonging to $\Omega_x$ are called {\em timelike} while those belonging to $\p \Omega_x\backslash\{0\}$ are called {\em lightlike}.  We shall also denote the former set with $I^+_x$ and the latter set with $E^+_x$, often dropping the plus sign. A vector is {\em causal} if it is either timelike or lightlike, the set of causal vectors being denoted $J^+_x$.
The plus sign is introduced for better comparison with  notations of Lorentzian geometry and general relativity and can be dropped in most parts of this work.

There are other approaches to Lorentz-Finsler geometry which are contrasted in \cite{minguzzi14h}. For instance, one might start with a Finsler Lagrangian  defined on the whole slit tangent bundle $TM\backslash \{0\}$, in which case it is possible to prove, for $n\ge 2$ and for reversible Lagrangians $\mathscr{L}(x,-y)=\mathscr{L}(x,y)$, that the timelike set $\{\mathscr{L}<0\}$ is the union of two convex sharp cones \cite{minguzzi13c} (see also \cite{beem70,beem74,perlick06}). A time orientability assumption allows one to select a future  $I_x^+$ and a past $I_x^-$ continuous cone distribution as in Lorentzian geometry.
The present study applies to this framework as well provided the future cone is identified with $\Omega$ and the Finsler Lagrangian is there restricted. Observe that we do not demand the differentiability of the Finsler Lagrangian at the boundary $E^+$, nor that the metric can be continuously extended to it. This condition would make it possible to replace the Finsler Lagrangian with an extension defined over the whole slit tangent bundle \cite{minguzzi14h}.

The space indicatrix, or observer space, or simply {\em the indicatrix} is the set\footnote{Whenever the Lagrangian is defined over the whole slit tangent bundle it  can be useful to define  \cite{minguzzi14h} the  {\em light cone} indicatrix $\mathscr{I}^0$ or the {\em spacetime indicatrix}  $\mathscr{I}^+$ obtained for $\mathscr{L}=0$ or $2\mathscr{L}=1$. The names follow from the signature of the induced metrics.}
\[
\mathscr{I}^-=\{(x,y)\in \Omega \colon 2\mathscr{L}(x,y)=-1\}
\]%
Once again there will be no ambiguity in dropping the minus sign.

%

 Due to positive homogeneity the Finsler Lagrangian can be recovered from the indicatrix as follows
\begin{equation} \label{jui}
\mathscr{L}(x,y)=- s^2/2, \textrm{ where } s>0 \textrm{ and }  y/s \in \mathscr{I}^-.
\end{equation}

The Cartan torsion is defined by
\begin{equation} \label{mac}
C_{\alpha \beta \gamma}(x,y):=\frac{1}{2} \,\frac{\p}{\p y^\gamma} \,g_{\alpha \beta} ,
\end{equation}
it is symmetric and satisfies $C_{\alpha \beta \gamma} y^\gamma=0$. Its traceless part will be denoted with $M_{\alpha \beta \gamma}$. The Cartan curvature is $C_{\alpha \beta \gamma \delta}:=\frac{\p}{\p y^\delta} C_{\alpha \beta \gamma}$. For every $x$ the set $\Omega_x$ endowed with the vertical metric $g_{\mu \nu}(x,y) \dd y^\mu \dd y^\nu$ has Levi-Civita connection coefficients $C^\alpha_{\beta \gamma}$ in the coordinates $\{y^\mu\}$.
The mean Cartan torsion is
\begin{equation} \label{aqs}
I_\alpha:=g^{\mu \nu} C_{\mu \nu \alpha}=\frac{1}{2}\frac{\p}{\p y^\alpha} \log \vert \det g_{\mu \nu}\vert ,
\end{equation}
where for the last equality we used Jacobi's formula for the derivative of the determinant.

A well known problem in Finsler geometry is that of providing a natural notion of manifold volume form. There have been several proposals, the most popular being the Busemann's  and the Holmes-Thompson's volume forms \cite{paiva04}. Unfortunately, none of them can work in a Lorentz-Finsler framework since they rely on the compactness of the  indicatrix.

In pseudo-Riemannian geometry there is a simple volume form associated to any metric. In a local coordinate system it is given by
\begin{equation} \label{nos}
\dd \mu=\left| \sqrt{ \vert \det  g_{\mu \nu}\vert} \dd x^0\wedge \dd x^1\wedge \cdots\wedge \dd x^n\right|= \sqrt{ \vert \det  g_{\mu \nu}\vert}\,\dd^{n+1} x.
\end{equation}
where $\vert\,\vert$ reminds us that we are taking the equivalence class, that is, we are regarding as equivalent any  two $n$+1-forms differing by a sign.

 Since in Physics there seems to be the need of a well defined spacetime volume we find in Eq.\ (\ref{aqs})  a first motivation for imposing the condition $I_\alpha=0$. This is the simplest condition which assures that a natural volume form on spacetime could be defined. In fact, if it holds true we can adopt the usual pseudo-Riemannian expression for the volume form.

\subsection{Quotient  and  induced metrics} \label{kzo}
This section introduces the notion of quotient metric, and of induced (angular) metric on the indicatrix. It is known material  introduced here just  to fix the notation and terminology.

The pair $(\Omega_x,g)$ is a Lorentzian manifold. Let $Q_x$ be the quotient of $\Omega_x$ under the action of
homotheties.
The bundle $\pi^Q\colon
\Omega_x \to Q_x$ is principal, the group action on it being
the group of dilations $(\mathbb{R},+)$, where any homothety acts as
$y \mapsto e^s y$, for some $s\in \mathbb{R}$. The one-parameter
group of diffeomorphisms is generated by the Liouville vector field $k\colon: T_xM\to T T_xM$
\[k(y)=y=y^\mu \p/\p y^\mu.\]
The positive homogeneity of the metric $g_{\mu \nu}(x,sy)=g_{ \mu \nu}(x,y)$ implies
$\mathcal{L}_k g=2 g$,
where $\mathcal{L}$ is the Lie derivative, thus $k$ is a Killing
vector for the metric $g/ \vert 2\mathscr{L}(x,y)\vert$.
The principal bundle $\Omega_x$ can be endowed with a natural connection
1-form
\begin{equation}
\omega:= \frac{g_y(y,\cdot)}{g_y(y,y)}.
\end{equation}
Indeed, $\omega$ satisfies the defining conditions of a connection
1-form on a principal bundle \cite{kobayashi63} (recall that
$\mathcal{L}_k k=0$)
\begin{align*}
\mathcal{L}_k \omega &=0, \qquad \omega(k)=1.
\end{align*}
Let us define
\begin{equation}
F=\sqrt{2\vert \mathscr{L}\vert}.
\end{equation}
The connection 1-form is integrable and the principal bundle is
trivial because the connection is exact
\[
\omega=\dd \log F.
\]
 It is also possible to define a metric on $Q_x$ while
working with vectors on $T\Omega_x\simeq T_xM$. This process is quite well known in
relativity theory  \cite{geroch71} and has been called {\em
indicatorization} in the literature on Finsler spaces
\cite{matsumoto77}.

Let us consider the metric on $\Omega_x$
\begin{equation} \label{ten}
h=\frac{1}{\vert2 \mathscr{L} \vert}\Big(g-\frac{g_y(y,\cdot)\otimes  g_y(y,\cdot)}{g_y(y,y)}\Big),
\end{equation}
which in coordinates $\{y^\mu\}$ reads
\begin{equation} \label{dqa}
h_{ \mu \nu}=\frac{1}{\vert2 \mathscr{L} \vert}\big(\mathscr{L}_{,\mu,\nu}-\frac{1}{2 \mathscr{L}}\, \mathscr{L}_{,\mu}
\mathscr{L}_{,\nu}\big)=-\frac{1}{F} \,F_{,\mu,\nu}
\end{equation}
then
\[
h(y,\cdot)=0, \qquad h_{s y}= h_y, \textrm{ for } s>0,
\]
where the last property can be written $\mathcal{L}_k h=0$.
Thus
$h$ depends only on the point of $Q_x$ and annihilates the radial position vector $y$, so  it
defines a metric $\tilde{h}$ on the quotient $Q_x$ through
\[
\tilde{h}_{\tilde{y}}(w_1,w_2):=h_y(W_1,W_2) ,
\]
where $w_1,w_2\in T_{\tilde{y}}Q_x$. Here $y\in \Omega_x$ is any vector such
that $\pi^Q(y)=\tilde y$,
 and $W_1,W_2\in
T_xM$ are representatives of $w_1,w_2\in T_{\tilde{y}}Q_x$ in the sense that
$\pi^Q_*(W_i)=w_i$, $i=1,2$. Since $h$ is homogeneous of zero degree  and
annihilates $y$,  the defining expression is well posed as it is
independent of the choice of representatives $(y,W_1,W_2)$. Observe
that $h=\pi^Q{}^{*}\tilde{h}$ but in what follows we might not be too
rigorous in distinguishing between $h$ and $\tilde h$.

\begin{remark}
The metric (\ref{dqa}) is also the {\em induced metric} on the indicatrix $\mathscr{I}^-$ since there $2\mathscr{L}=-1$, and the vectors tangent to the indicatrix annihilate $\dd \mathscr{L}$, so over vectors tangent to the indicatrix $h_{ \mu \nu}=\mathscr{L}_{,\mu,\nu}$.  In Finsler geometry it is called  {\em angular metric} but in Lorentz-Finsler theory the name {\em acceleration metric} seems more appropriate. 
\end{remark}
From (\ref{ten}) the metric $g$ reads
\begin{equation} \label{kkk}
g=\vert 2 \mathscr{L}\vert
\big(-\omega\otimes \omega+h\big).
\end{equation}
Since $g$ is Lorentzian, $h$ is  Riemannian over $\mathscr{I}^-$. This decomposition can also be written in polar form
\begin{equation} \label{war}
g=- \dd F^2+F^2 h.
\end{equation}
%


\subsection{Riemannian Hessian metric on the timelike subbundle} \label{rho}
The Lorentz-Finsler structure $\mathscr{L}$ on $\Omega$, induces a Lorentzian metric on each fiber $\Omega_x$ which is in one-to-one correspondence with a Riemannian Hessian structure on $\Omega_x$ induced by a $m$-logarithmically homogeneous potential.

Let us construct this correspondence (compare with recent work in \cite{hildebrand14,fox15}). From the previous section, the metric on $\Omega_x$ can be written
\[
g=\dd^2 \mathscr{L}
=\frac{1}{2 \mathscr{L}}\,(\dd \mathscr{L})^2\oplus (- 2 \mathscr{L}) h=-\dd F^2+F^2 h, \quad \mathscr{L}=-\frac{1}{2}\, F^2.
\]
This Lorentz metric is positive homogeneous of degree two. If we look for a scale invariant complete Riemannian metric on $\Omega_x$ it is natural to consider  the Hessian (K\"ahler affine)  metric ($m=n+1$)
\begin{equation} \label{nxo}
\hat{g}=m\Big[\frac{1}{(2 \mathscr{L})^2}\,(\dd \mathscr{L})^2\oplus h\Big]=m\big[(\dd \log F)^2\oplus h\big]
=\dd^2  \log V
\end{equation}
where
\begin{equation} \label{dup}
V=\Big(\frac{-2\mathscr{L}}{m}\Big)^{-\frac{m}{2}}=\Big(\frac{F}{\sqrt{m}}\Big)^{-m}.
\end{equation}
Here the denominator has been chosen so as to get Eq.\ (\ref{nxp}).
The function $\log V$ is the K\"ahler potential \cite{cheng82}. It is ($-m$)-logarithmically homogeneous, namely
\[
\log V(sy)=\log V(y)-m\log s.
\]
Conversely, let $\log V$ be $(-m)$-logarithmically homogeneous with complete positive definite Hessian metric  on $\Omega_x$ for every $x$, then it is possible to define a Lorentz-Finsler structure on $\Omega$ inverting (\ref{dup}).

 Writing $g$ in place of $h$ in (\ref{nxo})
\begin{equation} \label{dis}
\hat{g}=m\Big[\frac{2}{(2 \mathscr{L})^2}\,(\dd \mathscr{L})^2\oplus \frac{1}{-2\mathscr{L}}\, g\Big] ,
\end{equation}
  and using the rank one update of the determinant  we get
\begin{equation} \label{nxp}
\det \hat{g}_y=-(\det g_y) V^2,
\end{equation}
thus
\begin{equation}
I_\mu=\frac{1}{2} \frac{\p}{\p y^\mu} \log \left( V^{-2} \det \frac{\p^2 \log V}{\p y^\alpha \p y^\beta} \right).
\end{equation}
By   positive homogeneity of degree $-1$ of $I_\alpha$, this identity is equivalent to
\begin{equation} \label{jis}
K_{\alpha \beta}=\hat K_{\alpha \beta}+2\hat g_{\mu \nu}.
\end{equation}
Here we have introduced the  K\"ahler Ricci tensor of a K\"ahler affine metric  (it is not the usual Ricci tensor) for both the Lorentzian and Riemannian metrics
\begin{align}
K_{\alpha \beta}&:=-\frac{\p^2}{\p y^\alpha \p y^\beta} \log \vert \det {g}_y\vert =-2\frac{\p}{\p y^\alpha} I_\beta,\\
\hat K_{\alpha \beta}&:=-\frac{\p^2}{\p y^\alpha \p y^\beta} \log \det \hat{g}_y. \label{ddp}
\end{align}
This definition is inspired by analogous definitions in K\"ahler geometry \cite{cheng82}. The connection with complex K\"ahler geometry can be made more precise introducing a tube domain, but this approach will not be pursued here. The Hessian metric $\hat g$ is K\"ahler-Einstein if
\begin{equation} \label{sis}
\hat K_{\alpha \beta}=\hat \kappa(x,y) \hat g_{\alpha \beta}.
\end{equation}
Observe that both $\hat K_{\alpha \beta}$ and $\hat g_{\alpha \beta}$ are Hessian metrics, thus their  vertical derivative is a symmetric tensor. A simple observation by Knebelman \cite{knebelman29b}, originally conceived for Finsler metrics but perfectly valid for Hessian metrics, shows that $\hat \kappa$ is actually independent of $y$, thus the previous equation is equivalent to
\[
-\log \det \hat{g}_y=\frac{\hat \kappa}{2} \log V^2+a+b_\alpha y^\alpha,
\]
for some $a,b_\alpha$ independent of $y$. However, (\ref{nxp}) shows that $\log \det \hat{g}_y$ and $\log V^2$ are $(-2m)$-logarithmically homogeneous, thus $b_\alpha=0$ and $\hat \kappa=-2$, namely $\det \hat{g}_y= e^{-a(x)} V^2.$ The comparison of this equation with (\ref{nxp}) shows that $\det g_y$ does not  depend on $y$. Conversely, if $\det g_y$ does not depend on $y$ then  (\ref{sis}) holds true, just use Eq.\ (\ref{jis}).
We conclude

\begin{theorem} \label{poh}
The complete, Riemannian, Hessian metric $\hat g$ on $\Omega_x$ is K\"ahler-Einstein if and only if the mean Cartan torsion vanishes: $I_\alpha=0$. In this case $\hat k=-2$ and
\[
\det  \frac{\p^2 \log V}{\p y^\alpha \p y^\beta}=\alpha V^2, \qquad \alpha =-\det g_{\alpha \beta}.
\]
\end{theorem}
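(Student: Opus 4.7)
The plan is to exploit the key identity (\ref{jis}), namely $K_{\alpha\beta}=\hat K_{\alpha\beta}+2\hat g_{\alpha\beta}$, together with the fact that $K_{\alpha\beta}=-2\,\p I_\beta/\p y^\alpha$. This reduces the theorem to showing that, under the normalization $\hat\kappa=-2$, the K\"ahler-Einstein condition $\hat K_{\alpha\beta}=\hat\kappa\,\hat g_{\alpha\beta}$ is equivalent to the vanishing of $K_{\alpha\beta}$, and that the latter is in turn equivalent to $I_\alpha=0$.

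For the easy direction, assume $I_\alpha=0$. Then $K_{\alpha\beta}\equiv 0$, so (\ref{jis}) immediately yields $\hat K_{\alpha\beta}=-2\hat g_{\alpha\beta}$, which is the K\"ahler-Einstein condition with $\hat\kappa=-2$. The determinant formula then follows: from $\hat K_{\alpha\beta}=-\p^2\log\det\hat g_y/\p y^\alpha\p y^\beta$ and (\ref{nxp}) one gets $\det\hat g_y=e^{-a(x)}V^2$ for some function $a$, and the claimed identity $\det(\p^2\log V/\p y^\alpha\p y^\beta)=\alpha V^2$ with $\alpha=-\det g_{\alpha\beta}$ is a direct rewriting using (\ref{nxp}).

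For the nontrivial direction, suppose $\hat K_{\alpha\beta}=\hat\kappa(x,y)\hat g_{\alpha\beta}$. The first step is to pin down $\hat\kappa$. Both tensors on the two sides are Hessians in $y$, so their vertical derivatives are totally symmetric in the three lower indices; this is Knebelman's trick, and it forces $\p\hat\kappa/\p y^\alpha=0$, so $\hat\kappa$ depends on $x$ only. Integrating the K\"ahler-Einstein equation twice in $y$ yields
\[
-\log\det\hat g_y=\tfrac{\hat\kappa}{2}\log V^2+a(x)+b_\alpha(x)\,y^\alpha.
\]
Now invoke homogeneity: $\log V$ is $(-m)$-logarithmically homogeneous by construction, and (\ref{nxp}) combined with the $(-2)$-homogeneity of $g_{\mu\nu}(x,y)=g_{\mu\nu}(x,y)$ (it is $0$-homogeneous, so its determinant is too, hence $-\log\det\hat g_y$ is $2m$-logarithmically linear in $\log$-scale) shows that $-\log\det\hat g_y$ is $(-2m)$-logarithmically homogeneous. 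Matching homogeneity degrees on both sides forces $b_\alpha=0$ and $\hat\kappa=-2$. Consequently $K_{\alpha\beta}=0$ by (\ref{jis}), hence $I_\beta$ is independent of $y$; since $I_\beta$ is positive homogeneous of degree $-1$, this forces $I_\beta=0$.

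The main obstacle I anticipate is the bookkeeping in the homogeneity argument that rules out the linear term $b_\alpha y^\alpha$ and fixes the value $\hat\kappa=-2$; everything else is a direct consequence of (\ref{jis}) and (\ref{nxp}). Knebelman's observation needs a line of justification in the present Hessian setting, but since $\hat K$ and $\hat g$ are both second $y$-derivatives of scalar potentials, the symmetry of their third derivatives makes the argument immediate.
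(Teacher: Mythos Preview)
Your proposal is correct and follows essentially the same route as the paper: the argument in the paragraphs preceding the theorem statement already uses (\ref{jis}), Knebelman's observation to make $\hat\kappa$ independent of $y$, the integrated identity $-\log\det\hat g_y=\tfrac{\hat\kappa}{2}\log V^2+a+b_\alpha y^\alpha$, and the $(-2m)$-logarithmic homogeneity of both sides to force $b_\alpha=0$ and $\hat\kappa=-2$. The only cosmetic difference is the final sentence: the paper concludes via (\ref{nxp}) that $\det g_y$ is $y$-independent (hence $I_\alpha=0$ by (\ref{aqs})), whereas you phrase the same fact as $K_{\alpha\beta}=0\Rightarrow I_\beta$ constant in $y\Rightarrow I_\beta=0$ by degree $-1$ homogeneity; these are equivalent.
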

 If this equation is satisfied, $\hat g$ is called the {\em Monge-Amp\`ere} or the {\em Cheng-Yau metric} of the cone $\Omega_x$.

We have a similar result for the Einstein condition, $K_{\alpha \beta}=\kappa(x,y) g_{\alpha \beta}$, on the Lorentzian metric (compare \cite[Sect.\ 5]{ishikawa81}).

\begin{theorem} \label{poj}
The Lorentzian Hessian metric $ g$ on $\Omega_x$ is K\"ahler-Einstein if and only if the mean Cartan torsion vanishes: $I_\alpha=0$. In this case $\kappa=0$.
\end{theorem}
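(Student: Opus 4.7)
The plan is to mirror the strategy used for Theorem \ref{poh}, exploiting that both sides of the K\"ahler-Einstein equation are vertical Hessians of scalar functions and then invoking positive homogeneity to pin down $\kappa$.

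First I would dispatch the easy direction. From the very definition $K_{\alpha\beta} = -2\,\partial I_\beta/\partial y^\alpha$, the vanishing of the mean Cartan torsion immediately gives $K_{\alpha\beta}=0$, so $g$ is K\"ahler-Einstein with $\kappa=0$.

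For the converse, assume $K_{\alpha\beta} = \kappa(x,y)\, g_{\alpha\beta}$. I would first argue, following the Knebelman observation already invoked after (\ref{sis}), that $\kappa$ is independent of $y$. Both $K_{\alpha\beta}$ and $g_{\alpha\beta}$ are pure $y$-Hessians (of $-\log|\det g_y|$ and of $\mathscr{L}$ respectively), so the tensors $\partial_\gamma K_{\alpha\beta}$ and $\partial_\gamma g_{\alpha\beta}$ are totally symmetric in $(\alpha,\beta,\gamma)$. Differentiating the K\"ahler-Einstein relation gives $(\partial_\gamma \kappa)\,g_{\alpha\beta} + \kappa\,\partial_\gamma g_{\alpha\beta} = \partial_\gamma K_{\alpha\beta}$, and the total symmetry of the right-hand side together with the symmetry of $\partial_\gamma g_{\alpha\beta}$ forces $(\partial_\gamma \kappa)\,g_{\alpha\beta} = (\partial_\alpha \kappa)\,g_{\gamma\beta}$. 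Non-degeneracy of $g_y$ (together with $n+1\ge 2$) then yields $\partial_\gamma \kappa = 0$, so $\kappa = \kappa(x)$.

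Because both sides of the K\"ahler-Einstein equation are then vertical Hessians of scalars differing by at most an affine function of $y$, integration gives
\[
-\log|\det g_y| \;=\; \kappa(x)\,\mathscr{L} \,+\, a(x) \,+\, b_\alpha(x)\,y^\alpha .
\]
Positive homogeneity now closes the argument: since $\mathscr{L}$ is homogeneous of degree two, the Hessian $g_{\alpha\beta}$ and hence $\det g_y$ are homogeneous of degree zero in $y$, so the left-hand side is degree zero. The right-hand side splits into pieces of degree two, one, and zero, and matching them forces $\kappa = 0$ and $b_\alpha = 0$. Thus $\det g_y$ depends only on $x$, which by (\ref{aqs}) is precisely $I_\alpha = 0$, and necessarily $\kappa = 0$.

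The main obstacle I anticipate is making the Knebelman step airtight—one has to be careful that the non-degeneracy of $g_y$ is genuinely used in passing from the symmetrized identity to $\partial_\gamma \kappa = 0$, and to verify that the fiber dimension $n+1 \geq 2$ is sufficient for the trace/symmetry manipulation. After that point, the homogeneity bookkeeping is entirely routine and parallels the corresponding step in the proof of Theorem \ref{poh}, with the difference that here $\log|\det g_y|$ is $0$-logarithmically homogeneous rather than $(-2m)$-logarithmically homogeneous, which is exactly what forces $\kappa = 0$ instead of $\kappa = -2$.
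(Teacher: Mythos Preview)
Your proof is correct. Both you and the paper invoke the Knebelman observation to reduce $\kappa$ to a function of $x$ alone, and both then exploit positive homogeneity to conclude; the difference lies in the execution of the homogeneity step. You integrate the K\"ahler-Einstein relation to obtain $-\log|\det g_y| = \kappa(x)\mathscr{L} + a(x) + b_\alpha(x)y^\alpha$ and then match homogeneity degrees of the three pieces, exactly paralleling the argument for Theorem~\ref{poh}. The paper instead contracts $K_{\alpha\beta}=\kappa\, g_{\alpha\beta}$ directly with $y^\alpha$: since $K_{\alpha\beta}=-2\,\partial_\alpha I_\beta$ and $I_\beta$ is positively homogeneous of degree $-1$, Euler's identity gives $y^\alpha K_{\alpha\beta}=2I_\beta$, while $y^\alpha g_{\alpha\beta}=y_\beta$, so $2I_\beta=\kappa\, y_\beta$; applying the Euler operator $y^\gamma\partial_\gamma$ once more (degree $-1$ on the left, degree $+1$ on the right) yields $-2I_\beta=\kappa\, y_\beta$, whence $I_\beta=0$ and $\kappa=0$. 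The paper's route is shorter and bypasses the integration entirely, but your approach has the virtue of being a transparent reuse of the template already established for Theorem~\ref{poh}.
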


\begin{proof}
Once again Knebelman observation implies that $\kappa$ does not depend on $y$. Thus multiplying the Einstein condition by $y^\alpha$ and using the positive homogeneity of degree $-1$ of $I_\alpha$, $2I_\beta=\kappa(x) y_\beta$. Applying $y^\gamma\frac{\p}{\p y^\gamma}$ to both sides gives $-2I_\beta=\kappa y_\beta$ thus $I_\alpha=0$. $\square$
\end{proof}

\section{Preliminaries on affine spheres and indicatrices}

Let us consider a pair $(E,\omega)$ where $E$ is an affine space modeled over a $n+1$-dimensional vector space $V$ and $\omega$ is a non-trivial alternating multilinear $n+1$-form over $V$, sometimes called {\em determinant} (not to be confused with the determinant of an endomorphism). In short we are considering an affine space with a translational invariant notion of oriented volume.

Next let $f\colon N\to E$ be a $C^3$ immersion where $N$ is a $n$-dimensional manifold. The manifold $N$ is termed {\em hypersurface} and $f$ is called {\em hypersurface immersion}.  Let $\xi\colon N\to TE$, $p\mapsto \xi_p$, be a vector field over $f(N)$ and transverse to it. We have for $p\in N$,
\[
T_{f(p)}E=f_*(T_p N)\oplus \langle \xi_{p} \rangle.
 \]
 Furthermore, on $E$ we have a natural derivative $D$ due to its affine structure. Let $X,Y$ be vector fields on $N$ (so $f_*(X)$ and $f_*(Y)$ are tangent to $f(N)$). The next formulas are obtained splitting the left-hand side by means of the direction determined by $\xi$
\begin{align}
D_{f_*(X)} f_*(Y)&=f_*(\nabla_X Y)+h(X,Y) \xi, \qquad (Gauss) \label{gau}\\
D_{f_*(X)}\xi&=-f_*(S(X))+\tau(X) \xi.\qquad (Weingarten) \label{wei}
\end{align}
They define a torsion-less connection  $\nabla$, a symmetric bilinear form $h$ (the affine metric), an endomorphism $S$ of the tangent bundle $TN$ (the shape operator) and a one-form $\tau$ over $N$. These objects satisfy some differential equalities (Gauss, Codazzi) which the reader can find in \cite[Theor.\ 2.1]{nomizu94}.

Under a change of transverse field
\begin{equation} \label{onr}
\bar \xi=\phi \xi+f_*(Z)
\end{equation}
these objects change as follows\cite[Prop.\ 2.5]{nomizu94}
\begin{align}
\bar{h}&=\frac{1}{\phi} \, h, \label{dop}\\
\bar \nabla_XY&=\nabla_X Y-\frac{1}{\phi}\, h(X,Y) Z,\\
\bar{\tau}&=\tau+\frac{1}{\phi}\, h(Z, \cdot)+d \log \vert \phi\vert, \label{two}\\
\bar S&=\phi S-\nabla \! . \,Z+\bar \tau(\cdot )Z .
\end{align}

Observe that $h$ is definite if $f(N)$ is the boundary of a convex set. The change of transverse field redefines $h$ through multiplication by a conformal factor, thus the non-degeneracy of $h$ including the absolute value of its signature is really a property of $N$.
In what follows we shall assume that $N$ is non-degenerate. With some abuse of notation we shall often identify $N$ with $f(N)$ and $p$ with $f(p)$  in the next formulas. This is not source of confusion when $f$ is an embedding.

The affine metric induces a $n$-form $\omega_h$ on $N$. Let $\{e_i, i=1,\cdots,n\}$ be a basis of $T_pN$ such that $(\xi,f_*(e_1),\cdots,f_*(e_n))$ is $\omega$-positively oriented. Defined $h_{ij}=h(e_i,e_j)$ let
\[
\omega_h:=\sqrt{\vert \det h_{ij}\vert}\, \theta^1\wedge\cdots \wedge \theta^n ,
\]
where $\{\theta^i, i=1,\cdots,n\}$ is the dual basis of $\{e_i\}$.

 Blaschke has shown that it is possible to select a special transverse field on every non-degenerate hypersurface. The {\em Blaschke} or {\em affine} normal is determined up to a sign by the conditions
\begin{itemize}
\item[(i)] $\tau=0$, \qquad (equiaffine condition)
\item[(ii)] $\omega_h=f^*(i_\xi \omega)$.
\end{itemize}
If $h$ is definite the sign of $\xi$ is fixed so as to make $h$ positive definite. If $h$ is Lorentzian up to a sign, it is fixed in such a way that the signature is $(-,+,\cdots, +)$. Given the Blaschke normal the formulas of Gauss and Weingarten determine a Blaschke  metric, shape operator and torsion-less connection. The scalar $H=\frac{1}{n}\, \textrm{tr} S$ is called {\em affine mean curvature}. It can be shown that the equiaffine condition is equivalent to $\nabla [f^*(i_\xi \omega)]=0$ (see the next Prop.\ \ref{igf} or \cite[Prop.\ 1.4]{nomizu94}).

So far we have given a traditional introduction to affine differential geometry. Actually, it is interesting to notice that the affine normal can be defined already for the weaker structure given by $(E,\vert\omega\vert)$ where $\vert\omega\vert$ is a volume form rather than a $n$+1-form. It is sufficient to replace (ii) with
\begin{itemize}
\item[(ii')] $\vert\omega_h\vert=\vert f^*(i_\xi \omega)\vert$, \qquad (the affine volume equals the induced volume)
\end{itemize}
where $\omega$ is any local representative of $\vert\omega\vert$. In fact (ii) is not able, in any case, to fix the sign of $\xi$.

The Pick cubic form is a symmetric tensor on $N$ defined by
\begin{equation} \label{mic}
c(X,Y,Z)=\frac{1}{2}[(\nabla_X h)(Y,Z)+\tau(X) h(Y,Z)],
\end{equation}
where $X,Y,Z\in T_pN$ and where the symmetry follows from the Codazzi equations.
Actually, the usual definition  from affine differential geometry does not include the 1/2 factor. We included it for consistency with a related Finslerian definition (cf.\ Theor.\ \ref{rel}). Let $c^\sharp$ be the tensor on $N$ defined by $h(X,c^\sharp(Y,Z))=c(X,Y,Z)$, from Eq.\ (\ref{mic}) it follows that  the Levi-Civita connection of $h$ is given by\footnote{In the published version the last two terms are missing, a fact which does not affect the work.}
\begin{equation}
\nabla^h_X Y=\nabla_X Y+c^\sharp(X,Y)-\frac{1}{2} \tau(X) Y-\frac{1}{2} \tau(Y) X+\frac{1}{2} h(X,Y) \tau^\sharp.
\end{equation}
 We shall need
\begin{proposition} \label{igf}
On $N$ we have
\begin{align}
\nabla_X \omega_h&=\{\textrm{tr}  \,[Y\mapsto c^\sharp(X,Y)]-n \tau(X)/2 \}\, \omega_h,\\
\nabla_X f^*(i_\xi \omega)&= \tau(X) f^*(i_\xi \omega). \label{jdk}
\end{align}
\end{proposition}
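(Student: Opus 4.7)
The plan is to treat the two identities separately and compute each covariant derivative using the Leibniz rule for $n$-forms $(\nabla_X \eta)(X_1,\ldots,X_n) = X[\eta(X_1,\ldots,X_n)] - \sum_i \eta(X_1,\ldots,\nabla_X X_i,\ldots,X_n)$.

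For the second identity I would start from $f^*(i_\xi\omega)(X_1,\ldots,X_n) = \omega(\xi,f_*X_1,\ldots,f_*X_n)$ and differentiate along $f_*X$ using the flat connection $D$ on $E$, which annihilates $\omega$. The Weingarten formula (\ref{wei}) gives $D_{f_*X}\xi = -f_*(S(X)) + \tau(X)\xi$; the term involving $f_*(S(X))$ contributes $\omega$ evaluated on $n+1$ vectors in the $n$-plane $f_*(T_pN)$ and hence vanishes, leaving only $\tau(X)\omega(\xi,f_*X_1,\ldots,f_*X_n)$. The Gauss formula (\ref{gau}) inserted in the $i$-th slot produces a piece $h(X,X_i)\xi$ that kills $\omega$ by repetition of $\xi$, plus the tangential term $f_*(\nabla_X X_i)$. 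Collecting these and subtracting the Leibniz correction yields exactly $\tau(X) f^*(i_\xi\omega)$.

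For the first identity I would fix a local frame $\{e_i\}$ with dual $\{\theta^i\}$, set $\rho = \sqrt{|\det h_{ij}|}$, so $\omega_h = \rho\,\theta^1\wedge\cdots\wedge\theta^n$. Evaluating on $(e_1,\ldots,e_n)$ and using the Leibniz rule gives
\begin{equation*}
(\nabla_X\omega_h)(e_1,\ldots,e_n) = \bigl\{X(\log\rho) - X^k\Gamma^i_{ki}\bigr\}\rho.
\end{equation*}
Jacobi's formula writes $X(\log\rho) = \tfrac12 X^k h^{ij}\p_k h_{ij}$, and expanding $\p_k h_{ij}$ via $(\nabla_{e_k} h)_{ij} = \p_k h_{ij} - \Gamma^l_{ki}h_{lj} - \Gamma^l_{kj}h_{il}$ cancels the Christoffel contribution to leave $\tfrac12 h^{ij}(\nabla_X h)_{ij}$.

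Finally I would substitute the definition (\ref{mic}) of $c$ in the form $(\nabla_X h)(Y,Z) = 2c(X,Y,Z) - \tau(X) h(Y,Z)$. Contracting with $h^{ij}$ produces $h^{ij}c(X,e_i,e_j) - \tfrac{n}{2}\tau(X)$; using total symmetry of $c$ and the definition $h(X,c^\sharp(Y,Z)) = c(X,Y,Z)$, the first trace is precisely $\mathrm{tr}[Y\mapsto c^\sharp(X,Y)]$, giving the stated formula. The only delicate point is keeping track of indices when converting $h^{ij}c_{kij}$ into the trace of $c^\sharp(X,\cdot)$ via the symmetry of $c$; beyond that the computation is routine.
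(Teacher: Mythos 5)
Your proof is correct. For the second identity your argument is exactly the paper's: pull the derivative inside via $D\omega=0$, use Weingarten for the $\xi$ slot (the $f_*(S(X))$ term dies because $\omega$ is evaluated on $n+1$ vectors in the $n$-plane $f_*(T_pN)$) and Gauss for the remaining slots. For the first identity you take a genuinely different route. The paper writes $\nabla_X=\nabla^h_X-E_X$ with $E_X(Y)=c^\sharp(X,Y)-\tfrac12\tau(X)Y-\tfrac12\tau(Y)X+\tfrac12 h(X,Y)\tau^\sharp$, invokes $\nabla^h\omega_h=0$ (the Levi-Civita connection preserves its own volume form), and reads off the coefficient as $\mathrm{tr}\,E_X$, where the last two terms of $E_X$ cancel in the trace. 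You instead compute directly in coordinates: Jacobi's formula converts $X(\log\sqrt{|\det h|})$ minus the Christoffel contribution into $\tfrac12 h^{ij}(\nabla_X h)_{ij}$, and the definition (\ref{mic}) in the form $(\nabla_X h)(Y,Z)=2c(X,Y,Z)-\tau(X)h(Y,Z)$ then yields $\mathrm{tr}[Y\mapsto c^\sharp(X,Y)]-\tfrac{n}{2}\tau(X)$ after contracting and using the total symmetry of $c$. Both computations are elementary and of comparable length; yours has the minor advantage of not requiring the explicit (and, per the paper's own footnote, easily garbled) formula relating $\nabla^h$ to $\nabla$, while the paper's makes transparent why the answer is a trace of a difference tensor. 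I verified the index bookkeeping you flagged as delicate: $h^{ij}c(X,e_i,e_j)=\theta^i(c^\sharp(X,e_i))$ does follow from $h(X,c^\sharp(Y,Z))=c(X,Y,Z)$ together with the symmetry of $c$, so no gap remains.
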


\begin{proof}
Let $E_X$ be  $ Y \mapsto c^\sharp(X,Y)-\frac{1}{2} \tau(X) Y-\frac{1}{2} \tau(Y) X+\frac{1}{2} h(X,Y) \tau^\sharp$, we have
\begin{align*}
(\nabla_X \omega_h)&(Y_1,\cdots,Y_n)= \p_X \omega_h(Y_1,\cdots, Y_n)-\sum_i\omega_h(\cdots, \nabla_X Y_i,\cdots)\\
&=\p_X \{\omega_h(Y_1,\cdots, Y_n)\}-\sum_i\omega_h(\cdots, \nabla^h_X Y_i,\cdots)+\sum_i\omega_h(\cdots, E_X(Y_i),\cdots)\\
&=(\nabla^h_X \omega_h)(Y_1,\cdots,Y_n)+(\textrm{tr} E_X)\,  \omega_h(Y_1,\cdots, Y_n) .
\end{align*}
Concerning the second equation
\begin{align*}
\nabla_X f^*(i_\xi \omega)&=f^*(D_{f_*(X)} i_\xi \omega)=f^*( i_\xi D_{f_*(X)}\omega+i_{D_{f_*(X)}\xi} \omega).
\end{align*}
Since $\omega$ is translational invariant $D_{f_*(X)}\omega=0$, thus Eq.\ (\ref{jdk}) follows from (\ref{wei}). $\square$
\end{proof}

Observe that the equiaffine condition is preserved redefining $\xi \to \beta \xi$ where $\beta \ne 0$ is a constant while $\omega_h$ and $f^*(i_\xi \omega)$ can be made coincident with a suitable choice of $\beta$ provided they differ by a multiplicative constant. Furthermore, in the equiaffine case they differ by a multiplicative constant iff $\nabla \omega_h=0$ iff \begin{itemize}
\item[(iii)] $\textrm{tr} \,[Y\mapsto c^\sharp(\cdot,Y)]=0$. \qquad (apolarity condition)
\end{itemize}
Thus the transverse field is  Blaschke's up to a constant provided (i) and (iii) hold.

If the lines on $E$ generated by the Blaschke normals to $N$ meet at a point $o$, $N$ is said to be a proper {\em affine sphere} with center $o$, while if they are parallel it is an improper affine sphere \cite[Def.\ 3.3]{nomizu94}. If $S=H I$ then $H$ is constant over $N$ and $N$ is an affine sphere, proper if $H\ne 0$ and improper if $H=0$. The converse also holds: if $N$ is an affine sphere $S\propto I$. An affine sphere is called elliptic, parabolic or hyperbolic depending on the sign of $H$, respectively positive, zero or negative. For a proper affine sphere if $y\in N$ then $\xi(y)=-H(y-o)$.

Now suppose to have been given a pair $(M,\vert\omega\vert)$ where $M$ is a $n$+1-dimensional manifold and $\vert\omega\vert$ is a volume form. Then $(T_xM,\vert\omega\vert\vert_x)$ is a pair given by an affine space (actually a vector space) and a translational invariant volume form. Thus we can introduce, up to a sign, the affine normal to any non-degenerate immersions $N_x$ in $T_xM$. However, $T_xM$ is not an affine space but a vector space thus there is a point which plays a special role: the origin. The initial structure  $(M,\vert\omega\vert)$ naturally suggests to consider distributions $x\mapsto N_x$ of proper affine spheres with center the origin of the tangent spaces $T_xM$.

\begin{remark}
Let $\nabla$ and $h$ be the connection and affine metric induced by the Blaschke transverse field (one speaks of Blaschke structure), and let $Ric^\nabla$ be the Ricci tensor of the connection $\nabla$ on $N$. A characterization of the affine sphere condition is given by
\begin{proposition} \label{ein}
$N$ is an affine sphere if and only if the Blaschke structure satisfies $Ric^\nabla\propto h$, in which case $Ric^\nabla=H(n-1) h$, where $H$ is the affine mean curvature of the affine sphere.
\end{proposition}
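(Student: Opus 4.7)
The plan is to compute the Ricci tensor of the Blaschke connection explicitly via the affine Gauss equation, and then read off both directions from the resulting algebraic identity.

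First I would recall the Gauss equation for the equiaffine Blaschke structure (where $\tau=0$), namely
\[
R^\nabla(X,Y)Z=h(Y,Z)\,S(X)-h(X,Z)\,S(Y),
\]
which is one of the standard integrability conditions derived from differentiating \eqref{gau} and \eqref{wei} and splitting the result along $f_*(TN)$ and $\xi$. I would also use the known fact (a consequence of the Codazzi equations for the Blaschke structure) that $S$ is $h$-self-adjoint, i.e.\ $h(S(Y),Z)=h(Y,S(Z))$.

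Next I would take the trace over $X$ to obtain
\[
Ric^\nabla(Y,Z)=\operatorname{tr}\bigl(X\mapsto R^\nabla(X,Y)Z\bigr)
= (\operatorname{tr}S)\,h(Y,Z)-h(S(Y),Z)
= nH\,h(Y,Z)-h(S(Y),Z),
\]
using $\operatorname{tr}S=nH$. This single identity drives both implications. If $N$ is an affine sphere, then by definition $S=H\,\mathrm{Id}$, so $h(S(Y),Z)=H\,h(Y,Z)$ and the formula collapses to $Ric^\nabla=(n-1)H\,h$. Conversely, if $Ric^\nabla=\lambda\, h$ for some function $\lambda$ on $N$, then $h(S(Y),Z)=(nH-\lambda)\,h(Y,Z)$; non-degeneracy of $h$ forces $S=(nH-\lambda)\,\mathrm{Id}$, hence $S$ is pointwise a multiple of the identity, which is the definition of an affine sphere. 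Taking the trace of this last equation recovers $\lambda=(n-1)H$ and shows $H$ is automatically constant (a standard consequence of $S\propto\mathrm{Id}$ plus Codazzi).

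The only mildly delicate point is the use of $h$-self-adjointness of $S$ in identifying $h(S(Y),Z)$ with a symmetric tensor (needed so that the converse, applied to a symmetric $\lambda h$, actually yields $S\propto\mathrm{Id}$ rather than only $S-\mu\,\mathrm{Id}$ skew with respect to $h$); this is the standard equiaffine symmetry property and I would simply cite it from \cite{nomizu94}. Everything else is linear algebra once the Gauss equation is in hand, so I expect no real obstacle.
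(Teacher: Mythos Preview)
Your proposal is correct and follows essentially the same route as the paper. The paper simply cites the identity $Ric^\nabla(Y,Z)=\operatorname{tr}S\,h(Y,Z)-h(S(Y),Z)$ from \cite[Prop.~3.4]{nomizu94} and says the conclusion follows ``upon taking the trace,'' whereas you derive that identity from the Gauss equation and spell out both implications; the algebra is identical. One minor remark: the $h$-self-adjointness of $S$ is not actually needed for the converse, since from $h(S(Y),Z)=(nH-\lambda)h(Y,Z)$ for all $Y,Z$ and non-degeneracy of $h$ you get $S=(nH-\lambda)\,\mathrm{Id}$ directly.
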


We remark that the condition involved in this statement is not the usual Einstein condition since in general $\nabla$ does not coincide with $\nabla^h$.

\begin{proof}
For any Blaschke structure \cite[Prop.\ 3.4]{nomizu94}  \[Ric^\nabla(Y,Z)=\textrm{tr} S \,h(Y,Z)-h(S(Y),Z).\] The conclusion is easily reached upon taking the trace. $\square$
\end{proof}
\end{remark}

The next result, which will turn out to be useful in the next section, does not seem to have been previously noticed or stressed in the literature.
Let $m$ be the traceless part of the cubic form $c$ (where the trace is taken with $h$), namely
\begin{align*}
m(W,X,Y)&=c(W,X,Y)\\
& \quad -\frac{1}{n+2}\{\textrm{tr} c(W) h(X,Y)+\textrm{tr} c(Y) h(W,X)+\textrm{tr} c(X) h(Y,W)\},
\end{align*}
we have

\begin{theorem} \label{rai}
The tensor defined by $h(X,m^\sharp (Y,Z))=m(X,Y,Z)$ does not depend on the transverse field used to define $c$ and $h$. It coincides with the (one index raised) Pick cubic form $c^\sharp_B$ for the Blaschke normal.
\end{theorem}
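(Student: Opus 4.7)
The plan is to prove the two assertions in order: first the transverse-field invariance of $m^\sharp$, then its coincidence with $c_B^\sharp$ for the Blaschke normal. The only input needed is the transformation rules (\ref{dop})--(\ref{two}) and the apolarity condition (iii).

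First I would carry out the direct computation of how the Fubini--Pick form transforms under (\ref{onr}). Expanding $(\bar\nabla_X \bar h)(Y,W)$ using $\bar h = h/\phi$ and $\bar\nabla_X Y = \nabla_X Y - \tfrac{1}{\phi} h(X,Y) Z$, and then combining with $\bar\tau(X)\bar h(Y,W)$, a calculation I expect to produce
\[
\bar c(X,Y,W) = \tfrac{1}{\phi}\, c(X,Y,W) + \tfrac{1}{2}\bigl[\zeta(X)\,\bar h(Y,W) + \zeta(Y)\,\bar h(X,W) + \zeta(W)\,\bar h(X,Y)\bigr],
\]
where $\zeta := \bar h(Z,\cdot)$. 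The $d\log|\phi|$ contribution from $\bar\tau$ should cancel the $X(\phi)$ term coming from differentiating $1/\phi$ in $\bar h$.

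Next I would compute the trace $\bar t(W):=\bar h^{ij}\bar c_{ijW}$. Using $\bar h^{ij}=\phi\, h^{ij}$, the three $\zeta$-terms contract to give $\tfrac{1}{2}(2\zeta(W)+n\,\zeta(W))$, yielding
\[
\bar t(W) = t(W) + \tfrac{n+2}{2}\,\zeta(W).
\]
Substituting $\bar c$ and $\bar t$ into the definition of $\bar m$, I expect the $\tfrac{n+2}{2}\zeta$-terms introduced by $\bar t$ to exactly cancel the three $\zeta$-terms that appeared in $\bar c$. What remains is
\[
\bar m(X,Y,W) = \tfrac{1}{\phi}\, m(X,Y,W).
\]
Since also $\bar h = h/\phi$, raising an index by $\bar h$ or by $h$ produces the same $(1,2)$-tensor, i.e.\ $\bar m^\sharp = m^\sharp$. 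This establishes transverse-field invariance.

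Finally, for the Blaschke normal the apolarity condition (iii) gives $t(W)=\operatorname{tr}[Y\mapsto c^\sharp(W,Y)]=0$. The trace correction term in the definition of $m$ therefore vanishes, so $m=c$ and hence $m^\sharp = c_B^\sharp$. The main bookkeeping obstacle is keeping track of the $\phi$-powers under the identifications $h=\phi\bar h$, $h^{ij}=\phi^{-1}\bar h^{ij}$ when converting between $\zeta$ and $h(Z,\cdot)/\phi$; once that is settled the two claimed cancellations (one making $\bar m\propto m$, the other making $m=c$ in the Blaschke case) fall out cleanly.
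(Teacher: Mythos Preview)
Your proposal is correct and follows essentially the same route as the paper: compute the transformation law for $\bar c$ under (\ref{onr}), take its $\bar h$-trace to obtain $\bar t = t + \tfrac{n+2}{2}\zeta$, observe the $\zeta$-terms cancel so that $\bar m = m/\phi$, and conclude $\bar m^\sharp = m^\sharp$; then invoke apolarity for the Blaschke normal to get $m=c_B$. The only cosmetic difference is your use of $\zeta=\bar h(Z,\cdot)$ in place of the paper's $h(Z,\cdot)/\phi$.
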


\begin{proof}
Let us consider a change of transverse field parametrized as in (\ref{one}). Using the mentioned transformation rules and the corrected (Finslerian) definition of cubic form we arrive at
\begin{align*}
\bar c(W,X,Y)&=\frac{1}{\phi}\, c(W,X,Y)\\
&\quad +\frac{1}{2 \phi^2}\{h(Z,W) h(X,Y)+h(Z,Y) h(W,X)+h(Z,X) h(Y,W)\}
\end{align*}
taking the trace
\begin{equation} \label{thr}
\bar{\textrm{tr}}  \bar c (W)=\textrm{tr} c (W)+\frac{n+2}{2 \phi } h(Z,W)
\end{equation}
From here we arrive at $\bar{m}(W,X,Y)=\frac{1}{\phi} m(W,X,Y)$, and the first statement follows from (\ref{dop}). The Pick  cubic form for the Blaschke normal is traceless (apolarity condition), thus $m^\sharp$ coincides with $c_{B}^\sharp$. $\square$
\end{proof}

\subsection{Proper affine spheres} \label{pas}

Any embedding on a vector space which does not pass through the origin and which is  transverse to the position vector at every point is called   {\em centroaffine} \cite{nomizu94}.
It is instructive to prove the next known  result on centroaffine embeddings making particular attention to the role of the volume form. The parametrization of the affine sphere there introduced is due to Gigena \cite{gigena81,loftin10}. We shall see later on that ${\bm v}$ will be interpreted as observed velocity, while $u({\bm v})$ will be the observer Lagrangian of our theory.

\begin{theorem} \label{lpo}
Let $\{e_\alpha\}$ be a basis of $(V,\vert\omega\vert)$, let  $\rho=\vert\omega\vert(e_0,\cdots,e_n)>0$, and let $y^\alpha$ be the induced coordinates on the vector space $V$.
Let ${\bm v}$ be inhomogeneous projective coordinates on $\{y\in V: y^0>0\}$ so that \begin{equation} \label{inh}
y=(y^0,{\bm y})=-\frac{1}{u}\,(1,{\bm v}).
\end{equation}
Let $N$ be a centroaffine hypersurface with respect to the position vector with origin $p\in E$, then identifying $V$ with $E-p$,  let
\begin{equation} \label{ndu}
f\colon {\bm v} \to -\frac{1}{u({\bm v})}\,(1,{\bm v}) ,
\end{equation}
 be its local hypersurface immersion. Let $c\ne 0$, then relative to the transverse vector field $\xi:=-c y$ the affine metric is
\begin{equation} \label{dio}
h=h_{ij}\, d v^i d v^j=\frac{u_{ij}}{c u} \, d v^i d v^j,
\end{equation}
the connection coefficients are
$(\nabla)^k_{ij}=-\frac{1}{u}(u_i \delta^k_j+u_j \delta^k_i)$,
the shape operator is $S=cI$, and $\tau=0$. The transverse field $\xi$ is the Blaschke normal and hence $N$ is a proper affine sphere with  affine mean curvature $H=c$  and center $p$ if and only if
\begin{equation} \label{mon}
\det u_{ij}=\epsilon \rho^2 \Big(\frac{H}{u}\Big)^{n+2}.
\end{equation}
where $\epsilon$ is the sign of the determinant of $h_{ij}$, i.e.\ the parity of the negative signature of $h$ (thus $\epsilon=1$ if $h$ is positive definite and $\epsilon=-1$ if Lorentzian). In particular, in the positive definite case
\begin{equation} \label{kod}
h_{ij}=\Big( \frac{\rho^2}{\det u_{ij}}\Big)^{\frac{1}{n+2}} \frac{1}{u^2}\, u_{ij}.
\end{equation}
\end{theorem}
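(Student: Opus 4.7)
My strategy is to derive everything from the Gauss and Weingarten formulas (\ref{gau})-(\ref{wei}) by explicitly differentiating the parametrization (\ref{ndu}) in the flat connection $D$ of the vector space $V$.

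First I would compute $f_i:=\partial f/\partial v^i$ and $f_{ij}:=\partial^2 f/\partial v^i\partial v^j$ directly from $f=-(1/u)(1,{\bm v})$. Writing the result in terms of $y=f$ itself and the basis vector $(0,e_i)$, the relation $(0,e_i)=-u\,f_i-u_i y$ lets me eliminate the basis vectors so that $f_{ij}$ becomes a linear combination of $y$ and of the $f_k$'s alone. Since $D_{\partial_j}f_i=f_{ij}$ (because $D$ is the flat affine derivative) and $\xi=-cy$, the Gauss equation $f_{ij}=(\nabla)^k_{ij}f_k+h_{ij}\xi$ becomes a pointwise decomposition from which I read off $h_{ij}=u_{ij}/(cu)$ and $(\nabla)^k_{ij}=-(u_i\delta^k_j+u_j\delta^k_i)/u$.

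Next I would turn to the Weingarten equation. Here the computation is immediate: $\xi=-cy=-cf$ with $c$ constant, so $D_{f_*(\partial_j)}\xi=-c\,f_j=-cf_*(\partial_j)$. Comparing with (\ref{wei}) and using the linear independence of $f_1,\dots,f_n,\xi$ yields $S=cI$ and $\tau=0$. In particular the equiaffine condition (i) holds automatically.

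It remains to characterize when $\xi=-cy$ is actually the Blaschke normal, for then $N$ is a proper affine sphere with center $p$ (since the normals all pass through the origin of $V\simeq E-p$) and affine mean curvature $H=\frac{1}{n}\,\mathrm{tr}\,S=c$. Given $\tau=0$, the remaining Blaschke condition (ii) reduces to $\omega_h=f^{*}(i_\xi\omega)$. I would compute each side separately: $\omega_h=\sqrt{|\det h_{ij}|}\,dv^1\wedge\cdots\wedge dv^n$ gives one expression, while
\[
f^{*}(i_\xi\omega)(\partial_1,\dots,\partial_n)=\omega(-cy,f_1,\dots,f_n)
\]
can be evaluated by substituting $f_i=-(u_i/u)y-(1/u)(0,e_i)$ and using the multilinearity/antisymmetry of $\omega$ together with $\omega(e_0,e_1,\dots,e_n)=\rho$; all the $y$-contributions in the $f_i$ slots vanish, and one is left with an expression of the form $(\text{sign})\,c\rho/u^{n+1}$. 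Squaring the equality $|\omega_h|=|f^{*}(i_\xi\omega)|$ removes the global sign ambiguity and produces
\[
\frac{\det u_{ij}}{(cu)^n}=\epsilon\,\frac{c^2\rho^2}{u^{2n+2}},
\]
which is exactly (\ref{mon}) after writing $c=H$ and collecting powers. Formula (\ref{kod}) then follows by substituting the resulting expression for $\det u_{ij}$ back into $h_{ij}=u_{ij}/(cu)$ and solving for $cu$ in the positive definite case.

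The routine part is the differentiation of $f$; the delicate step is the bookkeeping in the volume-form comparison, where one must track the sign produced by the orientation convention for $\omega_h$ and the parity $\epsilon$ of the signature of $h$. Casting the final identity in the form (\ref{mon}) (with absolute values absorbed into $\epsilon$) is precisely what makes the statement insensitive to those sign choices.
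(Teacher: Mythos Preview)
Your proposal is correct and follows essentially the same approach as the paper: explicit differentiation of the parametrization $f=-(1/u)(1,{\bm v})$ to read off the Gauss data, the trivial Weingarten computation from $\xi=-cf$, and then the comparison of $\vert\omega_h\vert$ with $\vert f^*(i_\xi\omega)\vert$ to obtain the Monge--Amp\`ere condition. The only cosmetic difference is that the paper writes out the intermediate expansion of $\partial_i\partial_j f$ before regrouping into the $f_*(\tilde e_k)$ and $\xi$ parts, whereas you invoke the relation $(0,e_i)=-u f_i-u_i y$ to do the same regrouping; these are the same computation.
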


\begin{remark}
If $E=T_xM$  and $p$ is not the origin of $E$ then $\{y^\alpha\}$ are not  canonical coordinates on the tangent bundle at $x$.
\end{remark}

\begin{proof}
Let us observe that (here $\tilde e_j$ is the canonical basis of $\mathbb{R}^n$)
\begin{equation} \label{pig}
f_{*}(\tilde e_j)=D_{f_{*}( \tilde e_j)} y=\p_j \{-\frac{1}{u({\bm v})}\,(1,{\bm v})\}=-\frac{u_j}{u} \, y-\frac{1}{u}\, e_j ,
\end{equation}
where $\p_j$ is a shorthand for $\p/\p {v^j}$. Thus
\begin{align*}
D_{f_{*}(\tilde e_i)} f_{*}( \tilde e_j)&=D_{f_{*}(\tilde e_i)}D_{f_{*}(\tilde e_j)} y=\p_i \p_j\{-\frac{1}{u({\bm v})}\,(1,{\bm v})\}\\
&=\frac{u_{ij}}{u}\, (-y)+\frac{u_j}{u^2}\, e_i+\frac{u_i}{u^2}\, e_j+2\frac{u_iu_j}{u^2}\, y\\
&=-\frac{u_i}{u}\,f_{*}(\tilde e_j)-\frac{u_j}{u}\,f_{*}(\tilde e_i)+ \frac{u_{ij}}{cu}\, (-cy).
\end{align*}
The first two terms are tangent to $N$, thus the last one gives the affine metric. From this same expression it is easy to read the connection coefficients. The statements concerning $S$ and $\tau$ are trivial since $D_{f_*(X)} y=f_*(X)$.

Now using Eq.\ (\ref{pig}) and $\xi=-cy$ we observe that
\begin{align*}
\vert f^*i_\xi \omega\vert (\tilde e_1,\cdots,\tilde e_n)&=\vert\omega\vert(\xi,f_*(\tilde e_1),\cdots, f_*(\tilde e_n))=\vert(-\frac{1}{u})^n\omega(\xi, e_1,\cdots, e_n)\vert\nonumber\\
&=\vert (-\frac{1}{u})^n(-c)\, \omega(y^0 e_0, e_1,\cdots, e_n)\vert\nonumber\\
&=\vert (-\frac{1}{u})^{n+1}(-c)\, \omega(e_0, e_1,\cdots, e_n)\vert =\rho\vert(-\frac{1}{u})^{n+1}(-c)\vert,
\end{align*}
while
\[
\vert \omega_h\vert (\tilde e_1,\cdots,\tilde e_n)= \sqrt{\vert\det h_{ij}\vert}=\sqrt{\epsilon\det h_{ij}},
\]
The vector $\xi$ is the Blaschke normal and hence $N$ is an affine sphere with affine mean curvature $H=c$ if and only if $\vert \omega_h\vert=\vert f^*i_\xi \omega\vert$ which reads $\det u_{ij}=\epsilon \rho^2(\frac{c}{u})^{n+2}$. $\square$
\end{proof}

\begin{remark}
Let us consider an affine sphere on $E$ with center $p$. Observe that
the rescaled affine sphere  $f\to f'=\lambda (f-p)+p$, $\lambda>0$ is determined by the function $u'=u/\lambda$ and from Eq.\ (\ref{mon}) it follows that it is still an affine sphere with  affine mean curvature $H'=\lambda^{-\frac{2n+2}{n+2}} H$.
 Without loss of generality we can study just affine spheres for which $H=-1,0,1$ since the others are obtained through rescaling. In the proper case  the transverse vector becomes either the position vector with origin $p$ or its opposite.
\end{remark}

\begin{remark} \label{naa}
Suppose that $N\subset E$ is an affine sphere with mean curvature $H$ for $(E,\vert \omega\vert)$ and suppose to change the volume form to $\vert \check \omega\vert=\alpha\vert \omega\vert$, $\alpha>0$. With the notation of Theorem (\ref{lpo}), $\check \rho=\alpha \rho$. Equation (\ref{mon}) clarifies that $N$ is still an affine sphere with mean curvature $\check H$ where
\begin{equation} \label{mpc}
\check \rho^2\check H^{n+2}=\rho^2 H^{n+2}.
\end{equation}
Thus the concept of elliptic or hyperbolic  affine sphere makes sense irrespective of the volume form and hence is well defined just on an affine space, while it is necessary to specify a volume form to talk about affine mean curvature $H$ of the affine sphere. In the proper case one can choose $\rho$ so as to get $\vert H\vert=1$.
\end{remark}

The next result is the crucial step which relates the affine sphere distributions with measures over $M$.

\begin{corollary} \label{kwx}
Given a proper affine sphere $N \subset E$ there is a unique translational invariant volume form on $E$ such that $\vert H\vert=1$.

Similarly, given a manifold $M$, and a point dependent distribution of proper affine spheres $x \mapsto N_x\subset T_xM$ (not necessarily centered at  the origin of $T_xM$), there is a unique volume form on $M$ for which the affine spheres satisfy $\vert H\vert=1$.
\end{corollary}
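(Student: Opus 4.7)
The plan is to derive both statements directly from the scaling identity \eqref{mpc} of Remark \ref{naa}, which records how $H$ varies when the translational invariant volume form is rescaled by a positive factor.

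For the first claim, I would fix an arbitrary translational invariant volume form $|\omega|$ on $E$ and let $H \ne 0$ be the affine mean curvature of $N$ with respect to $|\omega|$ (nonvanishing because $N$ is proper). For any rescaling $|\check\omega| = \alpha|\omega|$ with $\alpha > 0$, Remark \ref{naa} gives $\check\rho = \alpha\rho$ and $\alpha^{2}\check H^{\,n+2} = H^{n+2}$. Since the type of $N$ (elliptic versus hyperbolic) is invariant under rescaling, the sign of $\check H$ equals that of $H$, so the requirement $|\check H| = 1$ is equivalent to the single equation $\alpha = |H|^{(n+2)/2}$. This has a unique positive solution, which gives existence and uniqueness of the desired volume form.

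For the second claim, I would apply this fiberwise. Pick any smooth auxiliary volume form $|\omega_0|$ on $M$; its restriction at each $x$ is a translational invariant volume form on $T_xM$, and I let $H(x) \ne 0$ be the corresponding affine mean curvature of $N_x$. The required volume form is then
\[
|\omega_M|(x) := |H(x)|^{(n+2)/2}\,|\omega_0|(x),
\]
which is independent of the auxiliary choice by the uniqueness established in the first part, and satisfies $|H| = 1$ everywhere on the rescaled distribution.

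The only point requiring real verification is that $|\omega_M|$ is smooth, i.e., that $x \mapsto H(x)$ is smooth. This I would read off from the Monge-Amp\`ere characterization \eqref{mon}: in a local trivialization of $TM$, each $N_x$ is parametrized by a smooth family $u(\cdot,x)$, and \eqref{mon} expresses $H(x)^{n+2}$ algebraically in terms of $\rho(x)^2$ and $\det u_{ij}(\cdot,x)$, both smooth in $x$. Since $H(x)$ is nowhere zero, $|H(x)|^{(n+2)/2}$ is smooth, and hence so is $|\omega_M|$. No further obstacle appears, as the algebraic content of the statement has already been reduced to the one-parameter scaling law \eqref{mpc}.
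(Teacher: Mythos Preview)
Your proposal is correct and is exactly the argument the paper has in mind: the corollary is stated immediately after Remark~\ref{naa} with no separate proof, so the intended derivation is precisely the one-parameter scaling via \eqref{mpc} that you carry out. Your extra paragraph on smoothness is a welcome elaboration of the paper's one-line comment that ``the regularity of the dependence of $N_x$ on $x$ will be related to that of the volume form on $x$.''
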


Of course the regularity of the dependence of $N_x$ on $x$ will be related to that of the volume form on $x$.

The metric (\ref{kod}) was first obtained by Loewner and Nirenberg \cite{loewner74} while searching for projective invariant metrics on convex sets. Let $p$ be the origin of $T_xM$.
Let $\{e_\alpha\}$ be any basis of $T_xM$ and let
$e_{\tilde \alpha}=A^\beta_{\ \tilde \alpha} e_\beta$ be another basis then the coordinates $(\tilde u, v^{\tilde i})$ are related to $(u,v^i)$ as follows (here $A^\beta_{\ \tilde \alpha} A^{\tilde \alpha}_{\ \gamma}=\delta^{\beta}_{\gamma}$, namely we are using a convention common in mathematical relativity in which the distinction between coordinates in made at the level of indices)
\begin{align*}
-\frac{1}{\tilde u} &=A^{\tilde 0}_{\ j}\Big(-\frac{ v^j}{ u}\Big)+A^{\tilde 0}_{\ 0} \Big(-\frac{1}{ u}\Big),\\
-\frac{ v^{\tilde i}}{\tilde u} &=A^{\tilde i}_{\ j}\Big(-\frac{v^j}{ u}\Big)+A^{\tilde i}_{\ 0} \Big(-\frac{1}{ u}\Big),
\end{align*}
which can be rewritten including also the transformation of the density under coordinate changes
\begin{align}
\tilde u &=  u \big[A^{\tilde 0}_{\ j}(x) v^j+A^{\tilde 0}_{\ 0}(x)\big]^{-1}, \label{pro1}\\
 v^{\tilde i} &=\big[A^{\tilde i}_{\ j}(x) v^j+A^{\tilde i}_{\ 0}(x)\big] \, \big[A^{\tilde 0}_{\ j}(x) v^j+A^{\tilde 0}_{\ 0}(x)\big]^{-1}, \label{pro2}\\
\tilde \rho&= \rho \, \big(\det A^{\tilde \alpha}_{\ \beta}(x)\big)^{-1}. \label{pro3}
\end{align}
In this expression we have  made explicit the dependence of the matrix $A$ on the point $x\in M$. If both frames are holonomic then $A^{\tilde \alpha}_{\ \beta}={\p x^{\tilde \alpha}}/{\p x^{ \beta}}$.

Since the metric (\ref{dio}) with $h_{ij}$ given by (\ref{kod}) and the Monge-Amp\`ere equation (\ref{mon}) hold irrespective of the starting basis $\{e_\alpha\}$ chosen, these objects are invariant under the projective changes (\ref{pro1})-(\ref{pro3}).

\subsection{Improper affine spheres}

In this section we introduce convenient coordinates for improper affine spheres \cite{loftin10}. They are chosen so as to simplify the Monge-Amp\`ere equation which describes these hypersurfaces.
 Let us notice that a (connected) hypersurface $N\subset E$ which is transverse to a direction $e_0\in V$ is a graph over a hyperplane transverse to $e_0$. We have

\begin{theorem} \label{lpp}
Let $\{e_\alpha\}$ be a basis of $(V,\vert\omega\vert)$, let  $\rho=\vert\omega\vert(e_0,\cdots,e_n)>0$,  let $y^\alpha$ be the induced coordinates on the vector space $V$ and let us denote  ${\bm v}={\bm y}$.
Let $N$ be a hypersurface  which is a graph $f\colon {\bm v} \to (-u({\bm v}),{\bm v})$ over the hyperplane $y^0=0$.
Let $c\ne 0$, then relative to the transverse vector field $\xi:=-c (1,{\bm 0})$ the affine metric is
\begin{equation} \label{nur}
h=h_{ij}\, d v^i d v^j=\frac{1}{c} \,u_{ij} \, d v^i d v^j,
\end{equation}
the connection coefficients are $(\nabla)^k_{ij}=0$ and $S=0,\tau=0$. The transverse field $\xi$ is the Blaschke normal and hence $N$ is an improper affine sphere if and only if
\begin{equation} \label{moo}
\det u_{ij}=\epsilon \rho^2 c^{n+2},
\end{equation}
where $\epsilon$ is the sign of the determinant of $h_{ij}$ (the parity of the negative signature of $h$).
\end{theorem}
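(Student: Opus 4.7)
The plan is to mimic the proof of Theorem \ref{lpo}, exploiting the fact that the graph parametrization is even simpler than the centroaffine one because the transverse field $\xi=-ce_0$ is a constant vector field on $E$.

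First I would compute the pushforward on the canonical basis $\tilde e_j$ of $\mathbb{R}^n$:
\[
f_{*}(\tilde e_j)=\partial_j\{(-u({\bm v}),{\bm v})\}=-u_j e_0+e_j.
\]
Then I would differentiate once more, using that $D$ on $E$ annihilates constant vector fields, to get
\[
D_{f_{*}(\tilde e_i)}f_{*}(\tilde e_j)=-u_{ij}\,e_0=\frac{u_{ij}}{c}\,\xi .
\]
Since the right-hand side is purely transverse (no $e_j$ component), comparison with the Gauss formula (\ref{gau}) immediately reads off both $h(\tilde e_i,\tilde e_j)=u_{ij}/c$ and the vanishing of the connection coefficients $(\nabla)^k_{ij}=0$. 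For the Weingarten formula (\ref{wei}) I would use that $\xi=-ce_0$ is a constant vector field on $E$, so $D_{f_{*}(X)}\xi=0$, which forces $S=0$ and $\tau=0$ simultaneously. In particular $\tau=0$ is the equiaffine condition and $S=0$ is the improper-sphere condition $H=0$.

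It then only remains to tune the scale of $\xi$ so that Blaschke's normalization $|\omega_h|=|f^{*}i_\xi \omega|$ holds, which by Proposition \ref{igf} together with $\tau=0$ is the unique remaining condition for $\xi$ to be the Blaschke normal. Evaluating on $(\tilde e_1,\ldots,\tilde e_n)$:
\[
|f^{*}i_\xi\omega|(\tilde e_1,\ldots,\tilde e_n)=|\omega(-ce_0,-u_1e_0+e_1,\ldots,-u_ne_0+e_n)|=|c|\,\rho,
\]
where the $e_0$ contributions in the tangent slots drop out by antisymmetry, while
\[
|\omega_h|(\tilde e_1,\ldots,\tilde e_n)=\sqrt{|\det h_{ij}|}=\sqrt{|c|^{-n}|\det u_{ij}|}.
\]
Equating these two gives $|\det u_{ij}|=|c|^{n+2}\rho^2$.

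The only slightly delicate point — and the one I expect to be the main bookkeeping obstacle — is recovering the signed form $\det u_{ij}=\epsilon\rho^2 c^{n+2}$. I would argue: from $h_{ij}=u_{ij}/c$ one has $\det h_{ij}=c^{-n}\det u_{ij}$, so $\epsilon:=\mathrm{sign}(\det h_{ij})=\mathrm{sign}(c^n)\,\mathrm{sign}(\det u_{ij})$, hence $\mathrm{sign}(\det u_{ij})=\epsilon\,\mathrm{sign}(c^n)$, and therefore
\[
\det u_{ij}=\epsilon\,\mathrm{sign}(c^n)\,|c|^{n+2}\rho^2=\epsilon\,c^n\cdot c^2\,\rho^2=\epsilon\rho^2 c^{n+2},
\]
which is (\ref{moo}). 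Conversely, if this equation holds, then $|\omega_h|=|f^{*}i_\xi\omega|$, so $\xi$ is the Blaschke normal and, since $S=0$, $N$ is an improper affine sphere. $\square$
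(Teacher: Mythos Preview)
Your proof is correct and follows essentially the same approach as the paper: compute $f_*(\tilde e_j)$, differentiate again to read off $h$ and $\nabla$ from the Gauss formula, use constancy of $\xi$ for $S=\tau=0$, and then impose the Blaschke normalization $|\omega_h|=|f^*i_\xi\omega|$. Your explicit sign bookkeeping at the end is actually more detailed than the paper, which simply asserts the signed equation after computing the absolute values.
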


\begin{proof}
Let us observe that (here $\tilde e_j$ is the canonical basis of $\mathbb{R}^n$)
\begin{equation} \label{pih}
f_{*}(\tilde e_j)=D_{f_{*}( \tilde e_j)} y=\p_j (-u({\bm v}),{\bm v})=-u_j e_0+ e_j ,
\end{equation}
where $\p_j$ is a shorthand for $\p/\p {v^j}$. Thus
\begin{align*}
D_{f_{*}(\tilde e_i)} f_{*}( \tilde e_j)&=D_{f_{*}(\tilde e_i)}D_{f_{*}(\tilde e_j)} y=\p_i \p_j(-u,{\bm v})=-u_{ij} \,e_0=\frac{1}{c}u_{ij} \,\xi.
\end{align*}
There is no term tangent to $N$ thus  the connection coefficients vanish. The statements concerning $S$ and $\tau$ are trivial since $D_{f_*(X)} e_0=0$.

Now using Eq.\ (\ref{pih}) and $\xi=-c e_0$ we observe that
\begin{align*}
\vert f^*i_\xi \omega\vert (\tilde e_1,\cdots,\tilde e_n)&=\vert \omega\vert(\xi,f_*(\tilde e_1),\cdots, f_*(\tilde e_n))=\vert c \omega(e_0, e_1,\cdots, e_n)\vert=\vert c\rho\vert,
\end{align*}
while
\[
\vert \omega_h\vert (\tilde e_1,\cdots,\tilde e_n)= \sqrt{\vert\det h_{ij}\vert}=\sqrt{\epsilon \det h_{ij}},
\]
The vector $\xi$ is the Blaschke normal and hence $N$ is an improper affine sphere  if and only if $\vert \omega_h\vert=\vert f^*i_\xi \omega\vert$ which reads $\det u_{ij}=\epsilon \rho^2 c^{n+2}$. $\square$
\end{proof}

\subsection{Centroaffine embeddings and Finsler indicatrices}

In this section we obtain some results on the relationship between the Finsler metric at a given point $x\in M$ and the affine metric of the indicatrix. 

Preliminarly, let us observe that the indicatrix 
is a centroaffine hypersurface with respect to the origin of $T_xM$ because it is transverse to the position vector $y$.
Indeed, by positive homogeneity
\[
\dd \mathscr{L}(y)=\frac{\p \mathscr{L}}{\p y^\alpha }\, y^\alpha=2\mathscr{L}=- 1.
\]
%

For the first statement of the next theorem see also \cite{laugwitz11,beem70}, for the second statement see also \cite[Prop.\ 4.1]{mo10}.

\begin{theorem} \label{rel}
The vertical Finsler metric  induces on the
indicatrix $\mathscr{I}^-$ a metric $h$ which coincides with the affine metric  for the transverse field $\xi= y$. Thus, on $\Omega_x$ Eqs. (\ref{dqa}) and (\ref{war}) hold
\begin{equation} \label{bcp}
g=- \dd F^2+F^2 h, \qquad h=-F^{-1} \dd^2 F, \qquad F=\sqrt{2 \vert\mathscr{L}\vert} .
\end{equation}

The Pick cubic form for the transverse  field $\xi$ is the restriction to the indicatrix of the Cartan torsion, that is $c=f^*C$. The Pick cubic form for the Blaschke transverse  field  is the restriction to the indicatrix of the traceless part of the Cartan torsion: $c_B^\sharp=m^\sharp=(f^*M)^\sharp$.

The indicatrix $\mathscr{I}^-_x$ is an affine sphere with  center at the origin of $T_xM$  iff the mean Cartan torsion vanishes on it (and hence on $\Omega_x$). In this case  with respect to the translational invariant volume form $\vert\omega\vert=\sqrt{\vert \det g_{\alpha \beta}\vert}\, d^{n+1} y$ (i.e.\ $\rho=\sqrt{\vert \det g_{\alpha \beta}\vert}$) the affine mean curvature of the indicatrix is such that  $H= - 1$, thus in projective coordinates
\begin{equation} \label{hhw}
\det u_{ij}= \vert \det g_{\alpha \beta}\vert \Big(-\frac{1}{u}\Big)^{n+2}.
\end{equation}
\end{theorem}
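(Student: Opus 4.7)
I would prove the three assertions in order, each building on Theorems \ref{lpo} and \ref{rai}. For the first ($h=g|_{T\mathscr{I}^-}$ when $\xi=y$), degree-2 homogeneity of $\mathscr{L}$ gives $\dd\mathscr{L}=g_y(y,\cdot)$, so $y$ is $g$-orthogonal to $T_p\mathscr{I}^-$ with $g(y,y)=2\mathscr{L}=-1$ there. Pairing the Gauss formula $D_{f_*(X)}f_*(Y)=f_*(\nabla_X Y)+h(X,Y)\,y$ against $y$ via $g$ produces $-h(X,Y)$ on the right (the tangent term drops out). On the left, differentiating the identity $g(y,Y)=\dd\mathscr{L}(Y)\equiv 0$ along $X\in T_p\mathscr{I}^-$ and using the product rule together with $C_{\alpha\beta\gamma}y^\alpha=0$ yields $g(y,D_X Y)=-g(X,Y)$. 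Hence $h=g|_{T\mathscr{I}^-}$, and the identities (\ref{bcp}) are just rephrasings of (\ref{dqa})-(\ref{war}) already established in Section \ref{kzo}.

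For the second assertion (Pick cubic form), Theorem \ref{lpo} with $c=-1$ gives $\tau=0$, hence $c=\tfrac12\nabla h$. Expanding
\[(\nabla_X h)(Y,Z)=X\cdot h(Y,Z)-h(\nabla_X Y,Z)-h(Y,\nabla_X Z)\]
by means of $\nabla_X Y=D_X Y-h(X,Y)\,y$, $h=g|_{T\mathscr{I}^-}$, and $\p_\gamma g_{\alpha\beta}=2C_{\alpha\beta\gamma}$, the cross terms cancel (owing to $Cy=0$) and one is left with $(\nabla_X h)(Y,Z)=2\,C_{\alpha\beta\gamma}X^\alpha Y^\beta Z^\gamma$, i.e.\ $c=f^*C$. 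By Theorem \ref{rai}, $c_B^\sharp$ coincides with the $h$-traceless part $m^\sharp$. The inverse of $h$ on $T\mathscr{I}^-$ is represented by $h^{\alpha\beta}=g^{\alpha\beta}+y^\alpha y^\beta$, and the extra term annihilates $C$ because $Cy=0$; thus $h^{\alpha\beta}C_{\alpha\beta\gamma}=g^{\alpha\beta}C_{\alpha\beta\gamma}=I_\gamma$, so the $h$-traceless part of $f^*C$ is the pullback of the $g$-traceless tensor $M$, yielding $c_B^\sharp=(f^*M)^\sharp$.

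For the third assertion, an affine sphere centered at the origin is characterized by the Blaschke normal being everywhere parallel to $y$. Since $\xi=y$ already satisfies $\tau=0$, the remaining Blaschke requirement $\omega_h\propto f^*(i_y\omega)$ is equivalent, by Proposition \ref{igf}, to $\textrm{tr}\,c=0$; by the trace identity above this reads $I|_{T\mathscr{I}^-}=0$, and combined with $I_\alpha y^\alpha=0$ it forces $I_\alpha=0$, which positive homogeneity of degree $-1$ extends to all of $\Omega_x$. For the value of $H$, Theorem \ref{lpo} with $\xi=y$ gives $S=-I$ and hence $H=-1$; also $\epsilon=+1$ since $h$ is Riemannian on $T\mathscr{I}^-$. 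Finally, $|\omega|=\sqrt{|\det g_{\alpha\beta}|}\,\dd^{n+1}y$ is exactly the volume form that makes $\xi=y$ the Blaschke normal (not merely Blaschke up to a constant), by the standard pseudo-Riemannian identity $\omega_h=i_y\omega_g$ for a unit-normal hypersurface. Substituting $H=-1$, $\epsilon=1$, $\rho=\sqrt{|\det g_{\alpha\beta}|}$ into the Monge-Amp\`ere equation (\ref{mon}) yields (\ref{hhw}).

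The main technical point throughout is the trace identity $h^{\alpha\beta}C_{\alpha\beta\gamma}=I_\gamma$: one must recognize that the correct projected inverse of $h$ on $T\mathscr{I}^-$ is $g^{\alpha\beta}+y^\alpha y^\beta$ and that $Cy=0$ annihilates the projection correction. This single identity ties the affine-geometric notion of apolarity to the Finslerian vanishing of the mean Cartan torsion.
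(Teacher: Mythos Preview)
Your proof is correct and follows essentially the same route as the paper's: contract the Gauss equation with $g_y(y,\cdot)=\dd\mathscr{L}$ for the first claim, compute $(\nabla_X h)(Y,Z)=(D_X g)(Y,Z)$ for the second, and use the apolarity characterization of ``Blaschke up to constant'' together with $\textrm{tr}\,c=I$ for the third. The only genuine difference is your treatment of $H=-1$: the paper introduces adapted coordinates $(\chi^0,\chi^i)=(\mathscr{L},\chi^i)$ on $\Omega_x$, writes the warped metric (\ref{jgk}) explicitly, and computes $i_\xi\omega$ and $\omega_h$ by hand to extract $H=-1$; you instead invoke the standard pseudo-Riemannian identity that the induced volume on a unit-normal hypersurface equals the contraction of the ambient volume with the normal, applied to the Lorentzian metric $g_{\mu\nu}(x,y)\,\dd y^\mu\dd y^\nu$ on $\Omega_x$. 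Your shortcut is legitimate once $I_\alpha=0$ guarantees that $\sqrt{|\det g|}\,\dd^{n+1}y$ is indeed translational invariant (so that it qualifies as the $|\omega|$ of the Blaschke construction), and it is cleaner than the paper's explicit computation, though the latter makes the sign conventions fully transparent.
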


Observe that a zero mean Cartan torsion not only makes the indicatrix an affine sphere but also, by Eq.\ (\ref{aqs}) makes $\vert\omega\vert$ translational invariant and hence makes it possible to ask for the affine mean curvature of the affine sphere with respect to this volume form. Also notice that $c$ and $C$ provide the same information since $C(y,\cdot,\cdot)=0$.

\begin{remark} Suppose that the Finsler Lagrangian is defined over the whole slit tangent bundle.  Then a completely analogous theorem could be given for the spacetime indicatrix $\mathscr{I}^+$ where, however, $\xi=-y$, $g= \dd F^2+F^2 h$, $h= F^{-1} \dd^2 F$, the affine metric would be Lorentzian and the  affine mean curvature would be $H=1$.
\end{remark}

\begin{proof}
Let us contract $\dd \mathscr{L}$ with the Gauss equation $D_XY=\nabla_XY+h(X,Y) \xi$ where $X,Y \in T\mathscr{I}^-_x$ (here with some abuse of notation we identify $X$ with  $f_*(X)$ where $f$ is the embedding of the indicatrix) and use positive homogeneity
\[
\dd \mathscr{L}(D_X Y)= \!\frac{\p \mathscr{L}}{\p y^\alpha} \left(\!X^\beta \frac{\p Y^\alpha}{\p y^\beta}\!\right)\!= \!X^\beta \frac{\p }{\p y^\beta} \left(\frac{\p \mathscr{L}}{\p y^\alpha} Y^\alpha\!\!\right)-X^\beta Y^\alpha\!  \frac{\p^2 \mathscr{L}}{\p y^\alpha \p y^\beta}\!=-g_y(X,Y),
\]
thus
\[
-g_y(X,Y)=\dd \mathscr{L}(D_X Y)= h(X,Y)\dd \mathscr{L}( y)= 2 \mathscr{L}(x,y) h(X,Y)=-h(X,Y).
\]
 This calculation proves the first statement.
By positive homogeneity the indicatrix is $g$-orthogonal to $\xi$ indeed if $X\in T\mathscr{I}_x^-$, $g(X,\xi)= \dd \mathscr{L}(X)=0$. The equations  (\ref{dqa}) and (\ref{war}) follow from the just  established equality between  the affine metric and the induced metric (hence the same symbol $h$).


Recalling  that the induced metric is the affine metric we have for every $X,Y,Z\in T\mathscr{I}_x^-$
\begin{align*}
\nabla_Z h(X,Y)&=\nabla_Z[h(X,Y)]-h(\nabla_Z X,Y)-h(X,\nabla_Z Y)\\
&=D_Z[g(X,Y)]-g(\nabla_Z X,Y)-g(X,\nabla_Z Y)\\
&=D_Z[g(X,Y)]-g(D_Z X-h(X,Z)\xi,Y)-g(X,D_Z Y-h(X,Z)\xi)\\
&=D_Z[g(X,Y)]-g(D_Z X,Y)-h(X,D_Z Y)=(D_Z g)(X,Y)
\end{align*}
Since the immersion is centroaffine, $\tau=0$, thus we have the equality between Pick cubic form and pullbacked Cartan torsion (observe that the 1/2 factor must be present in Eq.\ (\ref{mic}) since it is included in Eq.\ (\ref{mac})).

Let $y\in \mathscr{I}^-_x$ and let $\{X_i, i=1,\cdots n\}$ be a $h$-orthogonal basis at $T_y \mathscr{I}^-_x$, then since $g_y(y,y)=2\mathscr{L}(x,y)=- 1$, $g_y(y,X_i)=0$,
\begin{align*}
\textrm{tr} \,[Y\mapsto c^\sharp(Z,Y)]&=\sum_i c(Z,X_i,X_i)=(g_y(y,y))^{-1} C(Z,y,y)+\sum_i C(Z,X_i,X_i)\\
&=f^*(\textrm{tr}_g C)(Z)=I(f_*(Z)) ,
\end{align*}
where $I$ is the mean Cartan torsion. From here the traceless part of the Pick cubic form, namely the Pick cubic form for the Blaschke normal c.f.\ Theor.\ \ref{rai}, is easily inferred to be the pullback of the traceless part of the Cartan torsion.

If the mean Cartan torsion vanishes then the apolarity condition holds thus the transverse field $y$ is Blaschke's up to a constant. But these normals generate lines which meet at the origin of $T_xM$ thus $\mathscr{I}^-$ is an affine sphere. Conversely, if $\mathscr{I}^-$ is an affine sphere with center the origin of $T_xM$ then $\xi=-H y$, $\textrm{sgn} (H)=- 1$, where the value of $H$ depends on the choice of volume form (Remark \ref{naa}). Since $ y$ coincides with the Blaschke normal up to a constant, the apolarity condition holds. As for every $Z$, $I(f_*(Z))=0$ and $I(y)=0$ we have $I=0$.

Now, suppose that the volume form is $\sqrt{\vert \det g_{\alpha \beta}\vert}\, d^{n+1} y$. Let $\{\chi^i,i=1,\cdots,n\}$ be a coordinate system on $\mathscr{I}^-$ and let $\chi^0=\mathscr{L}$. These definitions determine a coordinate system $\chi^{\alpha'}$ on the cone generated by $\mathscr{I}^-$ in such a way that the lines $\chi^i= cnst$ pass through the origin.  The position vector on the indicatrix reads $y=- \p/\p \chi^0$ and Eq.\ (\ref{bcp}) reads
\begin{equation} \label{jgk}
g_{\alpha ' \beta'} d \chi^{\alpha'} d \chi^{\beta'}=\frac{1}{2 \chi^0} (d \chi^0)^2- 2 \chi^0 h_{ij} d \chi^i d \chi^j.
 \end{equation}
  Thus $\omega= \sqrt{\vert \det g_{\alpha' \beta'}\vert}\, d \chi^0\wedge d \chi^1 \wedge \cdots \wedge d \chi^n$, and since $\{\xi, \p/\p \chi_1, \cdots, \p/\p \chi_n\}$ has the $\omega$-orientation given by $\textrm{sgn}( H)=-1$, we have
  \[
  \omega_h= -\sqrt{\vert \det h_{ij}\vert} \, d \chi^1 \wedge \cdots \wedge d \chi^n ,
  \]
     and finally
\[
i_\xi\omega=i_{-Hy}\omega= H \sqrt{\vert \det g_{\alpha' \beta'}\vert}\, d \chi^1 \wedge \cdots \wedge d \chi^n= - H  \vert2 \chi^0\vert^{\frac{n-1}{2}} \omega_h.
\]
Since on the indicatrix $\vert2 \chi^0\vert=1$ we conclude that $H=- 1$. $\square$
\end{proof}



\begin{remark} \label{njs}
Obviously Theorem \ref{rel} admits a  reformulation for positive definite $g$, it is sufficient to take the transverse field $\xi=-y$.
\end{remark}

We have established that the condition $I_\alpha=0$ characterizes those (Lorentz-)Finsler spaces for which the indicatrix is an affine sphere centered at the origin. One might ask what is the Finslerian characterization of an indicatrix which is an affine sphere arbitrarily centered. This question is answered by the next theorem

\begin{theorem} \label{doi}
Let $\{y^\alpha\}$ be canonical tangent coordinates on $T_xM$, where $(M,\mathscr{L})$ is a Lorentz-Finsler space and $\bar \Omega$ is the cone domain of $\mathscr{L}$. The indicatrix is an affine sphere (necessarily hyperbolic) centered at $p\in T_xM$, if and only if the mean Cartan torsion has the form
\begin{equation}
\begin{split}
I_\alpha&=\frac{n+2}{2} \sqrt{-g_y(y,y)} \left(1+\frac{g_y(y,p)}{\sqrt{-g_y(y,y)}}\right)^{\!-1} h_{\alpha \beta} p^\beta\\
&=\frac{n+2}{2} \frac{\p}{\p y^\alpha} \log \left(1-\frac{\p F}{\p y^\beta} p^\beta\right), \ with \ F=\sqrt{-g_y(y,y)} \label{or1}
\end{split}
\end{equation}
with $p^\beta$ independent of $y$. Let $C\subset E$ be the cone generated by the convex hull of the affine sphere with its center $p$. The  {\em vector} $p$ belongs to $\bar C-p$ and the domain of the Finsler Lagrangian is $\bar \Omega=\bar C-p$. Finally, let $\mu$ be the translational invariant volume form which assigns to the indicatrix the affine mean curvature $H=-1$, then
\[
\sqrt{\vert \det g_{\alpha \beta}\vert} \, \dd^{n+1} y= \left(1-\frac{\p F}{\p y^\beta} p^\beta\right)^{\!\frac{n+2}{2}} \!\!\mu,
\]
thus the $y$ dependence is all on the first factor.


\end{theorem}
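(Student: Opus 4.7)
The plan is to reduce the condition ``the indicatrix $\mathscr{I}_x^-$ is an affine sphere centered at $p$'' to an apolarity condition on the transverse field $\xi = y-p$, and then to translate that condition into the algebraic form of $I_\alpha$ by combining the change-of-transverse-field rule (\ref{thr}) with Theorem \ref{rel}.

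First I observe that $\xi = y-p$ is automatically equiaffine with shape operator $S = -\mathrm{Id}$: since $p$ is a constant vector in $E$, $D_X \xi = D_X y = f_*(X)$, so the Weingarten decomposition forces $\tau = 0$ and $S(X) = -X$. Among equiaffine fields the Blaschke normal is characterised by the apolarity condition (iii), and any two parallel equiaffine fields differ by a constant (a direct consequence of (\ref{two})). Hence $\mathscr{I}_x^-$ is a proper affine sphere centered at $p$ iff $\xi = y-p$ itself satisfies apolarity.

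Next I compare $\xi = y-p$ with the centroaffine field $\xi_0 = y$ used in Theorem \ref{rel}. Writing $y - p = \phi\, y + f_*(Z)$ and contracting with $\dd\mathscr{L}$, which annihilates $f_*(Z)$ and satisfies $\dd\mathscr{L}(y) = 2\mathscr{L} = -1$ on the indicatrix, yields $\phi = 1 + g_y(y,p)$ and $f_*(Z) = -p - g_y(y,p)\,y$. Theorem \ref{rel} gives $\mathrm{tr}\, c_0(W) = I(f_*(W))$, and (\ref{thr}) then rewrites the apolarity of $\xi$ as
\[
I(f_*(W)) \;=\; -\frac{n+2}{2\phi}\, h(Z, W),\qquad W \in T_y\mathscr{I}_x^-.
\]
Using that on tangent vectors the induced metric equals $g_y$ and that $\dd\mathscr{L}$ annihilates $f_*(W)$, one finds $h(Z,W) = -g_y(p, f_*(W))$; combined with the on-indicatrix identity $h_{\alpha\beta} = g_{\alpha\beta} + \mathscr{L}_{,\alpha}\mathscr{L}_{,\beta}$ this becomes $I_\alpha = \tfrac{n+2}{2\phi}\, h_{\alpha\beta} p^\beta$ for every direction, the $y$-direction being trivial as $I_\alpha y^\alpha = 0 = h_{\alpha\beta}y^\beta$. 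Extension off the indicatrix is then automatic from the degree $-1$ homogeneity of $I_\alpha$, the degree $-2$ homogeneity of $h$, and the identities $\p_\alpha F = -g_{\alpha\beta} y^\beta/F$ and $F_{,\alpha,\beta} = -F h_{\alpha\beta}$, producing both equivalent forms of $I_\alpha$ in the statement.

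The volume identity follows by integration. Equating $I_\alpha = \tfrac12 \p_\alpha \log |\det g_{\mu\nu}|$ from (\ref{aqs}) with the logarithmic form of $I_\alpha$ and integrating in $y$ gives $\sqrt{|\det g|} = e^{a(x)/2}(1 - \p_\beta F\, p^\beta)^{(n+2)/2}$ for some $a(x)$ independent of $y$, yielding the claimed factorisation with $\mu := e^{a(x)/2}\,\dd^{n+1}y$; the identification of $\mu$ with the translational invariant volume form that realises $|H|=1$ then follows from the uniqueness in Corollary \ref{kwx}. The main obstacle will be the geometric assertion $\bar\Omega = \bar C - p$ together with $p \in \bar C - p$: this requires the global fact (Cheng--Yau) that every complete convex hyperbolic affine sphere is asymptotic to the boundary of a sharp convex cone with apex at its centre, so that the cone based at $p$ through $\overline{\mathrm{conv}}(\mathscr{I}_x^-)$, once translated by $-p$, reproduces precisely the cone with apex $0$ on which the Finsler Lagrangian is defined.
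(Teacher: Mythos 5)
Your treatment of the main biconditional is essentially the paper's own argument: you pass from $\xi=y$ to $\bar\xi=y-p$, use equiaffinity of both centroaffine fields, characterize ``affine sphere centered at $p$'' by apolarity of $\bar\xi$, and translate apolarity into the $I_\alpha$ formula via Eq.~(\ref{thr}) together with $\textrm{tr}\,c(W)=I(f_*(W))$ from Theorem~\ref{rel}; the computations of $\phi$, of $h(Z,W)=-g_y(p,f_*(W))$, and the homogeneity extension off the indicatrix all check out. Where you genuinely depart from the paper is the volume identity: you integrate the two expressions $I_\alpha=\tfrac12\p_\alpha\log|\det g|$ and $I_\alpha=\tfrac{n+2}{2}\p_\alpha\log(1-\p_\beta F\,p^\beta)$ to get $\sqrt{|\det g|}=e^{a(x)/2}(1-\p_\beta F\,p^\beta)^{(n+2)/2}$, which is a cleaner way to obtain the factorisation than the paper's explicit comparison, in adapted coordinates $\chi^{\alpha}$ and $\chi^{\bar\alpha}$, of $\det g$ with the determinant of the recentered metric $\bar g$.

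There is, however, a gap in the last step of that shortcut. Corollary~\ref{kwx} gives \emph{uniqueness} of the translational invariant volume form realising $|H|=1$; it does not tell you that the particular form $\mu':=e^{a(x)/2}\,\dd^{n+1}y$ produced by your integration constant is that form. A priori $\mu'=\lambda\mu$ for some unknown $\lambda(x)>0$, and the factorisation alone cannot fix $\lambda$. To close this you must actually compute the affine mean curvature of the indicatrix with respect to $\mu'$ — for instance by introducing the recentered Lagrangian $\bar{\mathscr L}$ with cone vertex $p$, noting that Theorem~\ref{rel} applied to it identifies $\mu=\sqrt{|\det\bar g|}\,\dd^{n+1}y$ as the $H=-1$ form, and then showing $e^{a/2}=\sqrt{|\det\bar g|}$ by relating $\det g$ to $\det\bar g$ through the conformal factor $\phi$ (this determinant comparison is precisely the content of the paper's longer computation). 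Separately, your closing sentence on $\bar\Omega=\bar C-p$ only restates the claim: the role of the hypothesis $p\in\bar C-p$ (ruling out rays from the origin tangent to the affine sphere, which would prevent the whole sphere from being a Finsler indicatrix) is not addressed, and the positivity of $1+g_y(y,p)/\sqrt{-g_y(y,y)}$ on all of $\Omega$, needed for the logarithmic form of (\ref{or1}) to make sense, should be argued via the supporting-hyperplane observation the paper uses.
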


Observe that $p$ becomes a causal vector field over $M$ if the dependence on $x$ is considered. It can be called the {\em center vector field}. If timelike it selects a privileged observer on spacetime.

For Finsler spaces this result changes as follows.
\begin{theorem} \label{doo}
Let $\{y^\alpha\}$ be canonical tangent coordinates on $T_xM$, where $(M,\mathscr{L})$ is a Finsler space and where $\mathscr{L}$ has domain $T_xM \backslash 0$. The indicatrix is an affine sphere (necessarily elliptic hence an ellipsoid) centered at $p\in T_xM$, if and only if the mean Cartan torsion has the form
\begin{equation}
\begin{split}
I_\alpha&=-\frac{n+2}{2} \sqrt{g_y(y,y)} \left(1-\frac{g_y(y,p)}{\sqrt{g_y(y,y)}}\right)^{\!-1} h_{\alpha \beta} p^\beta\\
&=\frac{n+2}{2} \frac{\p}{\p y^\alpha} \log \left(1-\frac{\p F}{\p y^\beta} p^\beta\right), \ with \ F=\sqrt{g_y(y,y)} \label{or2}
\end{split}
\end{equation}
with $p^\beta$ independent of $y$. Let $C\subset E$ be the ellipsoid generated by the convex hull of the affine sphere with its center $p$. The  {\em vector} $p$ belongs to $\bar C-p$. Finally, let $\mu$ be the translational invariant volume form which assigns to the indicatrix the affine mean curvature $H=1$, then
\[
\sqrt{ \det g_{\alpha \beta}} \, \dd^{n+1} y= \left(1-\frac{\p F}{\p y^\beta} p^\beta\right)^{\!\frac{n+2}{2}} \!\!\mu,
\]
thus the $y$ dependence is all on the first factor.
\end{theorem}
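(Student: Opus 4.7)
The plan is to mirror the argument for Theorem \ref{doi} with the sign modifications appropriate to the positive definite setting. By Remark \ref{njs} I take the centroaffine transverse field $\xi = -y$ to the indicatrix $\mathscr{I}$; its Pick cubic form is the pullback of the Cartan torsion, and $S = I$, $\tau = 0$ (Theorem \ref{lpo} with $c = 1$). Since $g_y(y,y) = F^2$ equals $1$ on $\mathscr{I}$ and $g_y(y,\cdot)$ annihilates $T_y\mathscr{I}$, I decompose any $p \in T_xM$ as $p = g_y(y,p)\,y - t$ with $t \in T_y\mathscr{I}$.

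Assuming first that $\mathscr{I}$ is an affine sphere (necessarily elliptic since $h$ is positive definite) centered at $p$, the Blaschke normal reads $\xi_B = -H(y - p)$, and $(\ref{onr})$ becomes $\xi_B = \phi\,\xi + f_*(Z)$ with $\phi = H(1 - g_y(y,p))$ and $f_*(Z) = -Ht$. Inserting into $(\ref{thr})$ and imposing apolarity $\bar{\mathrm{tr}}\,\bar c = 0$ (valid for the Blaschke field) yields, with $\mathrm{tr}\,c = I|_{T_y\mathscr{I}}$ as in Theorem \ref{rel},
\[
I(W) = -\frac{n+2}{2\phi}\,h(Z,W) = \frac{n+2}{2}\,\frac{h_y(p,W)}{g_y(y,p) - 1}, \qquad W \in T_y\mathscr{I}.
\]
Together with $I(y) = 0$ (degree $-1$ homogeneity of $I_\alpha$) this determines $I_\alpha$ on $\mathscr{I}$, and the same homogeneity extends it to $T_xM\setminus 0$, producing the first line of $(\ref{or2})$. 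The logarithmic form then follows by direct computation: from $F F_\alpha = g_{\alpha\nu}y^\nu$ and $C_{\alpha\mu\nu}y^\mu = 0$ one obtains $F\,\partial_\alpha F_\beta = g_{\alpha\beta} - F_\alpha F_\beta = F^2 h_{\alpha\beta}$, and recognizing $1 - F_\beta p^\beta = 1 - g_y(y,p)/F$ shows that $\tfrac{n+2}{2}\partial_\alpha \log(1 - F_\beta p^\beta)$ reproduces the first expression.

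For the converse, inserting the logarithmic form of $I_\alpha$ back into the trace identity makes $\bar{\mathrm{tr}}\,\bar c = 0$ for the transverse field $\xi_B = -H(y-p)$, while the explicit $\phi$ and $Z$ collapse $(\ref{two})$ to $\bar\tau(W) = [h_y(p,W) - g_y(p,W)]/(1 - g_y(y,p))$, which vanishes on $T_y\mathscr{I}$ since $g_y(y,W) = 0$ there; hence $\xi_B$ is the Blaschke normal, all such normals meet at $p$, and $\mathscr{I}$ is a proper affine sphere with center $p$. The volume statement is obtained by combining $I_\alpha = \tfrac12 \partial_\alpha \log \det g_{\mu\nu}$ from $(\ref{aqs})$ with the logarithmic form of $I_\alpha$: this yields $\det g_{\mu\nu} = \psi(x)\,(1 - F_\beta p^\beta)^{n+2}$, whence $\sqrt{\det g}\,\dd^{n+1}y = (1 - F_\beta p^\beta)^{(n+2)/2}\mu$ with $\mu = \sqrt{\psi(x)}\,\dd^{n+1}y$ translationally invariant, and the normalization $H = 1$ is then automatic via $(\ref{mon})$ and Corollary \ref{kwx}. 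The geometric claim $p \in \bar C - p$ is immediate since an ellipsoid contains its center in the interior, so $C = T_xM$.

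The step I expect to be the main obstacle is tracking the precise $H$-factors and signs across $(\ref{onr})$, $(\ref{dop})$, $(\ref{two})$, $(\ref{thr})$ when $h$ is positive definite rather than Lorentzian: one must verify that the decomposition $p = g_y(y,p)\,y - t$ still yields $f_*(Z) = -Ht$ given the sign of the radial term in $g = \dd F^2 + F^2 h$, and that the resulting $\phi$ and trace identity propagate all the way to the overall minus sign in front of $(n+2)/2$ in $(\ref{or2})$ (as opposed to the plus sign appearing in the Lorentz-Finsler case $(\ref{or1})$).
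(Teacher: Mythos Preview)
Your argument for the equivalence between the affine sphere condition and the form (\ref{or2}) of $I_\alpha$ follows the paper's route: change of transverse field from $\xi=-y$ to $\bar\xi=-(y-p)$, then apply (\ref{thr}) together with apolarity. One simplification you are missing: since $\bar\xi$ is itself centroaffine (position vector with origin $p$), $\bar\tau=0$ is automatic from the Weingarten equation, so the verification via (\ref{two}) is unnecessary; the paper just invokes this. Also, in the converse you should not introduce $H$ a priori: take $\bar\xi=-(y-p)$, show it is equiaffine and apolar, and conclude it is Blaschke up to a constant.

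Your treatment of the volume statement is genuinely different from the paper's and more direct: you integrate $I_\alpha=\tfrac12\partial_\alpha\log\det g$ against the logarithmic form of (\ref{or2}) to get $\det g=\psi(x)(1-F_\beta p^\beta)^{n+2}$. The paper instead sets up two radial coordinate systems $\{\chi^\alpha\}$, $\{\chi^{\bar\alpha}\}$ (with rays through $0$ and through $p$) and compares the induced volume forms on the indicatrix using (\ref{dop}) and Theorem~\ref{rel}. Your route is cleaner, but it has a gap: you obtain \emph{some} translationally invariant $\tilde\mu=\sqrt{\psi(x)}\,\dd^{n+1}y$, yet you still have to show that $\tilde\mu$ is the specific measure assigning $H=1$ to the indicatrix. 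Invoking (\ref{mon}) and Corollary~\ref{kwx} does not do this for free: Corollary~\ref{kwx} guarantees uniqueness of such a $\mu$, but you have not checked that \emph{your} $\tilde\mu$ has this property. The paper's coordinate comparison is precisely what pins down this normalization, identifying $\tilde\mu$ with $\sqrt{\det\bar g}\,\dd^{n+1}y$ for the Finsler metric $\bar g$ seen from the center $p$, which by Theorem~\ref{rel} gives $H=1$.

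Finally, the geometric claim: $C$ is the closed solid ellipsoid (the convex hull of the indicatrix with its center $p$), not all of $T_xM$. The correct argument for $p\in\bar C-p$ is that the origin lies in the interior of the indicatrix (since $\mathscr{L}$ is defined on $T_xM\setminus 0$), so $0\in C$, hence $-p\in C-p$, and by central symmetry of $C-p$ also $p\in C-p$.
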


Observe that $p$ becomes a vector field over $M$ if the dependence on $x$ is considered. In Eq.\ (\ref{or1}) we used Eq.\ (\ref{dqa}), while in Eq.\ (\ref{or2}) we used the analogous equation $h=F^{-1} \dd ^2 F$ which is valid for Finsler spaces (cf.\ Remark \ref{njs}).

\begin{remark}
For Finsler spaces the indicatrix can be an affine sphere (elliptic, parabolic or hyperbolic) in other ways if the domain of $\mathscr{L}$ is  a half space minus an open cone. This happens when the affine sphere passes through the origin of $T_xM$. The parabolic case will be considered in Sect.\ \ref{aae}. The elliptic case gives the Kropina Finsler spaces while the hyperbolic case gives yet another Finsler Lagrangian.
\end{remark}


\begin{proof}
We shall denote with $y$ a point in the indicatrix and we shall give the proof in the Lorentz-Finsler case, the Finsler case being obtained analogously for transverse fields $\xi=-y$, $\bar \xi=-(y-p)$.

Necessity. We make the change of transverse field from  $\xi=y$ to $\bar \xi=y-p$,  and we  parametrize it as in (\ref{one})
 \begin{equation*}
\bar \xi=\phi \xi+f_*(Z)
\end{equation*}
Both transverse fields are centroaffine thus the equiaffine condition holds $\tau=\bar \tau=0$. By assumption the indicatrix is an affine sphere with center $p$, so $\bar \xi$ is, up to a constant, the affine normal, hence the apolarity condition holds: $\bar{\textrm{tr}} \bar c=0$. From Eq.\ (\ref{thr}) we have
\begin{align*}
0&=\textrm{tr}  \, c (W)+\frac{n+2}{2 \phi } h(Z,W)
\end{align*}
which implies, denoting again with $h$ the restriction of the metric $g$ to the indicatrix, and using (\ref{one})
\begin{align*}
0&=\textrm{tr} C (f_*(W))+\frac{n+2}{2 \phi } h(f_*(Z),f_*(W))\\
&=\textrm{tr} C (f_*(W))+\frac{n+2}{2 \phi } h(\bar \xi-\phi \xi,f_*(W))\\
&=\textrm{tr} C (f_*(W))-\frac{n+2}{2 \phi } h(p,f_*(W)),
\end{align*}
which, given the arbitrariness of $W$, proves the  equation $I_\alpha=\frac{n+2}{2} \phi^{-1} h_{\alpha \beta} p^\beta$ with $\phi$ to be determined. Now observe that $f_*(Z)=\bar \xi-\phi \xi=y-p-\phi y$ is tangent to the indicatrix at $y$, thus $g_y(f_*(Z),y)=0$, which implies  by positive homogeneity $\phi=\frac{1}{\sqrt{-g_y(y,y)}}\big(1+\frac{g_y(y,p)}{\sqrt{-g_y(y,y)}}\big)$ (recall that on the indicatrix $g_y(y,y)=-1$ and $h_{\alpha \beta}$ is positive homogeneous of degree $-2$ cf.\ Eq.\ (\ref{dqa})).

Sufficiency. Define  $\phi=\frac{1}{\sqrt{-g_y(y,y)}}\big(1+\frac{g_y(y,p)}{\sqrt{-g_y(y,y)}}\big)$, $\bar \xi=y-p$, so that the equation reads  $I_\alpha=\frac{n+2}{2} \phi^{-1} h_{\alpha \beta} p^\beta$, next observe that for $y$ on the indicatrix the definition of $\phi$ can be recasted in the form $g_y(\bar \xi-\phi \xi,y)=0$ which allows us to define a vector field $Z$ over the indicatrix so that $f_*(Z)=\bar \xi-\phi \xi$. For every $W$ we have $0=\textrm{tr} C (f_*(W))-\frac{n+2}{2 \phi } h(p,f_*(W))$ and we can repeat the previous steps backwards till  $0= \textrm{tr} c (W)+\frac{n+2}{2 \phi } h(Z,W)$ which shows by Eq.\ (\ref{one}) that $\bar{\textrm{tr}} \bar c=0$, namely the transvese field $\bar \xi$ is centroaffine and satisfies the apolarity condition, thus the indicatrix is an affine sphere centered at $p$.
%
%

Suppose that the {\em vector} $p$ belongs to $\bar C-p$, then the domain of the Lagrangian is $\bar \Omega=\bar C-p$. For every $y\in \Omega$, we have $\frac{g_y(y,p)}{\sqrt{-g_y(y,y)}}=g_{\hat y}(\hat y, p)$ with $\hat y$ belonging to the indicatrix, moreover, the locus $\{w: g_{\hat y}(\hat y, w)> -1\}$ is the half space which includes the origin and is bounded by the hyperplane tangent to the indicatrix at $\hat y$. For every $\hat y$ this region includes the origin $p$ of the affine sphere thus $1+\frac{g_y(y,p)}{\sqrt{-g_y(y,y)}}>0$ and the  parenthesis in Eq.\ (\ref{or1}) is well defined for every $y\in \Omega$.


If $p \notin \bar C-p$ then it is easy to see that there is a half line starting from $0$ and tangent to the affine sphere, which means that the affine sphere cannot be used in its entirety to define a Finsler indicatrix. Furthermore, observe that if 0 stays on the indicatrix or on the opposite side of the indicatrix compared to $p$ then the space would be Finsler rather than Lorentz-Finsler.

The statement on the volume forms is proved as follows. Let $\{\chi^i\}$ be coordinates on the indicatrix. We extend them in two different ways. First we impose that the sets $\chi^i=cnst$ are half lines passing through the origin of $T_xM$, and add to the set the coordinate $\chi^0=\mathscr{L}$ so as to coordinatize $\Omega$. The Finsler metric reads (cf.\ Eq.\ (\ref{jgk}))
\begin{equation} \label{jgk2}
g_{\alpha ' \beta'} d \chi^{\alpha'} d \chi^{\beta'}=\frac{1}{2 \chi^0} (d \chi^0)^2- 2 \chi^0 h_{ij} d \chi^i d \chi^j.
 \end{equation}
Let $\xi=-\frac{\p}{\p \chi^0}=y$ be the Finslerian centroaffine transverse field. We have
\begin{equation} \label{uod}
f^*(i_\xi \sqrt{\vert \det g\vert}\, \dd \chi^0\wedge \dd \chi^1\wedge \cdots\dd \chi^n )=-\sqrt{\det h_{ij}} \,\dd \chi^1\wedge \cdots\dd \chi^n.
\end{equation}
Let us consider the similar equations that are obtained if the origin of $T_xM$ is moved on $p$. The coordinates $\{\chi^i\}$ are extended to coordinates $\chi^{\bar i}$ in such a way that their level sets are half lines originating from $p$. Adding as a further coordinate $\chi^{\bar 0}:=\bar{\mathscr{L}}$ gives a coordinatization of $C$. The barred Finsler metric reads
\begin{equation} \label{jgk3}
\bar g_{\bar \alpha  \bar \beta} d \chi^{\bar \alpha} d \chi^{\bar \beta}=\frac{1}{2 \chi^{\bar 0}} (d \chi^{\bar 0})^2- 2 \chi^{\bar 0} \bar h_{\bar i \bar j} d \chi^{\bar i} d \chi^{\bar j}.
 \end{equation}
 The affine normal is  $\bar \xi:=-\frac{\p}{\p \chi^{\bar 0}}=y-p$ thus at every point of the indicatrix we can find a tangent vector $Z$ such that $\bar \xi=\phi\xi+f_*(Z)$.
 We know from Theor.\ \ref{rel} that the $n+1$ form induced by $\bar g$ is such that \[
f^*( i_{\bar \xi} \sqrt{\vert \det \bar g_{\bar \alpha \bar \beta}\vert} \, \dd \chi^{\bar 0}\wedge\dd \chi^{\bar 1}\wedge\cdots \wedge \dd \chi^{\bar n})=\omega_{\bar h}=-\sqrt{\det \bar h_{\bar i\bar j}} \, \dd \chi^{\bar 1}\wedge\cdots \wedge \dd \chi^{\bar n}.
 \]
However, on the indicatrix the coordinate $\chi^i$ and $\chi^{\bar i}$ coincide and moreover $\bar h=h/\phi$, cf.\ Eq.\ (\ref{dop}), thus
\[
f^*( \phi i_{ \xi} \sqrt{\vert \det \bar g_{\bar \alpha \bar \beta}\vert} \, \dd \chi^{\bar 0}\wedge\dd \chi^{\bar 1}\wedge\cdots \wedge \dd \chi^{\bar n})=\omega_{\bar h}=-\phi^{-n/2} \sqrt{\det h_{ i j}} \, \dd \chi^{ 1}\wedge\cdots \wedge \dd \chi^{ n}.
\]
Let $\varphi$ be the ($y$-dependent but necessarily positive homogeneous of degree zero) factor such that in the canonical coordinates of the tangent bundle $\sqrt{\vert \det g\vert}= \varphi\sqrt{\vert \det \bar g\vert}$ or equivalently
\[
\sqrt{\vert \det  g_{ \alpha  \beta}\vert} \dd \chi^{ 0}\wedge\dd \chi^{ 1}\wedge\cdots \wedge \dd \chi^{ n}=\varphi \sqrt{\vert \det \bar g_{\bar \alpha \bar \beta}\vert} \, \dd \chi^{\bar 0}\wedge\dd \chi^{\bar 1}\wedge\cdots \wedge \dd \chi^{\bar n}
\]
 then
 \[
f^*(\frac{1}{\varphi} \phi^{\frac{n+2}{2}}) f^*(  i_{ \xi} \sqrt{\vert \det  g_{ \alpha  \beta}\vert} \dd \chi^{ 0}\wedge\dd \chi^{ 1}\wedge\cdots \wedge \dd \chi^{ n})=- \sqrt{\det h_{ i j}} \, \dd \chi^{ 1}\wedge\cdots \wedge \dd \chi^{ n},
\]
which using Eq.\ (\ref{uod}) proves the claim. $\square$
\end{proof}

\begin{corollary}
A Lorentz-Finsler space has an affine sphere indicatrix at $T_xM$ if and only if the Finsler Lagrangian $\mathscr{L}\colon \Omega \to \mathbb{R}$ satisfies the  vertical Monge-Amp\`ere equation
\begin{equation}\label{swi}
 -\det \dd^2 \mathscr{L}= \rho^2 \left(1+\frac{1}{\sqrt{- 2\mathscr{L}}}\,  \p_p \mathscr{L}\right)^{n+2}
\end{equation}
where $\rho \,\dd^{n+1} y$ is the translational invariant measure which assigns to the indicatrix the affine mean curvature $H=- 1$ and where $p\in \bar \Omega$ is the center of the sphere.
\end{corollary}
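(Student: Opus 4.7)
The plan is to derive the corollary as a direct scalar consequence of the last identity in Theorem~\ref{doi}, which already contains essentially all the content: the corollary is just the squared, index-free rewriting of that identity, together with the observation that $I_\alpha$ is a logarithmic derivative of $\sqrt{|\det g|}$ via Eq.~(\ref{aqs}).

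For the forward direction, I would start from
\[
\sqrt{|\det g_{\alpha\beta}|}\,\dd^{n+1}y=\Bigl(1-\tfrac{\p F}{\p y^\beta}p^\beta\Bigr)^{(n+2)/2}\!\mu,
\]
furnished by Theorem~\ref{doi} (with $\mu=\rho\,\dd^{n+1}y$ the translational invariant volume form giving $H=-1$), square both sides, and recall that $g$ is Lorentzian, so $|\det g|=-\det g=-\det \dd^2\mathscr{L}$. It then remains to rewrite $\p F/\p y^\beta$ in terms of $\mathscr{L}$ using $F=\sqrt{-2\mathscr{L}}$, which gives
\[
\frac{\p F}{\p y^\beta}p^\beta=-\frac{1}{\sqrt{-2\mathscr{L}}}\,\p_p\mathscr{L},
\]
so that the parenthesis becomes $1+(-2\mathscr{L})^{-1/2}\p_p\mathscr{L}$. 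This yields (\ref{swi}).

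For the converse, I would take the logarithmic derivative $\tfrac12\p_\alpha\log$ of both sides of~(\ref{swi}). The left-hand side reproduces $I_\alpha$ by Eq.~(\ref{aqs}) (noting that $\log\rho$ is constant in $y$ since $\mu$ is translational invariant), while the right-hand side gives exactly the second expression for $I_\alpha$ appearing in Eq.~(\ref{or1}) of Theorem~\ref{doi}. Invoking Theorem~\ref{doi} again, this is equivalent to the indicatrix being an affine sphere centered at $p$, which closes the equivalence.

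No serious obstacle is expected: the only points that need care are the sign/absolute value bookkeeping when squaring (using the Lorentzian signature to drop $|\cdot|$), the chain rule converting $F$-derivatives into $\mathscr{L}$-derivatives, and the justification that $\rho$ can be treated as $y$-independent when differentiating---this last point follows from the fact that $\mu$ is, by construction, translational invariant on each tangent space, so $\rho$ depends only on $x$. Everything else is a direct rewriting of the conclusion already obtained in Theorem~\ref{doi}.
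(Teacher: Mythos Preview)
Your proposal is correct and is precisely the intended derivation: the paper states this corollary without proof immediately after Theorem~\ref{doi}, and your argument---squaring the volume identity from that theorem and using $F=\sqrt{-2\mathscr{L}}$ for the forward direction, then taking $\tfrac12\partial_\alpha\log$ and invoking Eq.~(\ref{aqs}) and the characterization (\ref{or1}) for the converse---is exactly how one unpacks it. The only point you might add for completeness is that, in the converse, once Theorem~\ref{doi} gives that the indicatrix is an affine sphere centered at $p$, re-applying the volume identity of Theorem~\ref{doi} and comparing with (\ref{swi}) confirms that your $\rho$ is indeed the density assigning $H=-1$; but this is a routine check.
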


A similar but less interesting version holds for Finsler spaces. In Eq.\ (\ref{swi}) the minus signs have to be changed  to plus signs, and the plus signs have to be changed to minus signs.


\subsection{Obtaining Finslerian results from affine differential geometry}

As mentioned previously, Theorem \ref{rel} admits a  reformulation for positive definite $g$. Since the affine metric and the Pick cubic form are the pullbacks of the Finsler metric and the Cartan torsion respectively, it is possible to obtain several results in (Lorentz-)Finsler geometry from results of affine differential geometry, and conversely.

For instance, the Maschke-Pick-Berwald theorem \cite[Theor.\ 4.5]{nomizu94} states that if the Pick cubic form $c_B$ of the Blaschke structure   vanishes, then the hypersurface (indicatrix) lies in a hyperquadric. From Theorem \ref{rai} this means that if the traceless part of the Pick cubic form $c$ vanishes (independently of the transverse field used) then the indicatrix lies in a hyperquadric (see also \cite[p.43, Lemma 3.2]{simon68}  \cite[Theor.\ 6.4]{nomizu94}). Using Theorem \ref{rai} this result can be translated to (Lorentz-)Finsler geometry as
\begin{theorem} \label{hqp}
Let $M$ be a (Lorentz-)Finsler space of dimension $n+1\ge 3$.
If the  traceless part of the Cartan torsion, namely
\[
M_{\alpha \beta \gamma}:=C_{\alpha \beta \gamma}-\frac{1}{n+2}(h_{\alpha \beta} I_\gamma+h_{\gamma \alpha } I_\beta+h_{ \beta \gamma } I_\alpha)
\]
vanishes, then the indicatrix lies in a hyperquadric.
\end{theorem}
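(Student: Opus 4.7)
The plan is to pull the problem back from the Finsler side to the affine-geometric side and invoke the classical Maschke-Pick-Berwald theorem, which is already cited in the excerpt. All the necessary dictionary has in fact been built in Theorems \ref{rel} and \ref{rai}, so the argument should be essentially a translation.

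First I would fix an arbitrary point $x\in M$ and regard the indicatrix $\mathscr{I}^-_x\subset T_xM$ as a centroaffine hypersurface immersed in the affine (actually vector) space $T_xM$, as was done in Section \ref{pas}. By Theorem \ref{rel}, the metric induced on $\mathscr{I}^-_x$ by the vertical Finsler metric coincides with the affine metric $h$ associated to the centroaffine transverse field $\xi=y$, and the Pick cubic form $c$ for this same transverse field is the pullback of the Cartan torsion, $c=f^*C$. Consequently the traceless part $m$ of $c$ (with trace taken in $h$) is the pullback of the traceless part $M$ of $C$, i.e.\ $m=f^*M$.

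Next I would invoke Theorem \ref{rai}: the one-index-raised tensor $m^\sharp$ is independent of the choice of transverse field and coincides with the one-index-raised Pick cubic form $c^\sharp_B$ associated to the Blaschke normal. Therefore the hypothesis $M_{\alpha\beta\gamma}=0$ implies $m=0$, and hence $c_B=0$ as a tensor on $\mathscr{I}^-_x$. At this stage the indicatrix at $x$ is a non-degenerate hypersurface (of dimension $n\ge 2$, using $n+1\ge 3$) whose Blaschke-Pick cubic form vanishes identically, and the Maschke-Pick-Berwald theorem \cite[Theor.\ 4.5]{nomizu94} is directly applicable and yields that $\mathscr{I}^-_x$ lies in a hyperquadric of $T_xM$.

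The only delicate point, which is where I would be careful, is the signature/regularity setup: the Maschke-Pick-Berwald theorem is usually phrased for non-degenerate hypersurfaces in the positive-definite case, so in the Lorentz-Finsler setting I would note that the theorem applies equally well because the affine metric on $\mathscr{I}^-_x$ is Riemannian (as remarked after Eq.\ (\ref{kkk})), the induced connection and Pick form being defined identically. The dimensional assumption $n+1\ge 3$ is exactly what makes $\dim \mathscr{I}^-_x\ge 2$, which is needed for the classical statement. Since $x$ was arbitrary, the conclusion holds at every point, which is the assertion of the theorem. $\square$
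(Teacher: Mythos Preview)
Your argument is correct and follows essentially the same route as the paper: the paper explicitly presents Theorem \ref{hqp} as the Finslerian translation of the Maschke--Pick--Berwald theorem via Theorems \ref{rel} and \ref{rai}, which is exactly the dictionary you use. Your added remark about the signature is a nice clarification but not strictly needed, since Maschke--Pick--Berwald in \cite[Theor.\ 4.5]{nomizu94} is stated for general non-degenerate hypersurfaces; in any case, as you note, the affine metric on $\mathscr{I}^-_x$ is Riemannian here.
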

%
%
%
%
In the positive definite case it is necessarily an ellipsoid which need not be centered at the origin of $T_xM$, thus we have a Randers space if the origin of the tangent space lies in the interior of the ellipsoid and a Kropina space if it lies on the boundary. In Finsler geometry this result was stated by Matsumoto  \cite{matsumoto72,matsumoto78} but, as we have shown, it can conveniently  regarded as  the translation of a  classical theorem from affine differential geometry. This observation can also be found in \cite{mo10}.



\subsubsection{(Lorentz-)Randers and (Lorentz-)Kropina spaces}

Observe that  Theorem \ref{hqp} applied  to a metric $g$ of  Lorentzian signature gives that the indicatrix lies in  a hyperboloid whose center is not necessarily the origin of $T_xM$.  Let $C$ be the cone determined by the hyperboloid and its center $p$. We have shown above that the indicatrix is the whole hyperboloid only if  $p$ of is causal and  future directed, namely $p\in \bar C-p = \Omega$.
 This type of causally translated hyperboloids define the indicatrix for the Lorentzian analogs to the Randers and Kropina spaces.

The indicatrix is the locus $\{y: \eta(y-p,y-p)=-1, \eta(y,p)\le 0\}$ where $\eta$ is a quadratic form of Lorentzian signature on $T_xM$ and $p$ is causal,  $\eta(p,p)\le 0$. The  domain of the Lagrangian is then $\Omega=\{y: \eta(y,y)<0, \eta(y,p)\le 0\}$.


  Using Eq.\ (\ref{jui}) for $1+\eta(p,p)\ne 0$ we arrive at the generalized Lorentz-Randers space
 \begin{align} \label{kdu}
 F&=\frac{\eta(p,y)+\sqrt{\eta(p,y)^2-(1+\eta(p,p)) \eta(y,y)}}{1+\eta(p,p)}, \qquad \textrm{gener.
 Lorentz-Randers space}
 \end{align}
 and for $1+\eta(p,p)= 0$ at
 \begin{align*}
 F&=\frac{\eta(y,y)}{2\eta(p,y)}, \qquad  \qquad \qquad \qquad  \qquad \qquad \qquad \qquad \quad\textrm{Lorentz-Kropina space}.
 \end{align*}
 Concerning the generalized Lorentz-Randers case, the argument of the square root is positive by the Lorentzian reverse Cauchy-Schwarz inequality $\eta(p,y)^2-\eta(p,p) \eta(y,y)> 0$, while $F$ is indeed positive independently of the sign of $1+\eta(p,p)$. In both generalized Randers and Kropina's cases, the expression of $F$ for Finsler spaces is analogous,  it is sufficient to replace $\eta$  by $-e$, the minus  Euclidean quadratic form.

 It is interesting to observe that in the generalized Lorentz-Randers case the argument in the square root of (\ref{kdu}) is a minus Lorentzian quadratic form  iff $ 1+\eta(p,p)>0$, in which case we  define the Lorentzian quadratic form
 \begin{equation}
 \bar \eta(y,y)=-\frac{1}{(1+\eta(p,p))^2} \big(\eta(p,y)^2-(1+\eta(p,p))  \eta(y,y)\big) .
 \end{equation}
We call the spaces in which $ 1+\eta(p,p)>0$ Lorentz-Randers.

Similarly, for Randers spaces the argument $e(p,y)^2+(1-e(p,p)) e(y,y)$ of the square root to the equation analogous to (\ref{kdu}) is a positive definite quadratic form (the generalized Randers case can only be Randers since it is assumed $1-e(p,p)>0$, for otherwise the ellipsoid cannot be interpreted as indicatrix since the origin of $T_xM$ would stay outside it) and we define
\begin{equation}
\bar e(y,y)= \frac{1}{(1-e(p,p))^2} \big(e(p,y)^2+(1-e(p,p)) e(y,y)\big)
\end{equation}
  In the Lorentz-Randers (Randers) case let us set  $\bar p=p (1+\eta(p,p))$ (resp.\ $\bar p=p (1-e(p,p))$).  Observe that $\bar \eta(\bar p,\bar p)=\eta(p,p) $ and $\bar \eta(\bar p,y)=\eta(p,y)/(1+\eta(p,p))$ (resp.\  $\bar e (\bar p,\bar p)=e(p,p) $, $\bar e(\bar p, y)=e(p,y)/(1-e(p,p))$).
The  Lorentz-Randers (Randers) case can be recognized as the function $F$ reads
 \begin{equation} \label{for}
 F=\bar\eta(\bar p, y)+\sqrt{-\bar \eta(y,y)} \quad \left(\textrm{resp}.\ F=-\bar e(\bar p, y)+\sqrt{\bar e(y,y)}\right)
 \end{equation}
 where $-1<\bar \eta(\bar p, \bar p)\le 0$ (resp.\ $\bar e(\bar p,\bar p)<1$). Traditionally the  Lorentz-Randers spaces are those given by the previous expression \cite{randers41,storer00,basilakos13}. However, the inequalities constraining $\bar p$ were not recognized and often the Lagrangian cone domain $\Omega$ had been incorrectly identified with (half) the locus $\bar \eta(y,y)<0$ rather than with the smaller set $F>0$ obtainable as $\eta(y,y)<0, \eta(y,p)\le 0$, where
 \[
 \eta(y,y)=\big(1+\bar \eta(\bar p,\bar p)\big)\big(\bar\eta(y,y)+\bar \eta(\bar p,y)^2\big) .
 \]
It is commonplace to regard particle Lagrangians of electromagnetic type as a manifestation of Finsler geometry. Our analysis shows that since the electromagnetic field does not satisfy any causality condition, it is inappropriate to mention (Lorentz-)Finsler geometry, for the indicatrix $F=1$ is not a  complete hyperquadric.

\subsubsection{Relationship between connections}

Let $\nabla$ be the connection introduced in Eq.\ (\ref{gau}) and let  $ \nabla^h$ be the Levi-Civita connection of the affine metric $h$ for the centroaffine transverse field.
In order to translate some results from affine differential geometry  it is necessary to establish how the connections $\nabla$ and $\nabla^h$ should be expressed  in the language of Finsler geometry. In fact we can pass from tensors living on the indicatrix to tensors living on $T_xM$ by using positive homogeneity, and conversely we can restrict Finslerian tensors to the indicatrix provided they annihilate the position vector $y$.

We have
\begin{theorem}
The connection $\nabla$ is just the usual derivative $D$ (obtained through ordinary differentiation $\p/\p y^\alpha$ on $T_xM$) followed by the projection $P^\alpha_\beta=\delta^\alpha_\beta-y_\beta y^\alpha /g_y(y,y)$ on the tangent space to the indicatrix, namely $\nabla=P\circ D$, while $ \nabla^h$ is just the vertical Cartan covariant derivative $\nabla^{VC}$ followed by the same projection, $ \nabla^h=P\circ \nabla^{VC}$.
\end{theorem}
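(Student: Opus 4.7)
The plan is to reduce both identities to a single observation: a Gauss-type decomposition into tangential and normal parts relative to the indicatrix, with $P$ selecting the tangential piece. Accordingly, I would first verify that $P^\alpha_\beta = \delta^\alpha_\beta - y_\beta y^\alpha/g_y(y,y)$ is precisely the $g_y$-orthogonal projector onto $T_y\mathscr{I}^-$ along the radial line $\mathbb{R}y$. This is immediate from positive homogeneity: tangent vectors $V\in T_y\mathscr{I}^-$ are characterized by $\dd\mathscr{L}(V)=g_y(y,V)=0$, so $y$ spans the $g_y$-normal line at each point of the indicatrix, and the displayed formula is the corresponding Householder-type projection, which kills $y$ and fixes vectors $g_y$-orthogonal to it.

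For the first identity $\nabla=P\circ D$, I would invoke the Gauss equation (\ref{gau}) with the centroaffine transverse field $\xi=y$ (cf.\ Theor.\ \ref{rel}), which reads $D_X Y=\nabla_X Y+h(X,Y)\,y$ for $X,Y$ tangent to the indicatrix; here $D$ is nothing but the flat affine derivative of the ambient vector space $T_xM$, realized by ordinary partial differentiation $\p/\p y^\alpha$. Since $\nabla_XY$ is by construction tangent to $\mathscr{I}^-$, $P$ fixes it, while $P(y)=0$ annihilates the affine-normal contribution; applying $P$ to both sides therefore yields $\nabla_XY=P(D_XY)$ at once.

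For the second identity $\nabla^h=P\circ\nabla^{VC}$, the strategy is identical but uses the classical Gauss formula of pseudo-Riemannian submanifold theory. By Theorem \ref{rel} the affine metric $h$ coincides with the metric induced on $\mathscr{I}^-$ by the ambient vertical Lorentzian metric $g$ on $\Omega_x$ (from $g=-\dd F^2+F^2 h$ together with $F=1$ on the indicatrix), and by the definition recalled in the preamble $\nabla^{VC}$ is the Levi-Civita connection of $g$, whose Christoffel symbols in the coordinates $\{y^\mu\}$ are the Cartan torsion coefficients $C^\alpha_{\beta\gamma}$. Since $\mathscr{I}^-$ is a spacelike (hence Riemannian) hypersurface of the Lorentzian manifold $(\Omega_x,g)$ with $g$-normal direction spanned by $y$, the pseudo-Riemannian Gauss formula gives $\nabla^{VC}_X Y=\nabla^h_X Y+\mathrm{II}(X,Y)$ with the second fundamental form valued in the $g$-normal line $\mathbb{R}y$; projecting through $P$ kills the normal contribution and returns $\nabla^h_X Y$.

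The only potentially delicate point is the identification of $P$ as the $g_y$-orthogonal projection onto the tangent space of the indicatrix, but this is a direct consequence of the homogeneity identity $\dd\mathscr{L}=g_y(y,\cdot)$. Once this is in place, both claims are one-line corollaries of the respective Gauss decompositions; I do not anticipate any serious obstacle.
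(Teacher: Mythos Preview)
Your argument is correct. For the first identity you and the paper do exactly the same thing: apply $P$ to the Gauss equation with $\xi=y$.

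For the second identity your route differs from the paper's. You invoke the pseudo-Riemannian submanifold Gauss formula: since $\nabla^{VC}$ is the Levi-Civita connection of the ambient Lorentzian metric $g$ on $\Omega_x$ and $h$ is the metric it induces on the spacelike hypersurface $\mathscr{I}^-_x$, the tangential part of $\nabla^{VC}_XY$ is automatically $\nabla^h_XY$. The paper instead verifies the defining properties of $\nabla^h$ by hand: it computes $\nabla^{VC}_\gamma y^\alpha$, $\nabla^{VC}_\gamma y_\beta$, and $\nabla^{VC}_\gamma(2\mathscr{L})$, uses these to check that $P\circ\nabla^{VC}$ annihilates $h$ (metric compatibility), notes that the projected connection is torsionless, and concludes by uniqueness of the Levi-Civita connection. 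Your argument is shorter and more conceptual, at the cost of importing a standard result from submanifold theory; the paper's is self-contained and makes explicit the identities that are later reused when computing $P\circ\nabla^{VC}C$ in the homogeneity discussion.
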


It is understood that the projection will act on every index of the derivative.
We recall here that the Cartan vertical derivative has connection coefficients $C^\gamma_{\alpha \beta}$ in the canonical coordinates of $T_xM$.
\begin{proof}
The first statement follows from the Gauss formula (\ref{gau}). For the second statement observe that $\nabla^{VC}_\gamma y^\alpha=\delta^\alpha_\gamma$ and $\nabla^{VC}_\gamma y_\beta=\nabla^{VC}_\gamma ( g_{\beta \alpha} y^\alpha)=g_{\beta \gamma}$, $\nabla^{VC}_\gamma (2\mathscr{L})=2y_\gamma$, as a consequence
 the vertical derivative of $h$ vanishes once projected, $P^{\alpha}_{\alpha'} P^{\beta}_{\beta '} P^\gamma_{\gamma'}\nabla^{VC}_\gamma h_{\alpha \beta}=0$. Thus $P\circ \nabla^{VC}$ provides a symmetric connection compatible with the affine metric, hence it is the Levi-Civita connection of $h$, namely $\nabla^h$. $\square$
\end{proof}

Let $h_b$ and $c_b$ be the affine metric and Pick cubic form for the Blaschke normal. In \cite{dillen94,hu11,hildebrand15} it is proved that a hyperbolic affine sphere satisfies $\nabla^{h_b} c_b=0$ if and only if it is homogeneous (hence asymptotic to a symmetric cone). If the indicatrix is an affine sphere centered at the origin (i.e.\ $I_\alpha=0$) the Blaschke structure coincides with the centroaffine structure (Theor.\ \ref{rel}), thus  $\nabla^{h_b} c_b=f^* P\circ \nabla^{VC} C $ where $P\circ \nabla^{VC} C$ is
\begin{align*}
P^{\mu'}_\mu P^{\alpha'}_\alpha P^{\beta'}_\beta P^{\gamma'}_\gamma \nabla^{VC}_{\mu'} C_{\alpha' \beta' \gamma'}&=C_{ \alpha \beta \gamma \mu}- C_{ \sigma \beta \gamma} C^\sigma_{ \alpha \mu}- C_{   \alpha \sigma \gamma} C^{\sigma}_{\beta \mu}  -   C_{\alpha \beta \sigma} C^\sigma_{ \gamma \mu}\\
&\!\!\!\!\!\!\! +\frac{1}{ g_y(y,y)}(C_{\alpha \beta \gamma} y_\mu+C_{\mu \beta \gamma} y_\alpha+C_{\alpha \mu \gamma} y_\beta+C_{\alpha \beta \mu} y_\gamma)
\end{align*}
As a consequence, we have the Finslerian result
\begin{theorem}
Let the Lorentz-Finsler space have  hyperbolic affine sphere indicatrices centered in the zero section. The indicatrix is homogeneous if and only if the  previous tensor in display vanishes. Under homogeneity the Lorentz-Finsler space is Berwald if and only if it is Landsberg.
\end{theorem}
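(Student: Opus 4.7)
My plan splits along the two equivalences in the theorem.

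For the first equivalence (homogeneity versus vanishing of the displayed tensor), the strategy is a direct translation of the cited affine-geometric result into Finsler language. I would begin by invoking \cite{dillen94,hu11,hildebrand15}: a hyperbolic affine sphere is homogeneous (asymptotic to a symmetric cone) iff the Pick cubic form of the Blaschke structure is parallel with respect to the Levi--Civita connection of the affine metric, i.e.\ $\nabla^{h_b} c_b=0$. The hypothesis that the indicatrix is centered at the origin of $T_xM$ gives $I_\alpha=0$ by Theorem \ref{rel}, so the Blaschke transverse field is (a constant multiple of) $y$ itself, and the Blaschke and centroaffine structures coincide: $h_b=h$, $c_b=c=f^*C$. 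Using the preceding identification $\nabla^h=P\circ\nabla^{VC}$, the homogeneity condition becomes
\[
P^{\mu'}_\mu P^{\alpha'}_\alpha P^{\beta'}_\beta P^{\gamma'}_\gamma \, \nabla^{VC}_{\mu'} C_{\alpha'\beta'\gamma'}=0
\]
on $\mathscr{I}^-_x$. Expanding $\nabla^{VC}_\mu C_{\alpha\beta\gamma}=C_{\alpha\beta\gamma\mu}-C^\sigma_{\alpha\mu}C_{\sigma\beta\gamma}-C^\sigma_{\beta\mu}C_{\alpha\sigma\gamma}-C^\sigma_{\gamma\mu}C_{\alpha\beta\sigma}$ and bookkeeping the four projectors with the positive-homogeneity identities $y^\mu C^\sigma_{\alpha\mu}=0$, $y^\mu C_{\alpha\beta\gamma,\mu}=-C_{\alpha\beta\gamma}$, and $y^\alpha C_{\alpha\beta\gamma,\mu}=-C_{\mu\beta\gamma}$, the four correction contributions assemble into the $y_\mu,y_\alpha,y_\beta,y_\gamma$ terms that appear in the displayed expression. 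Since every tensor involved is positive homogeneous of degree zero, vanishing on the indicatrix is equivalent to vanishing on $\Omega_x$.

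For the second equivalence (Berwald versus Landsberg under homogeneity), one direction is automatic: $L_{\alpha\beta\gamma}=-\tfrac{1}{2}\,y_\mu B^\mu_{\ \alpha\beta\gamma}$, so $B=0$ immediately forces $L=0$. The nontrivial direction would use the homogeneity identity just established, which expresses $\nabla^{VC}C$ as a quadratic polynomial in $C$ plus a term algebraic in $C$ and the radial projector. The plan is to feed this relation into the standard Finslerian identity that expresses the Berwald curvature as the vertical derivative of the Landsberg tensor plus correction terms built from $C$ and its vertical Cartan derivative: schematically, $B^\alpha_{\ \beta\gamma\delta}=\dot\partial_\delta L^\alpha_{\ \beta\gamma}+(\text{terms in }C\text{ and }\nabla^{VC}C)$. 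Homogeneity turns the correction terms into polynomial expressions in $C$ that, when combined with the index symmetries of $B$ and the trace identity $y_\mu B^\mu_{\ \alpha\beta\gamma}=-2L_{\alpha\beta\gamma}$, reduce to contractions of $L$. Hence $L=0$ implies $B=0$.

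The main obstacle is the second part. Part one is essentially a clean dictionary exercise once the previous section's identifications are in place, whereas part two requires selecting the right connection (Berwald versus Cartan) and the right identity relating $B$, $L$ and $\nabla^{VC}C$, and then checking that under the specific algebraic form imposed by homogeneity every residual term is controlled by the Landsberg tensor. The bookkeeping is the delicate step, because in the literature such identities appear in several inequivalent-looking forms; the Lorentz signature and the non-compactness of the indicatrix require extra care that the standard manipulations survive the sign conventions fixed in Section \ref{rho}.
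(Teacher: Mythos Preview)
Your treatment of the first equivalence is correct and matches the paper exactly; indeed the paper regards that part as already established by the preceding paragraph and opens the proof with ``We need only to prove the last statement.''

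For the second equivalence your route diverges from the paper's, and the divergence is where the gap lies. The identity you invoke schematically, $B^\alpha_{\ \beta\gamma\delta}=\dot\partial_\delta L^\alpha_{\ \beta\gamma}+(\text{terms in }C,\nabla^{VC}C)$, is not available in that form: differentiating $-2L_{\alpha\beta\gamma}=y_\sigma B^\sigma_{\ \alpha\beta\gamma}$ vertically produces a term $y^\sigma\dot\partial_\delta B_{\sigma\alpha\beta\gamma}$ on the right, and this does not close to an algebraic expression in $C$ and $\nabla^{VC}C$. So the ``bookkeeping'' you flag as delicate is in fact obstructed, not merely tedious; you would be trying to control $B$ through a relation that still contains vertical derivatives of $B$.

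The paper's argument is short and uses the \emph{horizontal} direction instead. Under the Landsberg hypothesis $0=L_{\alpha\beta\gamma}=y^\mu\nabla^H_\mu C_{\alpha\beta\gamma}$, the identity \cite[Eq.~(58)]{minguzzi14c} gives directly $G_{\alpha\beta\gamma\delta}=-y^\mu\nabla^H_\mu C_{\alpha\beta\gamma\delta}$. Now the displayed tensor differs from $C_{\alpha\beta\gamma\delta}$ only by quadratic products $C\!\cdot\! C$ and by terms of the form $C\!\cdot\! y_\nu/g_y(y,y)$; applying $y^\mu\nabla^H_\mu$ to either kind yields Landsberg contractions (since $\nabla^H$ annihilates $g$ and $y$), hence zero under the hypothesis. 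Therefore $G_{\alpha\beta\gamma\delta}=-y^\mu\nabla^H_\mu(P\circ\nabla^{VC}C)_{\alpha\beta\gamma\delta}$, and homogeneity makes the right-hand side identically zero, i.e.\ the space is Berwald. The point you missed is that the homogeneity tensor is not to be fed into a vertical identity for $B$; it simply replaces $C_{\alpha\beta\gamma\delta}$ inside a horizontal one that already isolates the Berwald curvature once Landsberg is assumed.
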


\begin{proof}
We need only to prove the last statement. Under the Landsberg assumption $0=L_{\alpha \beta \gamma}=y^\mu \nabla^H_\mu C_{\alpha \beta \gamma}$ thus by \cite[Eq.\ (58)]{minguzzi14c} $G_{\alpha \beta \gamma \delta}= -y^\mu \nabla^H_\mu C_{\alpha \beta \gamma \delta}=-y^\mu \nabla^H_\mu (P\circ \nabla^{VC} C)_{\alpha \beta \gamma \delta}$. Under homogeneity the last tensor vanishes hence the thesis.
\end{proof}

\section{Applications to anisotropic relativity}

In this section we apply some deep mathematical results on affine sphere theory to spacetime physics.

\subsection{Classification of affine spheres and cone structures}
A non-degenerate hypersurface on  affine space having positive definite affine metric is said to be {\em affine complete} if it is complete with respect to the affine metric.
A non-degenerate hypersurface on  affine space   is {\em Euclidean complete} if it is complete with respect to the Euclidean metric induced by the coordinates placed on the affine space. Clearly, the latter notion is independent of the chosen Cartesian coordinate system.

These two notions of completeness are independent but Trudinger and Wang proved that for strictly convex smooth affine hypersurfaces and $n\ge 2$ affine completeness implies Euclidean completeness \cite[Theor.\ 5.1]{trudinger02}.

Observe that the notion of affine completeness makes sense only for definite hypersurfaces. The classification of definite affine spheres has been completed thanks to the work of several  mathematicians. We refer the reader to the reviews by Trudinger-Wang \cite{trudinger08} and Loftin \cite{loftin10} for more details on the theory of affine spheres.


A result due to Blaschke \cite{blaschke23} ($n=3$) and Deicke \cite{deicke53,brickell65} (any $n$) further extended by  Calabi \cite{calabi72}, and Cheng and Yau \cite{cheng86} reads
\begin{theorem}
Any definite elliptic affine sphere is an ellipsoid provided it  satisfies any among the following conditions: (a) compactness, (b) affine completeness, (c) Euclidean completeness.
\end{theorem}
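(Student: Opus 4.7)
My plan is to reduce the problem to showing that the Pick cubic form of the Blaschke structure vanishes identically, and then invoke the Maschke--Pick--Berwald theorem already used in the paper (Theorem \ref{hqp} in its hypersurface version). Since an elliptic definite affine sphere that is contained in a hyperquadric must be a bounded piece of an ellipsoid, and each of the three completeness assumptions forbids the hypersurface from being merely an open piece, this reduction is sufficient. On an affine sphere the shape operator is $S = H\,I$ with $H$ constant, so in the elliptic case we can normalize to $H = 1$ after rescaling (cf.\ the remark after Theorem \ref{lpo}).

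The analytic heart of the argument is a Bochner/Simons-type identity for the Pick invariant
\[
J = \tfrac{1}{n(n-1)}\, h_B(c_B, c_B),
\]
expressing $\Delta_{h_B} J$ as a sum of $|\nabla^{h_B} c_B|^2$, a manifestly nonnegative quartic term in $c_B$, and a term of the form $2H\,J$ coming from the affine Gauss--Codazzi equations $Ric^{\nabla}\propto h$ of Proposition \ref{ein} combined with the apolarity condition. Concretely, one differentiates the Codazzi equation for $c_B$ twice, commutes covariant derivatives using the curvature of $\nabla^{h_B}$ expressed through $h_B$, $c_B$ and $H$, and contracts. The outcome, modulo sign conventions, is an inequality of the shape $\Delta_{h_B} J \ge 2 H J$ with $H > 0$.

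Case (a), compactness, is then immediate: integrating $\Delta_{h_B} J \ge 2 H J$ over the closed manifold $N$ gives $0 \ge 2H\int_N J\, \omega_{h_B}$, forcing $J\equiv 0$ and hence $c_B \equiv 0$, so $N$ lies in a quadric; by compactness and strict convexity it is a full ellipsoid. Case (b), affine completeness with respect to $h_B$, replaces integration with the Omori--Yau maximum principle applied to $J$ (or to $\log(J+\epsilon)$ together with a suitable cutoff): on an affinely complete manifold of bounded Ricci from below (which here follows from $Ric^{\nabla^{h_B}}$ being controlled by $H$ and $|c_B|^2$, modulo a standard a priori estimate) one extracts a sequence realizing $\sup J$, at which $\Delta_{h_B} J \le 0$ and $\nabla J \to 0$, forcing $\sup J = 0$; this is Calabi's argument. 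Case (c), Euclidean completeness, is reduced to case (b) via the Cheng--Yau theorem, which produces the Monge--Amp\`ere (Cheng--Yau) metric $\hat g$ on the enveloping cone (Theorem \ref{poh}) and uses the interior $C^{2,\alpha}$ estimates of Cheng--Yau for the Monge--Amp\`ere equation (\ref{hhw}) to show that Euclidean completeness of the indicatrix forces affine completeness of $h_B$.

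The main obstacle, and the step where I expect the real technical work to live, is the Bochner identity with the correct sign of the $H J$ term and a workable a priori bound on $|c_B|^2$ that legitimizes the maximum principle in case (b); this is exactly where Calabi's and Cheng--Yau's arguments are most delicate. Assuming these ingredients, the reduction (c) $\Rightarrow$ (b) $\Rightarrow$ (a)-style vanishing of $c_B$ combined with Maschke--Pick--Berwald yields the stated classification.
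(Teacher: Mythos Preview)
The paper does not prove this theorem at all: it is stated as a classical result and attributed to Blaschke, Deicke, Calabi, and Cheng--Yau with citations \cite{blaschke23,deicke53,brickell65,calabi72,cheng86}. There is nothing to compare your attempt against; the paper treats the statement as background and moves on.

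Your sketch is, in broad outline, a reasonable account of how the literature proves this. The reduction to $c_B\equiv 0$ via Maschke--Pick--Berwald is the right target, and the Bochner/Simons-type inequality for the Pick invariant together with a maximum principle is exactly Calabi's strategy for (b). Two cautions. First, your handling of case (c) is off: Theorem \ref{poh} and the Cheng--Yau cone metric concern the \emph{hyperbolic} setting (the timelike cone of a Lorentz--Finsler structure), not the elliptic one, so invoking them here is a category error; the actual reduction (c)$\Rightarrow$(b) for elliptic affine spheres goes through separate estimates (see \cite{cheng86}), not through the paper's Monge--Amp\`ere machinery on $\Omega_x$. Second, in case (b) you correctly flag the a priori Ricci lower bound needed for Omori--Yau as the delicate point, but you should not expect it to follow from ``$Ric^{\nabla^{h_B}}$ being controlled by $H$ and $|c_B|^2$'' alone, since $|c_B|^2$ is precisely what is not yet bounded; Calabi's argument instead works directly with a cleverly chosen test function and does not presuppose such a bound.
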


The classification of definite parabolic affine spheres is due to  J\"orgens \cite{jorgens54}, Pogorelov \cite{pogorelov72}, Calabi \cite{calabi58}, Cheng and Yau \cite{cheng86}

\begin{theorem} \label{ksw}
Any definite parabolic affine sphere is an elliptic paraboloid provided it  satisfies any among the following conditions: (a) it is  a properly embedded\footnote{The embedding is proper if the inverse image of compact sets is compact. Roughly, the hypersurface has no `edge'. This is always the case for the indicatrices of Lorentz-Finsler geometry.} convex hypersurface, (b) affine completeness.
\end{theorem}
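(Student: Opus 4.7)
The plan is to reduce the statement to the classical Liouville-type rigidity theorem for the Monge-Amp\`ere equation $\det u_{ij}=\mathrm{const}$ on $\mathbb{R}^n$. By Theorem \ref{lpp}, applied in a basis whose first vector $e_0$ points in the (constant) direction of the Blaschke normal, a definite parabolic affine sphere is locally a graph $f({\bm v})=(-u({\bm v}),{\bm v})$ over a convex open set $\Omega\subset \mathbb{R}^n$, with $u$ strictly convex (since $h$ is positive definite, $\epsilon=1$) and solving $\det u_{ij}=\rho^{2}c^{n+2}$, a positive constant. The theorem then reduces to the claim that such a $u$ extends to all of $\mathbb{R}^n$ and is there a convex quadratic polynomial, whose graph is indeed an elliptic paraboloid.

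The first step is to promote the local graph description to a global one and then to identify $\Omega$ with $\mathbb{R}^n$. Convexity of the hypersurface together with the constancy of the transverse direction shows that the hypersurface projects injectively onto its image $\Omega$ in a hyperplane transverse to $e_0$, so it is a single global graph. Under hypothesis (a), suppose $\Omega\neq \mathbb{R}^n$ and pick $q\in \partial\Omega$. If $u$ stays bounded as ${\bm v}\to q$, the graph has an accumulation point that is not in the hypersurface, violating properness of the embedding; if instead $u({\bm v})\to +\infty$, convexity combined with the constant Monge-Amp\`ere right-hand side forces a support hyperplane whose associated boundary half-line is missing from the graph, again contradicting properness. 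Under hypothesis (b), affine completeness of the affine metric $h_{ij}dv^{i}dv^{j}=c^{-1}u_{ij}dv^{i}dv^{j}$ on $\Omega$ forces $\Omega=\mathbb{R}^{n}$; this is the essential analytic content of Cheng--Yau, which controls $|Du|$ in terms of the intrinsic affine distance and thereby prevents $\partial\Omega$ from lying at finite affine distance.

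Once $\Omega=\mathbb{R}^n$, the J\"orgens--Calabi--Pogorelov theorem applies: a strictly convex $C^{2}$ function $u\colon \mathbb{R}^{n}\to \mathbb{R}$ with $\det u_{ij}$ equal to a positive constant is a quadratic polynomial. The Hessian is then a positive definite constant matrix, the graph is an elliptic paraboloid, and the theorem follows. The main obstacle is plainly the combination of these two global ingredients: the Cheng--Yau gradient/barrier estimates needed for the implication (b)$\Rightarrow\Omega=\mathbb{R}^{n}$ are nontrivial, and the final Liouville rigidity, while classical, rests in dimensions $n\ge 3$ on Calabi's integral identities (using $\mathrm{tr}\,S=0$) or on Pogorelov's interior $C^{2}$ estimates. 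The affine-differential-geometric setup of this section produces the PDE, but the depth of the result sits on the PDE side rather than in the geometric translation.
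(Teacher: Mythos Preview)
The paper does not give its own proof of this theorem. It is stated as a classical result attributed to J\"orgens, Pogorelov, Calabi, and Cheng--Yau, with references to the original papers; the surrounding text is a review of the classification of definite affine spheres, not an argument. So there is no paper proof to compare against.

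Your sketch is the standard route and correctly identifies the two analytic pillars: the reduction via Theorem~\ref{lpp} to a convex solution of $\det u_{ij}=\mathrm{const}$ on a convex domain $\Omega\subset\mathbb{R}^n$, and then the J\"orgens--Calabi--Pogorelov rigidity once $\Omega=\mathbb{R}^n$. The one place where your outline is genuinely thin is the passage from hypothesis (a) to $\Omega=\mathbb{R}^n$. The dichotomy ``$u$ bounded vs.\ $u\to+\infty$ at $q\in\partial\Omega$'' does not by itself dispose of the second case: a convex function blowing up at a finite boundary is perfectly compatible with a proper embedding (the graph is closed), and your sentence about ``a support hyperplane whose associated boundary half-line is missing'' is not an argument. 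What actually rules this out is an analytic estimate---e.g.\ a Pogorelov-type interior second-derivative bound or the Cheng--Yau barrier argument---showing that a solution of $\det u_{ij}=\mathrm{const}$ on a bounded convex domain with $u\to+\infty$ at the boundary cannot exist (equivalently, that Euclidean completeness of the graph forces $\Omega=\mathbb{R}^n$). You acknowledge in the final paragraph that the depth sits on the PDE side; just be aware that the geometric dichotomy you wrote for (a) does not carry that weight on its own.
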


Finally, the classification of definite  hyperbolic affine spheres was conjectured by Calabi \cite{calabi72} and proved by Cheng and Yau \cite{cheng77,cheng86} (see also Calabi and Nirenberg, unpublished \cite{cheng86}). This proof was  improved and clarified by Gigena \cite{gigena81}, Sasaki \cite{sasaki80} and A.-M. Li \cite{li92}.

\begin{theorem} \label{hop}
For a definite hyperbolic affine sphere the following properties are equivalent: (a) Euclidean completeness, (b) affine completeness, (c) properly embedded.

Any such affine sphere $N$ is asymptotic to the boundary $\p C$ of an open convex sharp cone $C$ given by the convex hull of $N$ with its center. Conversely, any sharp open convex cone $C$ contains, up to rescalings, a unique properly embedded affine sphere $N$ which is asymptotic to $\p C$.
\end{theorem}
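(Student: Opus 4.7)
The plan is to convert the affine-sphere problem into a Dirichlet problem for a Monge-Amp\`ere equation on a convex domain and then invoke the existence and uniqueness theory for that PDE. By the rescaling remark following Theorem \ref{lpo} we may assume $H=-1$. Place the center of the sphere at the origin of $V$, pick an affine hyperplane $P\subset V$ transverse to every ray from $0$ that meets $N$, and project $N$ radially onto $P$. Since $N$ is convex and star-shaped about $0$, this realises $N$ as the graph ${\bm v}\mapsto -u({\bm v})^{-1}(1,{\bm v})$ of a smooth strictly convex function $u:D\to (-\infty,0)$ on an open convex $D\subset P\simeq\mathbb{R}^n$, and Theorem \ref{lpo} shows that $u$ obeys the Monge-Amp\`ere equation
\begin{equation}
\det u_{ij}=\rho^2 (-u)^{-(n+2)},
\end{equation}
with positive definite affine metric $h_{ij}=-u_{ij}/u$ by Eq.\ (\ref{dio}).

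For the equivalence of the three completeness notions I would argue as follows. The implication (b)$\Rightarrow$(a) is the Trudinger-Wang theorem already cited. For (a)$\Rightarrow$(c), a Euclidean-complete convex hypersurface is closed in $V$ and hence the embedding is proper. The substantive step is (c)$\Rightarrow$(b): proper embedding forces $u$ to extend continuously by zero to the topological boundary $\p D$, which allows one to identify the cone $C$ generated by $\{1\}\times D$ with the convex hull of $N\cup\{0\}$. This $C$ is open, convex and sharp, and $N$ is asymptotic to $\p C$. The Monge-Amp\`ere equation together with $u|_{\p D}=0$ then yields, via comparison with explicit barriers, a quantitative blow-up estimate for $-u_{ij}/u$ near $\p D$ strong enough that every curve approaching $\p D$ has infinite $h$-length. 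This gives $h$-completeness of $D$, i.e.\ affine completeness of $N$, and simultaneously establishes the asymptotic statement of the theorem.

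For the converse, given an open sharp convex cone $C$, I would slice it with an affine hyperplane transverse to all its rays to obtain a bounded convex open set $D$. The Cheng-Yau theorem \cite{cheng77,cheng86} furnishes a unique smooth strictly convex solution $u$ of the Dirichlet problem
\begin{equation}
\det u_{ij}=\rho^2(-u)^{-(n+2)}, \qquad u|_{\p D}=0,
\end{equation}
and the graph ${\bm v}\mapsto -u^{-1}(1,{\bm v})$ is then a properly embedded hyperbolic affine sphere asymptotic to $\p C$ by the preceding paragraph. A different choice of transverse hyperplane would replace $u$ by a positive multiple and hence the sphere by a homothety centered at $0$, which is precisely the ``up to rescalings'' clause; combined with uniqueness on a fixed hyperplane this yields the full uniqueness statement.

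The principal obstacle is the Cheng-Yau existence and regularity theory for the Monge-Amp\`ere PDE above: one needs interior $C^0$, $C^1$ and $C^2$ a priori estimates together with delicate barrier constructions near $\p D$, and the bulk of the labour in \cite{cheng77,cheng86} is devoted exactly to this step. A secondary difficulty is the quantitative blow-up of the affine metric near $\p D$ required to pass from properness to affine completeness, which again leans on comparison arguments built from the same PDE.
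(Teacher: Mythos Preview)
The paper does not give its own proof of this theorem; it is quoted as a classical result (Calabi's conjecture, proved by Cheng--Yau and refined by Gigena, Sasaki and Li), and the paper merely remarks that it ``is based on Theor.\ \ref{lpo} and on the next result by Cheng and Yau'' (Theorem \ref{cyu}).  Your outline follows exactly this route---Gigena's radial parametrisation reduces the affine-sphere condition to the Monge--Amp\`ere Dirichlet problem, and Cheng--Yau supply existence, uniqueness and the boundary behaviour---so you are in full agreement with the paper's indicated strategy and with the literature it cites.

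One correction to your final paragraph: a different transverse hyperplane does \emph{not} replace $u$ by a positive multiple; it gives a projectively transformed domain and function via the formulas (\ref{pro1})--(\ref{pro3}), and the affine sphere obtained is the \emph{same} hypersurface in $V$, not a homothety of it.  The ``up to rescalings'' clause refers instead to the freedom in the affine mean curvature $H<0$ (equivalently in the choice of volume form $\rho$, cf.\ Remark \ref{naa}): for each $H$ there is exactly one asymptotic sphere, and these form a one-parameter homothetic family.  Uniqueness of the Cheng--Yau solution on a fixed hyperplane with fixed $\rho$ already pins down the sphere with $H=-1$; the other rescalings are the spheres with other values of $H$.
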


This result is extremely important because it shows that (definite) hyperbolic affine spheres and sharp convex cones are essentially the same object. It is based on Theor.\ \ref{lpo} and on the next result by Cheng and Yau \cite[Theor.\ 6]{cheng77}

\begin{theorem}[Cheng and Yau] \label{cyu}
Let $\rho >0$ and suppose that $D$ is a bounded convex domain in $\mathbb{R}^n$. Then there exists a unique continuous convex function $u$ on $\bar{D}$ such that $u \in C^\infty(D)$, $u$ satisfies $\det (u_{ij})=\rho^2(-1/u)^{n+2}$, and $u=0$ on $\p D$.
\end{theorem}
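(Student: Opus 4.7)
The plan is to establish uniqueness by a comparison argument and existence by solving regularized problems on an exhaustion of $D$ by smooth strictly convex subdomains, then passing to the limit using interior a priori estimates.

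Uniqueness follows directly from the strict monotonicity of $(-1/u)^{n+2}$ in $u<0$: if $u_1,u_2$ are two convex continuous solutions vanishing on $\partial D$ and $\max_{\bar D}(u_2-u_1)>0$ is attained at an interior point $x_0$, then at $x_0$ the Hessian $(u_2)_{ij}-(u_1)_{ij}\le 0$ as a matrix, so $\det(u_2)_{ij}\le\det(u_1)_{ij}$ (both positive definite). On the other hand $u_2(x_0)>u_1(x_0)$ gives $(-1/u_2(x_0))^{n+2}<(-1/u_1(x_0))^{n+2}$, and the Monge-Amp\`ere equation then forces $\det(u_2)_{ij}>\det(u_1)_{ij}$, a contradiction. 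Symmetry yields $u_1=u_2$.

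For existence, I would exhaust $D$ by smooth strictly convex subdomains $D_k$ with $\bar D_k\subset D_{k+1}$ and $\bigcup_k D_k=D$. On each $D_k$ I would regularize the singular nonlinearity by solving, for small $\delta>0$,
\[
\det u_{ij}=\rho^2(\delta-u)^{-(n+2)},\qquad u=0\text{ on }\partial D_k.
\]
The right-hand side is smooth, positive, and nondecreasing in $-u$, so this is a classical Dirichlet problem for a non-degenerate Monge-Amp\`ere equation on a smooth strictly convex domain; existence of a unique convex $C^\infty(\bar D_k)$ solution $u_{k,\delta}$, strictly negative inside $D_k$ by the maximum principle, is supplied by the Caffarelli-Nirenberg-Spruck (or Krylov) theory.

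The key analytical step is a priori estimates on $u_{k,\delta}$ uniform in $k$ and $\delta$. I would first construct an explicit convex supersolution of the singular limiting equation of the form $\psi(x)=-A\,\mathrm{dist}(x,\partial D)^{2/(n+2)}$, suitably rounded, together with a matching subsolution, and by comparison bound $u_{k,\delta}\le -c_K<0$ on each compact $K\Subset D$. On such $K$ the right-hand side lies between two positive constants, so the Pogorelov interior $C^2$ estimate for Monge-Amp\`ere, combined with Evans-Krylov $C^{2,\alpha}$ and Schauder bootstrap, provides uniform $C^m_{\mathrm{loc}}(D)$ bounds for every $m$. The comparison principle gives monotonicity of $u_{k,\delta}$ in $\delta$ and in $k$, so the limit $u=\lim_{\delta\to 0}\lim_{k\to\infty} u_{k,\delta}$ exists and, by Arzel\`a-Ascoli, solves $\det u_{ij}=\rho^2(-1/u)^{n+2}$ classically in $D$.

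It remains to show $u\in C(\bar D)$ with $u=0$ on $\partial D$. Using convexity of $D$, at each boundary point $x_0$ one places a supporting half-space and compares $u$ with explicit radial solutions on half-spaces or on balls tangent from inside and from outside that vanish at $x_0$; these serve as upper and lower barriers forcing $u(x)\to 0$ as $x\to x_0$. I expect the main obstacle to be the precise construction of the sub/supersolutions and the Pogorelov-type interior estimate: both must control the blow-up rate of $D^2u$ near $\partial D$, where the equation degenerates, while still yielding uniform bounds on fixed compact interior sets. Without the correct rate $\mathrm{dist}(\cdot,\partial D)^{2/(n+2)}$, monotone convergence of the approximating solutions could fail or the limit could be non-smooth or fail to attain the boundary data continuously.
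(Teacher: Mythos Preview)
The paper does not supply its own proof of this theorem: it is stated as Theorem~\ref{cyu} with attribution to Cheng and Yau~\cite{cheng77} and immediately used as a black box to construct the Lorentz-Finsler Lagrangian from the cone data. So there is no ``paper's proof'' to compare against; you are essentially sketching a proof of the cited result itself.

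That said, a few comments on your sketch. The uniqueness argument via the comparison principle is correct and standard. The existence strategy (exhaust, regularize, interior Pogorelov/Evans--Krylov estimates, pass to the limit, barriers for the boundary data) is the right architecture and is close in spirit to how such singular Monge--Amp\`ere problems are handled, though the original Cheng--Yau argument predates the Caffarelli--Nirenberg--Spruck theory you invoke; their paper develops the needed interior second-derivative estimates directly. Your barrier exponent is off: for the ball $D=B_1$ the explicit solution is $u=-c(1-|x|^2)^{1/2}$, and more generally near a smooth strictly convex boundary the solution behaves like $-c\,\mathrm{dist}(\cdot,\partial D)^{1/2}$ (this is precisely the $C^{0,1/2}$ regularity of Cheng--Yau quoted later in the paper), not $\mathrm{dist}^{2/(n+2)}$. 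Finally, the theorem as stated allows $D$ to be merely bounded and convex, with no smoothness or strict convexity assumed on $\partial D$; your barrier construction ``tangent balls from inside and outside'' tacitly uses both, so at a genuinely flat or non-smooth boundary point you would need a different comparison (e.g.\ with affine-sphere solutions over simplices or half-spaces containing $D$), which is where the actual work in~\cite{cheng77} lies.
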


So suppose to have been given a cone $\Omega_x\subset T_xM$ and introduce coordinates $\{y^\alpha\}$ as in  Theor.\ \ref{lpo} in such a way that the hyperplane $y^0=1$ cuts the cone in a section $(1,D)$, with $D$ bounded  convex domain.
By Theor.\ \ref{lpo} given the solution $u$ on $ \bar D$, the embedding
\[
{\bm v} \mapsto f({\bm v})= -\frac{1}{u({\bm v})} (1,{\bm v}), \qquad {\bm v} \in D
\]
is an affine sphere with  affine mean curvature $H=-1$, which is asymptotic to $\p \Omega_x$. The corresponding Finsler Lagrangian is found imposing $-1=2\mathscr{L}(x,f({\bm v}))=2(-\frac{1}{u})^2\mathscr{L}(x,(1,{\bm v}))$, which gives $\mathscr{L}_x\vert_{(1,D)}=-u^2/2$, where the left-hand side is the restriction of the Finsler Lagrangian to the intersection between the hyperplane $y^0=1$ and the convex cone.  By positive homogeneity the Finsler Lagrangian is then determined on the whole cone $\bar{\Omega}_x$. From this equation  we read the regularity of $\mathscr{L}$ from that of $u$, in particular $\mathscr{L}_x\in C^\infty(\Omega_x)$   and it  can be continuously extended setting $\mathscr{L}=0$ on $\p \Omega$.
The relationship between $u$ and $\mathscr{L}$ is
\begin{align}
\mathscr{L}(x,(y^0,{\bm y}))&=-\frac{1}{2} (y^0)^2 u^2({\bm y}/y^0), \label{osd}\\
u(x,{\bm v})&=-\sqrt{-2\mathscr{L}(x,(1,{\bm v}))}. \label{sop}
\end{align}
We can summarize this result as follows
\begin{theorem} \label{cor}
Given an open convex sharp cone $\Omega_x\subset T_xM$ and a (vertically translational invariant) measure $\mu_x=\rho(x)\dd^{n+1} y$ there is one and only one Lorentz-Finsler Lagrangian $\mathscr{L}_x$ on $\Omega_x$ (so $\mathscr{L}_x<0$ on $\Omega_x$ and it converges to zero at $\p \Omega_x$) having vanishing mean Cartan torsion and such that $-\det \dd^2\mathscr{L}_x=\rho^2$. This Lagrangian is $C^\infty(\Omega_x)$ and its indicatrix is an affine sphere with affine mean curvature $H=-1$ with respect to $\mu_x$.
\end{theorem}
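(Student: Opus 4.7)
The plan is to reduce the theorem to the existence/uniqueness statement of Cheng–Yau (Theor.\ \ref{cyu}) via the dictionary between Lorentz–Finsler Lagrangians on the cone and solutions of the Monge–Amp\`ere equation provided by Theor.\ \ref{lpo}. Choose linear coordinates $(y^0,{\bm y})$ on $T_xM$ with the $y^0$-axis in $\Omega_x$, so that the affine hyperplane $\{y^0=1\}$ cuts $\Omega_x$ along a bounded open convex domain $D\subset \mathbb{R}^n$, and the prescribed measure reads $\mu_x=\rho\,\dd y^0\wedge \dd^n{\bm y}$. Any candidate $\mathscr{L}_x$ is determined on $\Omega_x$ by its restriction to $\{y^0=1\}$ through positive homogeneity, and by (\ref{sop}) this restriction is $u({\bm v}):=-\sqrt{-2\mathscr{L}(x,(1,{\bm v}))}$, a negative convex function on $D$ vanishing on $\p D$ (the vanishing following from $\mathscr{L}\to 0$ at $\p \Omega$).

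Existence: apply Theor.\ \ref{cyu} to obtain the unique continuous convex $u\in C^\infty(D)\cap C(\bar D)$ with $u=0$ on $\p D$ and
\[
\det u_{ij}=\rho^2\,(-1/u)^{n+2}.
\]
Define $\mathscr{L}_x$ on $\Omega_x$ by (\ref{osd}), namely $\mathscr{L}_x(y^0,{\bm y})=-\tfrac{1}{2}(y^0)^2 u^2({\bm y}/y^0)$. Smoothness of $u$ on $D$ gives $\mathscr{L}_x\in C^\infty(\Omega_x)$; the vanishing of $u$ on $\p D$ together with homogeneity gives $\mathscr{L}_x\to 0$ on $\p\Omega_x$; and $\mathscr{L}_x<0$ on $\Omega_x$ by construction. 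By Theor.\ \ref{lpo} (in the $\epsilon=1$, i.e.\ Riemannian affine metric, case appropriate for the indicatrix of a Lorentz–Finsler space), the image $f({\bm v})=-(1,{\bm v})/u({\bm v})$ is a properly embedded hyperbolic affine sphere in $\Omega_x$ with affine mean curvature $H=-1$ relative to $\mu_x$; and because $\mathscr{L}_x(f({\bm v}))=-\tfrac{1}{2}$, this affine sphere is precisely the indicatrix $\mathscr{I}^-_x$ of $\mathscr{L}_x$.

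The Lorentzian signature of $g_y=\dd^2\mathscr{L}_x$ follows from the polar decomposition (\ref{war}) since $h$ is positive definite on the indicatrix (the affine metric coming from the convex function $u$). Theor.\ \ref{rel} then yields $I_\alpha=0$ and the identity (\ref{hhw}), which in our normalization reads $|\det g_{\alpha\beta}|=\rho^2$, i.e.\ $-\det \dd^2\mathscr{L}_x=\rho^2$.

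Uniqueness: any other $\tilde{\mathscr{L}}_x$ meeting the hypotheses satisfies $\tilde I_\alpha=0$ and (via Theor.\ \ref{rel}) has an indicatrix which is a properly embedded hyperbolic affine sphere in $\Omega_x$ with $H=-1$ with respect to $\mu_x$. The corresponding $\tilde u$ defined by (\ref{sop}) is negative, convex, $C^\infty(D)$, continuous on $\bar D$ with $\tilde u=0$ on $\p D$, and (by Theor.\ \ref{lpo}) solves the same Monge–Amp\`ere equation as $u$. Theor.\ \ref{cyu} then forces $\tilde u=u$, hence $\tilde{\mathscr{L}}_x=\mathscr{L}_x$. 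The only point requiring care — and arguably the main conceptual step — is the identification of the two Monge–Amp\`ere equations (the analytic one of Cheng–Yau and the geometric one of Theor.\ \ref{lpo}) together with the verification that the chosen affine slicing of the cone does not affect the conclusion; the latter follows from the projective covariance (\ref{pro1})–(\ref{pro3}) established in Sect.\ \ref{pas}, since the objects $\mathscr{L}_x$, $\mu_x$ and the indicatrix are manifestly independent of the choice of coordinates on $T_xM$. $\square$
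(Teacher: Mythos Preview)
Your proof is correct and follows essentially the same route as the paper: slice the cone by an affine hyperplane to obtain a bounded convex domain $D$, invoke Cheng--Yau (Theor.\ \ref{cyu}) for existence and uniqueness of $u$, and pass back and forth between $u$ and $\mathscr{L}_x$ via (\ref{osd})--(\ref{sop}), with Theor.\ \ref{lpo} and Theor.\ \ref{rel} supplying the geometric translation (affine sphere, $H=-1$, $I_\alpha=0$, and the determinant identity). The paper presents this argument more tersely in the paragraphs immediately preceding the theorem rather than as a formal proof, but your added care in verifying the Lorentzian signature via (\ref{war}) and in noting the projective covariance (\ref{pro1})--(\ref{pro3}) for coordinate independence is appropriate and does not depart from the paper's logic.
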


In Lorentz-Finsler gravity theories one obtains most scalars and tensors of physical interest from the Finsler Lagrangian. For instance, the proper time over a curve $x\colon I \to M$, $t \mapsto x(t)$, is $\int \!\!\sqrt{-2 \mathscr{L}(x(t), \dot x(t))}\, \dd t$.
The minus proper time over a curve  multiplied by the mass of the particle gives the action of the particle (it is locally minimized over geodesics). Let us write it for a unit mass particle
\begin{equation} \label{mos}
S=\int u(x(t),   {\bm v}(t)) \, \dd t.
\end{equation}
The projective coordinates $ {\bm v}$ and the function $u$ transform as in Eqs.\ (\ref{pro1})-(\ref{pro3}) under change of coordinates on $M$.

From the above equation we conclude that $u$ is
the Lagrangian per unit mass. It has  to be distinguished from the Finsler  Lagrangian $\mathscr{L}$. We shall return on this interpretation in connection with the Legendre transform in Sect.\ \ref{leg}.

Theorem \ref{cor} leaves open some interesting related questions:
\begin{itemize}
\item[($\alpha$)] Can the regularity of $\mathscr{L}$ at the boundary be improved perhaps improving the regularity of the convex cone?
\item[($\beta$)] Can the Lorentz-Finsler Lagrangian be continuously extended beyond the cone $\bar{\Omega}$ preserving a vanishing mean Cartan torsion?
\item[($\gamma$)] Given a sufficiently regular convex cone is it possible to find a Lorentzian definite affine sphere which is asymptotic to the cone and its opposite?
\end{itemize}

As we mentioned most of causality theory and mathematical relativity depends on just the future causal cone. At first some regularity at the boundary of the cone seems desirable if not necessary in order to define a notion of lightlike geodesic. However, there  are approaches that do not demand such regularity, for instance,  lightlike geodesics could be defined as limits of timelike geodesics, or taking limits of Lorentzian metrics \cite[Remark 2]{minguzzi14h}. In Section \ref{qon} we shall show that it is possible to define the lightlike geodesics for a sufficiently differentiable light cone distribution without making reference to the differentiability of the Lagrangian at the boundary.

Nevertheless, it turns out that question ($\alpha$) above receives a partly positive answer. We recall first a result  by Cheng and Yau \cite[p.\ 53]{cheng77} according to which $u \in C^{0,1/2}(\bar{D})$ for $D$ strongly convex and  with  $C^2$ boundary. Since $u=0$ at $\p D$ this result implies that $\mathscr{L}$ is Lipschitz at the boundary of the cone.

Stronger results can be obtained looking at the graph of $u$, ${\bm v} \mapsto ({\bm v}, u({\bm v}))$. Lin and Wang  proved that though $u$ is not $C^1$ at $\p D$, its graph is a $C^{2,\alpha}$, $\alpha\in(0,1)$, hypersurface with boundary whenever the  boundary  $\p D$ is  $C^{2,\alpha}$ \cite[Theor.\ 4.6]{lin98}. Recently, for smooth boundary $\p D$, Jian and Wang \cite{jian13} proved that the graph of $u$ is a $C^{n+2,\alpha}$ hypersurface  with boundary for every $\alpha \in(0,1)$.


 Now, we need
\begin{proposition} \label{qoi}
If the graph of $u$ is a  $C^{k,1/2}$, $k\ge 3$, hypersurface with boundary, then   $\mathscr{L}_x\in C^{1,1/2}(\bar \Omega_x)$ and $\dd \mathscr{L}_x|_{\p \Omega_x}\ne 0$.
\end{proposition}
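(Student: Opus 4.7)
The positive homogeneity of degree two of $\mathscr{L}_x$ reduces everything to a statement on the transversal section $\{y^0=1\}\cap\bar\Omega_x$: by (\ref{osd}) one has $\mathscr{L}_x(1,\bm v)=-u(\bm v)^2/2$, so it suffices to prove that $u^2\in C^{1,1/2}(\bar D)$ and that $\dd(u^2)\ne 0$ on $\p D$. Both statements then propagate to $\bar\Omega_x\setminus\{0\}$ via the scaling $\mathscr{L}_x(y)=(y^0)^2 \mathscr{L}_x(1,\bm y/y^0)$, while $C^{1,1}$-regularity at the origin is automatic by homogeneity. I work locally: fix $\bm v_0\in\p D$, choose affine coordinates so that $\p D=\{v^1=0\}$ and $D\subset\{v^1>0\}$ near $\bm v_0$, and, since $\p u/\p v^1\to-\infty$ as $v^1\to 0^+$, use $u$ in place of $v^1$ as a transversal coordinate on the graph. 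By hypothesis the graph is a $C^{k,1/2}$ hypersurface with boundary, so locally it is the image of a map
\[
(u,\bm v')\longmapsto (\varphi(u,\bm v'),\bm v',u),\qquad \varphi\in C^{k,1/2},\quad \varphi(0,\bm v')=0,
\]
with $\bm v'=(v^2,\ldots,v^n)$ the coordinates tangential to $\p D$.

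The central step will be to show that $\varphi$ vanishes to order exactly two in $u$, i.e.\ $\varphi(u,\bm v')=u^{2}A(u,\bm v')$ with $A\in C^{k-2,1/2}$ and $A(0,\bm v')>0$. Three facts combine here. First, $\p_u\varphi(0,\bm v')=0$, because otherwise the implicit function theorem would invert the relation and present $u$ as a $C^{k,1/2}$ function of $v^1$ up to the boundary, contradicting the sharp Cheng--Yau Hölder bound $u\in C^{0,1/2}(\bar D)$. Second, the first non-vanishing $u$-derivative of $\varphi$ at $u=0$ must be of even order, since $v^1>0$ for $u$ on either side of $0$. Third, this order cannot be $2m$ with $m\ge 2$, as it would force $|u|\sim (v^1)^{1/(2m)}$, not dominated by $(v^1)^{1/2}$ near $\p D$, again in conflict with $u\in C^{0,1/2}$. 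Positivity of $A(0,\bm v')$ is then forced by $\varphi>0$ for $u\ne 0$. This is the main obstacle: the $C^{k,1/2}$ graph-regularity hypothesis and the Cheng--Yau estimate must be played off against each other to fix the precise order of vanishing.

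With $\varphi=u^{2}A$ in hand, set $w=-u\ge 0$ and $\tilde A(w,\bm v')=A(-w,\bm v')$, so that $v^{1}=w^{2}\tilde A(w,\bm v')$. The map $w\mapsto w\sqrt{\tilde A(w,\bm v')}$ is $C^{k-2,1/2}$, vanishes at $w=0$, and has non-zero derivative $\sqrt{A(0,\bm v')}$ there; the inverse function theorem supplies $w=W(\sqrt{v^{1}},\bm v')$ with $W\in C^{k-2,1/2}$, $W(0,\cdot)=0$, $\p_\sigma W(0,\bm v')=1/\sqrt{A(0,\bm v')}$. Squaring and expanding,
\[
u^{2}\,=\,W(\sqrt{v^{1}},\bm v')^{2}\,=\,\frac{v^{1}}{A(0,\bm v')}+c(\bm v')\,(v^{1})^{3/2}+O\!\left((v^{1})^{2}\right),
\]
which is $C^{1,1/2}$ jointly in $(v^{1},\bm v')$: the lowest non-smooth term is $(v^{1})^{3/2}$, whose derivative is proportional to $(v^{1})^{1/2}\in C^{0,1/2}$, higher Taylor terms are smoother still, and tangential regularity is inherited from $A$. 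Since $\p(u^{2})/\p v^{1}|_{v^{1}=0}=1/A(0,\bm v')\ne 0$, the differential $\dd\mathscr{L}_x=-\dd(u^{2})/2$ is non-zero on $\p\Omega_x\setminus\{0\}$, completing the proof.
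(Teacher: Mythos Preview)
Your approach is the paper's: re-graph the hypersurface over $(u,\bm v')$, show the graphing function vanishes to exactly second order in $u$, and extract the $C^{1,1/2}$ regularity of $u^2$ from that. Two fixable slips. First, you cannot flatten $\partial D$ to $\{v^1=0\}$ by an affine change unless $\partial D$ is flat; the paper keeps $\partial D$ as a graph $v^n=w(\bar v)$ and works with the offset $\Delta v^n=v^n-w(\bar v)$. Second, your parity argument is invalid: on the graph one has $u\le 0$ only, so ``$v^1>0$ for $u$ on either side of $0$'' has no content and cannot exclude odd orders. This is harmless, though, because your third fact already excludes \emph{every} order $j\ge 3$ (if $\varphi\sim a_j u^j$ then $|u|\sim (v^1)^{1/j}$, incompatible with $u\in C^{0,1/2}$ for any $j\ge 3$), so combined with your first fact you get order exactly two without parity. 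The paper reaches the same conclusion slightly differently---it uses that $u^2$ is Lipschitz while $u$ is not to kill the linear term, and invokes convexity of the graphing function (inherited from convexity of the epigraph of $u$) to ensure the second derivative is strictly positive; the latter is the natural replacement for your parity step and is what actually gives $A(0,\bm v')>0$.
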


\begin{proof}
Let us prove this fact near a point $p \in \p D$ at which the line parallel to the $v^n$ axis is transverse to $\p D$ and $v^n$ grows inwardly (the general case follows rotating the axes). Since the derivatives of $u$ diverge the graph of $u$ can be expressed as a graph $v^n=f(\bar v, z)$, where $f$ is convex (due to the convexity of the epigraph of $u$) and such that  $v^n= f(\bar v, u({\bm v}))$, $\bar v=(v^1,\cdots,v^{n-1})$, ${\bm v}=(\bar v,v^n)$, (see \cite[Sect.\ 4]{lin98}). By the assumption on the hypersurface, $f$ is $C^{k,1/2}$.  In other words, the $C^{k,1/2}$ embedding  can be taken to be $(\bar v,z)\mapsto (\bar v,f(\bar v, z),z)$. The Taylor expansion  of $f$ with respect to $z$ at $(\bar v,0)$  gives $\Delta v^n:=v^n-w(\bar v)=h(\bar v) u({\bm v})+ R(\bar v, v^n ) u^2({\bm v})$ where $R$ is a bounded Lagrange remainder which converges to $\p^2_z f (\bar v,0)/2=:R(\bar v, 0 )$ for $u\to 0$ and $w(\bar v)$ is the $C^{k,\alpha}$ graphing function of $\p D$ on $u=0$. Now, $u({\bm v})$ is not Lipschitz, for its derivatives diverge, but we mentioned that $u^2({\bm v})$ is Lipschitz, thus dividing by $\Delta v^n $ and letting $ v^n \to w(\bar v)$ we see that this identity can only hold with $h=0$. Inserting back $h=0$ we also get that $R(\bar v,0)\ne 0$. The argument proves that $\p_z f(\bar v,0)=0$  while $\p_z^2 f(\bar v,0)> 0$, so we have $f(\bar v, z)=w(\bar v)+b(\bar v) z^2+ z^2 c(\bar v, z)$ where $b,c \in C^{k-2,1/2}$, $b>0$. 
Let $ d$ be defined by $d(\bar v, y)=y c(\bar v, \sqrt{y})$ so that $\p_y d(\bar v, 0)=D_{\bar v} d(\bar v, 0)=0$,
  $d\in C^{1,1/2}$.
  The equation $v^n=w(\bar v)+b(\bar v) y+ d(\bar v, y)$ with $y=u^2({\bm v})$ can now be differentiated
giving
\begin{align*}
\p_n u^2(\bar v,v^n)&= [b(\bar v) +D_{y} d(\bar v, u^2({\bm v}))]^{-1},\\
D_{\bar v} u^2(\bar v,v^n)&=-b^{-1}(\bar v) [ D_{\bar v} w(\bar v)+(D_{\bar v} b) u^2+D_{\bar v} d(\bar v, u^2)]
\end{align*}
which are $C^{0,1/2}(\bar D)$. Moreover, on the boundary $u=0$ thus $\p_n u^2=b^{-1}\ne 0$ which implies $\dd \mathscr{L}_x\ne 0$ there. $\square$
\end{proof}

By the mentioned result by Jian, Lin and Wang,  since on spacetime $n\ge 1$, we have
\begin{corollary} \label{hud}
If the light cone $\p \Omega_x$ is a  smooth hypersurface, then   $\mathscr{L}_x\in C^{1,1/2}(\bar \Omega_x)$ and $\dd \mathscr{L}_x|_{\p \Omega_x}\ne 0$.
\end{corollary}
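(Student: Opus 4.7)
The plan is to deduce the corollary by concatenating two ingredients already stated in the excerpt: the Jian--Wang boundary regularity theorem for the Monge--Ampère equation (\ref{mon}) and Proposition \ref{qoi}. The only bookkeeping step is to check that the smoothness hypothesis on the light cone $\p \Omega_x$ transfers to smoothness of the convex base domain $D$ used in the parametrization of Theor.\ \ref{lpo}, so that the Jian--Wang result applies.

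First I would fix the picture from Theor.\ \ref{cor}: choose affine coordinates $\{y^\alpha\}$ on $T_xM$ such that the hyperplane $y^0=1$ meets the convex sharp cone $\Omega_x$ in a bounded section $(1,D)$, with $D\subset\mathbb{R}^n$ a bounded convex domain. Since $\p \Omega_x$ is a smooth hypersurface by hypothesis and the hyperplane $y^0=1$ is transverse to $\p\Omega_x$ (because $\Omega_x$ is sharp), the intersection $\p D = \p\Omega_x\cap\{y^0=1\}$ inherits the smoothness of $\p\Omega_x$; thus $\p D$ is a smooth convex hypersurface in the hyperplane.

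Next I would invoke the theorem of Jian and Wang \cite{jian13} cited in the paragraph preceding Prop.\ \ref{qoi}: the solution $u$ of the Monge--Ampère equation (\ref{mon}) on $\bar D$ with smooth boundary $\p D$ has graph of class $C^{n+2,\alpha}$ for every $\alpha\in(0,1)$. On spacetime we have dimension $m=n+1\ge 2$, hence $n\ge 1$ and therefore $n+2\ge 3$. In particular the graph of $u$ is a $C^{k,1/2}$ hypersurface with boundary for $k=n+2\ge 3$, which is exactly the hypothesis of Prop.\ \ref{qoi}.

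Finally I would apply Prop.\ \ref{qoi} to conclude that $\mathscr{L}_x\in C^{1,1/2}(\bar\Omega_x)$ and $\dd\mathscr{L}_x|_{\p\Omega_x}\ne 0$; the extension from the slice $(1,D)$ to all of $\bar\Omega_x$ is immediate from the relation $\mathscr{L}(x,(y^0,{\bm y}))=-\tfrac{1}{2}(y^0)^2 u^2({\bm y}/y^0)$ of Eq.\ (\ref{osd}), because positive homogeneity of degree two preserves $C^{1,1/2}$ regularity away from the vertex, and the vertex itself is not in $\bar\Omega_x\setminus\{0\}$ (the Finsler Lagrangian is only defined on $\Omega\subset TM\setminus 0$, so the origin plays no role). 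There is no real obstacle here: the statement is essentially a transcription, and the only subtle point is the transversality remark which guarantees that smoothness of the cone passes to smoothness of $\p D$; I would make sure to state this transversality explicitly, since sharpness of $\Omega_x$ is what prevents the section from being unbounded or tangent to the cone.
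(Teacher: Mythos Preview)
Your proposal is correct and follows exactly the approach the paper takes: the corollary is obtained by combining the Jian--Wang regularity result (graph of $u$ is $C^{n+2,\alpha}$ for smooth $\p D$) with Prop.~\ref{qoi}, using $n\ge 1$ to ensure $n+2\ge 3$. The transversality remark and the homogeneity extension you add are reasonable clarifications, but the paper treats the corollary as an immediate consequence without spelling these out.
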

Thus this result shows that some regularity of the Lagrangian at the light cone can be accomplished taking sufficiently regular cones.

This regularity does not allow us to define the lightlike geodesics in the usual way. An alternative and satisfactory method will be given  in Section \ref{qon} where we shall show that this regularity helps to define the affine parameter over lightlike geodesics.

%
%
%
%
%
%
%


Though the Lagrangian could be non-smooth at the cone, it would be nice if we could work directly with a Lagrangian defined all over $TM\backslash 0$ for which the mean Cartan torsion vanishes everywhere, since traditionally the theory of Finsler connections and sprays has been developed on the slit tangent bundle (hence question $\gamma$).
As far as we know this  question has not been investigated, possibly  because the Lorentzian affine sphere asymptotically approaching the cones $\p \Omega_x$ and $-\p \Omega_x$ from outside would be described by a {\em non-elliptic}  Monge-Amp\`ere equation for which maximum principles are not available.

The next result and its interpretation are the main objectives of this work.
\begin{theorem} \label{equ}
Over a manifold $M$ the following three concepts are equivalent:
\begin{itemize}
\item[(a)] 
Lorentz-Finsler Lagrangian with vanishing mean Cartan torsion,
\item[(b)] Volume form and sharp cone distribution over $M$,
\item[(c)] Affine complete, definite, hyperbolic affine sphere subbundle  of the tangent bundle with center in the zero section.
\end{itemize}
\end{theorem}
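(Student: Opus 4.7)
The strategy is to prove the cyclic chain $(a)\Rightarrow(c)\Rightarrow(b)\Rightarrow(a)$, assembling at each step the fiberwise statement already established in this section into a global correspondence on $M$. The organizing principle is that each of the three data is, fiberwise over $x\in M$, in bijection with the next via a theorem that supplies both existence and uniqueness, so the round trip closes automatically.

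For $(a)\Rightarrow(c)$, I would apply Theorem \ref{rel} at each $x\in M$: the indicatrix $\mathscr{I}^-_x$ is an affine sphere centered at the origin of $T_xM$ with $H=-1$ relative to $\sqrt{|\det g_{\alpha\beta}|}\,\dd^{n+1}y$, hence hyperbolic. Its affine metric coincides with the induced metric $h$ of \eqref{bcp}, and $h$ is Riemannian (definite) because $g$ is Lorentzian and $h$ is its restriction to the $g$-orthogonal complement of the timelike direction $y$. Affine completeness is then obtained from Theorem \ref{hop}, whose three equivalent conditions reduce the task to verifying, say, that $\mathscr{I}^-_x$ is properly embedded in $T_xM$: as the level set $F=1$ of a Lagrangian converging to $0$ only at $\partial\Omega_x$, the indicatrix cannot accumulate at any finite interior point of $\Omega_x$ nor, the cone being sharp, at any finite point of $\partial\Omega_x$ except through Euclidean escape to infinity.

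For $(c)\Rightarrow(b)$, Theorem \ref{hop} produces fiberwise, from the affine sphere $N_x$, the sharp convex cone $\Omega_x$ spanned by the convex hull of $N_x$ with its center — which by assumption is the origin of $T_xM$, so $x\mapsto \Omega_x$ is a sharp cone distribution on $M$. Corollary \ref{kwx} then assigns fiberwise the unique translationally invariant volume form on $T_xM$ for which the affine sphere has $|H|=1$; by uniqueness these collate into a volume form on $M$. For $(b)\Rightarrow(a)$, Theorem \ref{cor} directly supplies the unique $\mathscr{L}_x$ on $\Omega_x$ with $I_\alpha=0$ and $-\det \dd^2\mathscr{L}_x=\rho^2$; the fibered output is the required global Finsler Lagrangian. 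The consistency of the cycle — that starting from any vertex and running around it returns the initial datum — is immediate from the uniqueness statements in Theorems \ref{rel}, \ref{cor}, \ref{hop} and Corollary \ref{kwx}, which make each arrow the unique fiberwise map compatible with the next.

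The main obstacle I anticipate is the completeness verification in $(a)\Rightarrow(c)$: one must exclude that a sequence along $\mathscr{I}^-_x$ can Euclidean-converge to a point of $\overline{\Omega}_x$ where $\mathscr{L}$ either fails to be defined or vanishes, and it is this point that singles out the Lorentz-Finsler hypothesis that $\mathscr{L}<0$ on $\Omega$ with $\mathscr{L}\to 0$ at $\partial\Omega$. A secondary technical issue is that the fiberwise constructions in $(c)\Rightarrow(b)$ and $(b)\Rightarrow(a)$ should preserve the chosen regularity class in $x$; here the Cheng-Yau elliptic regularity underlying Theorem \ref{cyu}, together with the uniqueness of the objects produced, turns a fiberwise statement into a smooth correspondence on the base.
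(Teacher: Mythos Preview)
Your proposal is correct and follows essentially the same route as the paper. The only organizational difference is that you run a cyclic chain $(a)\Rightarrow(c)\Rightarrow(b)\Rightarrow(a)$, whereas the paper establishes the pairwise equivalences $(a)\Leftrightarrow(c)$ (via Theorem~\ref{rel} and Eq.~\eqref{jui}) and $(b)\Leftrightarrow(c)$ (via Theorem~\ref{hop} and Corollary~\ref{kwx}); your use of Theorem~\ref{cor} for $(b)\Rightarrow(a)$ is just the composite of the paper's $(b)\Rightarrow(c)\Rightarrow(a)$. Your explicit treatment of affine completeness in $(a)\Rightarrow(c)$---reducing it via Theorem~\ref{hop} to proper embedding of the level set $F=1$---is in fact more careful than the paper, which simply takes the indicatrix without comment.
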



The reader might want to check the next proof environment for details on the correspondence.


\begin{proof}
Given (a) we obtain (c) taking the indicatrix subbundle $\mathscr{I}^-$. Conversely, given (c) we obtain (a) using Eq.\ (\ref{jui}).

Given (c) we obtain (b) selecting first the volume form which assigns to the affine sphere the  affine mean curvature $H=-1$ according to Theorem \ref{kwx}. The sharp cone distribution is that given by the asymptotic cones to the affine spheres as determined by Theorem \ref{hop}.

Conversely, given (b) we get (c) as follows. We determine a distribution of properly embedded hyperbolic affine spheres up to rescalings thanks to Theorem \ref{hop}, and among those we select that for which $H=-1$ according to the volume form provided by (b).

From (a) we can also obtain directly the measure of (b) using Eq.\ (\ref{nos}). This possibility follows from Theorem \ref{rel}. $\square$
\end{proof}

The problem of the determination of spacetime from a volume form and a cone distribution is solved if we use in place of Lorentz-Finsler spacetimes the next more specialized objects.

\begin{definition} \label{ksp}
An {\em affine sphere spacetime} is any of the equivalent structures mentioned in Theor.\ \ref{equ}.
\end{definition}

The primitiveness of this definition seems remarkable. Indeed, point (c) clarifies that it relies on just the manifold structure of $M$, as not even a volume form is required. Nevertheless,  through  the equivalence with (a) we can recover all the metrical aspects which are needed in calculations: given the Finsler Lagrangian we can define the metric, the length of timelike curves and hence the proper time of observers. We can define the spray and its geodesics and so have a natural notion of free fall; we can construct tensors over $M$, ask for the validity of generalized Einstein's equations, and so on. Still at the core of these algebraic objects there is just an affine sphere distribution.

The sharpness condition appearing in (b) is simply the request that the speed of light be finite in any direction. We shall see in a moment the physical meaning of  the affine completeness appearing in (c).

The spacetime of general relativity is recovered whenever one of the following equivalent condition holds: (a) the Cartan torsion vanishes, (b) the cones are round, (c) at each point the  hyperbolic affine sphere is a quadric.

\begin{remark} \label{ric}{{\em Physical meaning of affine completeness}}\\
In any Lorentz-Finsler theory the affine metric coincides with the restriction of the Lorentz-Finsler metric to the indicatrix (Theor.\ \ref{rel}). We mentioned that this object measures the length of  vectors tangent to the indicatrix, namely the length of accelerations. Physically, the affine completeness of the indicatrix  reflects the fact that an ideal rocket having bounded proper acceleration cannot reach the boundary of the indicatrix in a finite proper time, namely that the speed of light cannot be experienced by massive particles, and hence that there is a meaningful distinction between massive and massless particles. This requirement appears to be physically motivated  so it is natural to demand the affine completeness of the  indicatrix as in characterization (c).

We shall see later on that the cotangent space admits a dual indicatrix. Its  affine completeness represents the impossibility of reaching the speed of light by applying a bounded force for a finite proper time. The physical equivalence of these conditions is nothing but Newton's second law: the proper acceleration (measured by a non-accelerating local observer) is proportional to the force.
\end{remark}

\begin{remark}{{\em Tangent and cotangent translated affine sphere spacetimes}}\\
A generalization of the notion of affine sphere spacetime can be obtained introducing a causal vector field as a further ingredient and translating the affine sphere distribution on the tangent space as done in Theor.\ \ref{doo}. These spaces might be called {\em tangent translated affine sphere spacetimes} and have to be distinguished from the {\em cotangent translated affine sphere spacetimes}, for which the translation takes place on the cotangent space. In fact, the dual of a translated affine sphere (c.f.\ Sect.\ \ref{leg}) is not necessarily a translated affine sphere. If the affine sphere indicatrix is a quadric then the translation on the tangent space induces a translation on the cotangent space and conversely. This  restricted family includes the Lorentz-Randers spaces. We expect that the real light cones could be (possibly translated) affine spheres departing slightly from isotropy. The presence of matter could in principle modify the affine sphere equation. We shall investigate the relevant modifications in the next works.
\end{remark}

\begin{remark} \label{one}
Let $V$ be a 1-dimensional vector space, and define a hyperbolic affine sphere as any point $p$ different from the origin. The cone generated by it is clearly one half $V^+\ni p$ of the vector space. At the Lagrangian level the 0-dimensional hyperbolic affine sphere  is determined by $\mathscr{L}\colon V^+\to \mathbb{R}$ where $\mathscr{L}$ is positive homogeneous of degree two and negative, then $p$ is determined by  $-2\mathscr{L}(p)=1$. Correspondingly, a 1-dimensional affine sphere spacetime is any 1-dimensional manifold $M$ endowed with a function $\mathscr{L}(x,y)$ defined on one half of $TM$, positive homogeneous of degree two in $y$ and negative. Observe that its Hessian is negative definite so this is not a Lorentz-Finsler spacetime. Still this extension of the affine sphere spacetime definition is useful  when considering Calabi products \cite{calabi72}.
\end{remark}

Let $(M,\mathscr{L})$ and $(M',\mathscr{L}')$ be pseudo-Finsler spaces. Let us consider a diffeomorphism $f\colon M \to M'$ which induces a diffeomorphism, denoted with abuse of notation in the same way, $f\colon  E\to E'$, $(x,v) \mapsto (f(x),f_*(v))$ where $E$ denotes the slit tangent bundle. A diffeomorphism $f$ is said to be a {\em conformal transformation}
if there is a positive function $\varphi\colon M\to \mathbb{R}$  such that
 $f^*\mathscr{L}'=\varphi \mathscr{L}$ that is, for every $(x,y) \in \Omega$ where $\Omega$ is the conic domain of $\mathscr{L}$,  we have $(f(x),f_*(y))\in \Omega'$ and
\begin{equation}
\mathscr{L}'(f(x),f_*(y))=\varphi(x) \mathscr{L}(x,y),
\end{equation}
which is equivalent to\footnote{An observation due to Knebelman \cite{knebelman29b}  shows that  the definition of the conformal transformations with this equation for $\varphi$ dependent also on the fiber $y$ brings no more generality.} $f^*g'= \varphi g$. Observe that  $f^*\mathscr{L}'$ is a Finsler Lagrangian with the same signature of the Hessian of $\mathscr{L}'$. The map $f$ is an isomorphism (or an isometry) if $\varphi=1$.

\begin{theorem}
Let $f\colon M \to M'$ be a diffeomorphism and suppose that $(M,\mathscr{L})$ and $(M',\mathscr{L}')$ are affine sphere spacetimes.  The map $f$ preserves the cone structure, namely for every $x\in M$, $f_*(\Omega_x)=\Omega_x'$, if and only if $f$ is a conformal transformation. It also preserves the volume form if and only if it is an isometry.
%
%
\end{theorem}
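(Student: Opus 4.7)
The plan is to reduce the claim to the rigidity contained in Theorem \ref{cor}. The direction $(\Leftarrow)$ of the cone equivalence is immediate: if $f^*\mathscr{L}'=\varphi\mathscr{L}$ with $\varphi>0$, then $y\in\Omega_x$ iff $\mathscr{L}(x,y)<0$ iff $\mathscr{L}'(f(x),f_*y)<0$, i.e.\ $f_*(\Omega_x)=\Omega'_{f(x)}$.

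For the nontrivial direction $(\Rightarrow)$, set $\tilde{\mathscr{L}}:=f^*\mathscr{L}'$, which is defined on $\Omega$ by the hypothesis that $f$ preserves cones. The mean Cartan torsion is an intrinsic Finslerian object, hence is pulled back to zero, so $\tilde I_\alpha=0$ on $\Omega$. Thus both $\mathscr{L}$ and $\tilde{\mathscr{L}}$ are Lorentz-Finsler Lagrangians on the common cone distribution $\Omega$ with vanishing mean Cartan torsion. The next observation is that the one-parameter rescaling $\mathscr{L}_x\mapsto k\mathscr{L}_x$ with $k>0$ leaves both the cone $\Omega_x$ and the vanishing of $I_\alpha$ invariant (since by (\ref{aqs}) $I_\alpha$ depends only on $y$-derivatives of $\log|\det g|$, and $\log|\det(kg)|-\log|\det g|$ is constant in $y$), while it transforms the associated density according to $\rho'=k^{(n+1)/2}\rho$. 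Since Theorem \ref{cor} guarantees that on $\Omega_x$ the Lagrangians with vanishing mean Cartan torsion are in bijection with the positive values of $\rho$, this rescaling already exhausts the entire family. Hence at each $x$ there is a unique positive $\varphi(x)$ with $\tilde{\mathscr{L}}_x=\varphi(x)\mathscr{L}_x$; smoothness of $\varphi$ follows by evaluating this relation at any fixed nonvanishing $y\in\Omega_x$ and using smoothness of the two Lagrangians. This yields $f^*\mathscr{L}'=\varphi\mathscr{L}$, the conformal condition.

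For the second assertion, differentiate the conformal relation twice in the fibers to obtain $f^*g'=\varphi\, g$, so $f^*|\det g'|=\varphi^{n+1}|\det g|$. Applying the volume formula (\ref{nos}) gives
\begin{equation*}
f^*\,\dd\mu'=\varphi^{(n+1)/2}\,\dd\mu,
\end{equation*}
so $f$ preserves the spacetime volume form iff $\varphi\equiv 1$, iff $f$ is an isometry.

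The main obstacle is the second step: verifying that two Lagrangians with vanishing mean Cartan torsion on the same cone must be pointwise proportional. All the work is done by the Cheng--Yau uniqueness underlying Theorem \ref{cor}, combined with the elementary computation of how $\rho$ and $I_\alpha$ transform under a constant rescaling of $\mathscr{L}_x$; once this rigidity is in hand, the rest of the argument is purely formal.
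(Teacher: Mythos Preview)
Your proof is correct and follows essentially the same route as the paper: pull back $\mathscr{L}'$ to obtain a second Lagrangian on $\Omega$ with vanishing mean Cartan torsion, invoke the Cheng--Yau uniqueness (Theorem~\ref{cor}) to conclude the two differ by a pointwise scale, and then fix the scale via the volume form. You have simply made explicit a few steps (the rescaling computation, smoothness of $\varphi$, the determinant identity for $f^*\dd\mu'$) that the paper leaves to the reader.
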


\begin{proof}
Suppose that  $f$ preserves the cone structure, namely  $\Omega'=f_*\Omega$. Since $\mathscr{L}'$ has vanishing mean Cartan torsion the same is true for $f^*\mathscr{L}'$. The Lorentz-Finsler Lagrangians $\mathscr{L}$ and $f^*\mathscr{L}'$ have the same domain, vanish at the boundary of $\Omega$ and have vanishing mean Cartan torsion. As a consequence they have affine sphere indicatrices, and since there is just one such indicatrix up to homotheties they are conformal.
If the volume form is preserved  the scaling factor is fixed to one since for a given volume form there is only one hyperbolic sphere with affine mean curvature $-1$.
The converse implication is trivial. $\square$
\end{proof}

\subsection{Physical interpretation of projective coordinates} \label{rem}

At this stage we can also understand the physical interpretation of the inhomogeneous projective coordinates ${\bm v}$ introduced in Eq.\ (\ref{inh}) in the general framework of Lorentz-Finsler theories.
First we say that $\hat y\in \Omega_x$ is a {\em covariant velocity} if it is {\em on shell}, namely if $\hat y\in \mathscr{I}_x^-$.
Let $\hat y\in \mathscr{I}_x^-$ and let $P_{\hat y}$ be the hyperplane passing through the origin such that $\hat y+P_{\hat y}$ is  tangent to the indicatrix at $\hat y$. We can choose a $g_{\hat y}$-orthonormal frame so that $e_0=\hat y$. Since $\dd
\mathscr{L}=g_{\hat y}(\hat y, \cdot)$ the basis  $\{e_i, i=1,\cdots,n\}$ spans $P_{\hat y}$. The basis induces a coordinate system $\{y^\alpha\}$ on $T_xM$. Notice that at $\hat y$, $\dd \mathscr{L}=-\dd y^0$, furthermore at $\hat y$ we have $\det g_{\alpha \beta}=-1$.

 The coordinates $y^\alpha$  at $T_xM$ determine through the exponential map a local coordinate system in a neighborhood of $x$ which represents the coordinate system of the {\em observer} $\hat y$. Let $D_{\hat y}$ be that subset of $P_{\hat y}$ such that $\hat y+D_{\hat y}=(\hat y+P_{\hat y})\cap \Omega_x$. The set $D_{\hat y}$ is an open bounded convex set which represents the domain of allowed velocities of massive particles as seen from observer $\hat y$. Its boundedness expresses the finiteness  of the speed of light as measured by the observer.

Since $\hat y$ belongs to the indicatrix and $y=-\frac{1}{u({\bm v})}(1,{\bm v})$ over it, we have  with this choice of coordinates $u({\bm 0})=-1$. Let $x(s)$ be a timelike curve passing through $x=x(0)$ where the observer with covariant velocity ${\hat y}$ is also passing. Let $w^\alpha$ be the normal coordinates constructed by the observer in a neighborhood of $x$. Let $y\in \mathscr{I}_x^-$ be the covariant velocity of the particle at $x$. If $w^0$ changes of $\dd t$, ${\bm w}$ changes of ${\bm v} \dd t$, thus ${\bm v}$ is the velocity of a particle with covariant velocity $y$ as seen from the observer $\hat y$.


The inhomogeneous projective coordinate ${\bm v}$ is just the {\em velocity} of the particle (see Fig.\ \ref{ind}) as measured by the observer $\hat y$, $D_{\hat y}$ is the domain of allowed velocities for massive particles as measured by $\hat y$, while $u({\bm v})$ plays the role of Lagrangian for the observer $\hat y$ (cf.\ Eq.\ (\ref{mos})). This result holds in any Lorentz-Finsler theory, so it is remarkable that
these physically relevant coordinates are at the same time the best coordinates in order to express the Monge-Amp\`ere PDE for $u$ for affine sphere spacetimes.
In the study of affine spheres they were introduced by Gigena \cite{gigena81}, and their usefulness has been  advocated  by Loftin \cite{loftin02}. The projective invariance of the affine sphere metric emphasized in Loewner and Nirenberg's work \cite{loewner74} is nothing but the well-posedness  of the affine sphere indicatrix geometry under projective changes, namely under changes of observer. More precisely, a change of observer is given by (\ref{pro1})-(\ref{pro3}) where, however, the matrix $A_{\alpha \beta}$ is not arbitrary since the basis $\{e_\alpha\}$ for the observer has to satisfy some conditions, namely it has to imply $\tilde g_{\alpha \beta}=\eta_{\alpha \beta}$ at ${\bm \tilde v}={\bm 0}$ (observe that $g_{\alpha \beta}=\eta_{\alpha \beta}$ at ${\bm  v}={\bm 0}$ as this is the observer condition on the coordinates).

We mentioned that in the observer coordinates of $\hat y$,  $\det g_{\alpha \beta}(\hat y)=-1$.  But in an affine sphere spacetime this determinant is independent of the point thus $\det g_{\alpha \beta}=-1$. In other words $\rho=1$.  Under a change of observer we also have $\det \tilde g_{\alpha \beta}=-1$, thus $\det A=1$. As a consequence,  for affine sphere spacetimes the  changes between observer coordinates are  unimodular.


The expansion of the Lagrangian $u$ in the observer coordinates is
\begin{equation} \label{huz}
u({\bm v})=-1+\frac{{\bm v}^2}{2} +\frac{1}{3}\, C_{ijk}(\hat y) v^i v^j v^k+\frac{1}{4!}\big(2 C_{ijkl}(\hat y)v^i v^j v^k v^l+3({\bm v}^2)^2\big)+\ldots
\end{equation}
The  quadratic term gives the usual classical kinetic energy for low speeds, see Eq.\ (\ref{mos}). The observer coordinates can be characterized as those coordinates for which the Taylor expansion up to second order of $u$ is $u({\bm v})=-1+{\bm v}^2/2+\cdots$. The expansion of the Lagrangian can be more suggestively written
\begin{equation} \label{eao}
u({\bm v})=-\sqrt{1-{\bm v}^2}+\frac{1}{3}\, C_{ijk}(\hat y) v^i v^j v^k+\frac{1}{12}\, C_{ijkl}(\hat y)v^i v^j v^k v^l+o(\vert v\vert^4) .
\end{equation}

The next result establishes that the isotropy of the speed of light is a property independent of the observer.

\begin{proposition} \label{njd}
In an affine sphere spacetime if the velocity domain $D$ is ellipsoidal at least for one observer  $\hat y\in\mathscr{I}_x$,
 then the same is true for every observer.\footnote{By a similar argument, taking into account the results of \cite{minguzzi16c} we have that  in four spacetime dimensions an analogous result holds for a domain of conical or tetrahedral shape.} In fact, $\mathscr{I}_x$ is actually the usual  hyperboloid (a quadric) as in special relativity thus the velocity domains in observer coordinates are balls.
\end{proposition}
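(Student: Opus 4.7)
The plan is to combine two observations: (i) an ellipsoidal $D_{\hat y}$ is equivalent to $\Omega_x$ being a quadric cone, and (ii) the Minkowski hyperboloid is the unique (up to homothety) affine sphere asymptotic to a quadric cone, so $\mathscr{I}_x$ must be a quadric and the Lagrangian is automatically quadratic.

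First I would pass to the observer coordinates $\{y^\alpha\}$ for $\hat y$ constructed in Section~\ref{rem}, in which $\hat y=(1,{\bm 0})$, $g_{\hat y}=\eta$, and the tangent hyperplane to $\mathscr{I}_x$ at $\hat y$ is $\{y^0=1\}$. Then $\hat y+D_{\hat y}=\{y^0=1\}\cap\Omega_x$, so $D_{\hat y}$ is literally the cross-section of the cone at height $y^0=1$. If this cross-section is an ellipsoid, a further linear change of basis of $T_xM$ fixing the $y^0$ axis turns $D_{\hat y}$ into a Euclidean unit ball; equivalently $\Omega_x$ becomes the standard future cone $\{-(y^0)^2+|{\bm y}|^2<0,\ y^0>0\}$ of Minkowski space.

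Next I would invoke the uniqueness clause of Theorems~\ref{hop} and~\ref{cor}: once a sharp convex cone $\Omega_x$ and a translation-invariant volume $\rho\,\dd^{n+1}y$ on $T_xM$ are fixed, there is exactly one hyperbolic affine sphere in $\Omega_x$ with $H=-1$ relative to that volume. For the round cone above, the standard Minkowski hyperboloid $\mathscr{H}=\{-(y^0)^2+|{\bm y}|^2=-1\}$ is a hyperbolic affine sphere centred at the origin: it comes from the quadratic Lagrangian $\mathscr{L}=\tfrac12(-(y^0)^2+|{\bm y}|^2)$, which has $I_\alpha=0$ and $|\det g_{\alpha\beta}|=1$, so Theorem~\ref{rel} gives $H=-1$ against Lebesgue measure. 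Any homothety $\mathscr{H}\mapsto\lambda\mathscr{H}$ is again a quadric, and by Eq.~(\ref{mpc}) one can fix $\lambda$ so that the rescaled hyperboloid has $H=-1$ against the prescribed $\rho(x)\,\dd^{n+1}y$. Uniqueness then forces $\mathscr{I}_x$ to equal this rescaled $\mathscr{H}$, so $\mathscr{I}_x$ is a quadric centred at the origin.

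A quadric indicatrix centred at the origin means $\mathscr{L}=\tfrac12 g_{\alpha\beta}y^\alpha y^\beta$ with $g$ a Lorentzian quadratic form, so $(T_xM,g)$ is isometric to Minkowski space. For any other observer $\hat y'\in\mathscr{I}_x$, the hyperplane $\hat y'+P_{\hat y'}$ is the Minkowski-tangent hyperplane to $\mathscr{I}_x$ at $\hat y'$; it contains no null direction and therefore meets the quadric cone $\Omega_x$ in a bounded convex set cut out by a non-degenerate affine quadric, namely an ellipsoid---and in fact a Euclidean unit ball in the $g_{\hat y'}$-orthonormal observer coordinates of $\hat y'$. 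The only delicate step is the appeal to uniqueness, which must be made pointwise in $x$ with the correct normalization of $H$ against the prescribed spacetime volume form; the rest is a routine translation between the cone picture and Gigena's projective description.
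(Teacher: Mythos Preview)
Your argument is correct and rests on the same pillar as the paper's proof, namely Cheng--Yau uniqueness, but you package it differently. The paper works directly in observer coordinates with the function $u$ on $D$: invoking Theorem~\ref{cyu}, it writes down the explicit solution $u=-k\sqrt{1-\sum a_i^{-2}(v_i-c_i)^2}$ for an ellipsoidal domain and then uses the observer normalization $u({\bm v})=-1+{\bm v}^2/2+o(v^2)$ to force ${\bm c}=0$, $k=1$, ${\bm a}=1$, so that $u=-\sqrt{1-{\bm v}^2}$ immediately. You instead straighten the cone to the round Minkowski cone by a linear map, exhibit the hyperboloid as an affine sphere inside it, and invoke uniqueness at the level of Theorems~\ref{hop}/\ref{cor} (with the homothety fixed by Eq.~(\ref{mpc})). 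Your route is slightly more geometric and makes the quadric nature of $\mathscr{I}_x$ transparent without any Taylor matching; the paper's route is shorter because the observer normalization kills the ellipsoid's centre and semiaxes in one stroke, so no auxiliary linear change of basis or rescaling step is needed. Both arguments are the same uniqueness principle read through Gigena's dictionary in opposite directions.
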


In the hypothesis we do not require $D$ to be centered at the origin ${\bm v}=0$.

\begin{proof}
This result follows from the uniqueness of the Cheng-Yau solution $u$, see Theor.\ \ref{cyu}. The solution for a spherical domain of radius one centered at the origin is $u=-k\sqrt{1-{\bm v}^2}$ for some constant $k>0$ (see also \cite{minguzzi16c}). As a consequence, the solution for an elliptical domain centered at ${\bm c}$ is \[u=-k\sqrt{1- a_1^{-2} (v_1-c_1)^2+\cdots+ a_n^{-2} (v_n-c_n)^2}.\] The expansion in observer coordinates is $-1+{\bm v}^2/2+o(v^2)$ thus ${\bm c}=0$, $k=1$, ${\bm a}=1$, which proves that $u=-\sqrt{1-{\bm v}^2}$, the special relativistic solution. $\square$
\end{proof}

\begin{theorem}
Suppose that the Taylor expansion of the Lagrangian $u$ is that of non-relativistic physics (and for the matter of special relativity) up to order $\vert {\bm v}\vert^3$ for any observer, namely $u({\bm v})=-1+\frac{{\bm v}^2}{2}+o(v^3)$, then the Lorentz-Finsler manifold is a Lorentzian manifold.
\end{theorem}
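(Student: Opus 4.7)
The plan is to read off the vanishing of the Cartan torsion directly from the hypothesis via the expansion in Eq.~(\ref{huz}), and then use positive homogeneity to propagate this vanishing to the whole cone.

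\textbf{Step 1: extract information from the cubic term.} In observer coordinates adapted to $\hat y \in \mathscr{I}^-_x$, the expansion (\ref{huz}) reads
\[
u({\bm v}) = -1 + \tfrac{1}{2}{\bm v}^2 + \tfrac{1}{3}\, C_{ijk}(\hat y)\, v^i v^j v^k + O(|{\bm v}|^4).
\]
The hypothesis $u({\bm v}) = -1 + {\bm v}^2/2 + o(|{\bm v}|^3)$ forces the homogeneous cubic polynomial $C_{ijk}(\hat y)\, v^i v^j v^k$ to vanish identically in ${\bm v}$, and since $C$ is totally symmetric this gives $C_{ijk}(\hat y) = 0$ for $i,j,k \in \{1,\dots,n\}$.

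\textbf{Step 2: upgrade to all indices.} In the observer frame one has $\hat y^\alpha = \delta^\alpha_0$, so the identity $C_{\alpha\beta\gamma}(\hat y)\, \hat y^\gamma = 0$ (a consequence of the positive homogeneity of $g_{\alpha\beta}$) yields $C_{\alpha\beta 0}(\hat y) = 0$. Combined with Step~1, every component $C_{\alpha\beta\gamma}(x,\hat y)$ vanishes in the observer frame, and therefore $C_{\alpha\beta\gamma}(x,\hat y)=0$ as a tensor. Since the hypothesis is assumed for every observer, this holds for every $\hat y \in \mathscr{I}^-_x$.

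\textbf{Step 3: propagate to the full cone.} The Cartan torsion is positive homogeneous of degree $-1$ in $y$, so $C_{\alpha\beta\gamma}(x, s\hat y) = s^{-1} C_{\alpha\beta\gamma}(x,\hat y) = 0$ for any $s>0$. Every $y \in \Omega_x$ is of this form, hence $C_{\alpha\beta\gamma}\equiv 0$ on $\Omega$. By the defining relation (\ref{mac}), $\partial g_{\alpha\beta}/\partial y^\gamma = 0$, i.e.\ $g_{\alpha\beta}$ is independent of $y$. Positive homogeneity gives $\mathscr{L} = \tfrac{1}{2} g_{\alpha\beta}(x)\, y^\alpha y^\beta$, which is a genuine Lorentzian metric on $M$. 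No step here requires serious estimates; the only mild point is the simultaneous use of the hypothesis at every observer, which is exactly what lets one conclude vanishing of $C$ on the full indicatrix rather than at a single point.
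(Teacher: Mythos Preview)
Your proof is correct and follows essentially the same route as the paper: the paper's one-line argument uses Eq.~(\ref{huz}) to conclude $C_{ijk}(\hat y)\,v^i v^j v^k = 0$, invokes polarization to kill the spatial components, and then (implicitly) uses $C_{\alpha\beta\gamma}\hat y^\gamma = 0$ plus homogeneity to finish. You have simply unpacked each of these steps explicitly, including the passage from the indicatrix to the full cone via degree $-1$ homogeneity.
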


\begin{proof}
Using Eq. (\ref{huz}) we have $C_{ijk}(\hat y) v^i v^j v^k=0$ for every observer $\hat y$ and ${\bm v}$ thus by polarization the Cartan torsion vanishes on the indicatrix and hence on $\Omega$. $\square$
\end{proof}

We are going to give a similar characterization  for affine sphere spacetimes. To that end it is convenient to pass from the mass normalized Lagrangian $u$ to an unnormalized Lagrangian denoted in the same way by replacing $u\to u/m$ in the previous formulas. In the normalized notation an affine sphere spacetime has an indicatrix having $H=-1$ thus
satisfying Eq.\ (\ref{hhw}). In the non-normalized formulation this equation reads
\begin{equation} \label{qhw}
\det u_{ij}= \vert \det g_{\alpha \beta}\vert \Big(-\frac{1}{u}\Big)^{n+2} m^{2(n+1)},
\end{equation}
thus we can identify $-m^{\frac{2(n+1)}{n+2}}$ with the non-normalized affine mean curvature  $H$. Let us continue working with the non-normalized notation till  the end of this section.

Every observer $\hat y$ in the observer coordinates determines a Lagrangian $u({\bm v})$, a Legendre map ${\bm v} \mapsto {\bm p}:=\p_{{\bm v}} u$ and a {\em mass matrix} $m_{ij}:=\frac{\p^2 u}{\p v^i \p v^j}$. The normalized trace of the mass matrix is $\frac{1}{n}\,\delta^{ij} m_{ij}$ and in the Lorentzian case it equals the mass of the particle $m$ (it would be 1 in the mass normalized approach). In Finslerian gravity theories this constant is necessarily $m$ for ${\bm v}=0$ but can run linearly in the velocity for small ${\bm v}$.

\begin{theorem}
If for every observer $\hat y\in \mathscr{I}$ the normalized trace of the mass matrix is a constant $m$ (the mass of the particle) up to quadratic corrections in $v$, then  the Lorentz-Finsler manifold is an affine sphere spacetime.
\end{theorem}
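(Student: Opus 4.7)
The plan is to reduce the hypothesis to the vanishing of the mean Cartan torsion $I_\alpha$ on the indicatrix and then invoke Theorem~\ref{equ}. The starting point is the Taylor expansion of the mass-normalized observer Lagrangian given by Eq.~(\ref{huz}),
\[
u_{\rm n}({\bm v})=-1+\frac{{\bm v}^2}{2}+\frac{1}{3}C_{ijk}(\hat y)\,v^iv^jv^k+O(|{\bm v}|^4),
\]
which holds in the observer coordinates adapted to $\hat y\in\mathscr{I}_x$. Rescaling $u=m\,u_{\rm n}$ as prescribed before the statement, the mass matrix is
\[
m_{ij}=\frac{\p^2 u}{\p v^i \p v^j}=m\,\delta_{ij}+2m\,C_{ijk}(\hat y)\,v^k+O(|{\bm v}|^2).
\]

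Next I would take the normalized trace and read off the coefficient of the linear term:
\[
\tfrac{1}{n}\delta^{ij}m_{ij}=m+\tfrac{2m}{n}\,\delta^{ij}C_{ijk}(\hat y)\,v^k+O(|{\bm v}|^2).
\]
The key observation is that in the observer coordinates one has $g_{\alpha\beta}(\hat y)=\eta_{\alpha\beta}$, and since $C_{\alpha\beta\gamma}y^\gamma=0$ together with $y^0=1$, $y^i=0$ at $\hat y$, the Cartan torsion has only spatial components there. Hence $\delta^{ij}C_{ijk}(\hat y)=g^{\mu\nu}C_{\mu\nu k}(\hat y)=I_k(\hat y)$, and $I_0(\hat y)=0$ holds identically from $I_\alpha y^\alpha=0$. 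Thus the hypothesis that the linear term in ${\bm v}$ is absent in the expansion of $\tfrac{1}{n}\delta^{ij}m_{ij}$ is equivalent to $I_\alpha(\hat y)=0$ as a vector at $\hat y$.

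Finally, since this must hold for every observer $\hat y\in\mathscr{I}_x$, and $I_\alpha$ is a genuine covariant tensor on $\Omega_x$ (so its vanishing in observer coordinates at $\hat y$ is a coordinate-independent statement), we get $I_\alpha=0$ on the entire indicatrix, and by the positive homogeneity of degree $-1$ of $I_\alpha$ on all of $\Omega$. By Theorem~\ref{equ} (equivalence of (a) and the affine sphere spacetime notion of Definition~\ref{ksp}) this concludes the argument. The only delicate point is bookkeeping the factor $m$ and the identification $\delta^{ij}C_{ijk}=I_k$ in observer coordinates; once that is set up, no further estimates are required. $\square$
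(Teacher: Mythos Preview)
Your proof is correct and follows essentially the same route as the paper: expand the mass matrix from Eq.~(\ref{huz}), take the normalized trace, identify the linear coefficient with $I_k(\hat y)$ via $g_{\alpha\beta}(\hat y)=\eta_{\alpha\beta}$ and $C_{0\alpha\beta}(\hat y)=0$, use $I_0(\hat y)=0$ to get $I_\alpha(\hat y)=0$, and then extend to all of $\Omega$ by homogeneity. Your write-up is somewhat more explicit about the identification $\delta^{ij}C_{ijk}=I_k$ in observer coordinates, but the argument is the same.
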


\begin{proof}
From (\ref{huz}) the mass matrix is $m_{ij}(v)=m\delta_{ij}+m2C_{ijk}(\hat y) v^k+o(v)$, thus its normalized trace is $m+m \frac{2}{n}I_k(\hat y) v^k+o(v)$ where we used $\delta^{ij}=g^{ij}+O(v)$, $C_{0jk}(\hat y)=C_{00k}(\hat y)=0$. Since $I_0(\hat y)=0$, the assumption implies $I_\mu(\hat y)=0$ for every observer, and hence all over $\Omega_x$. $\square$
\end{proof}

\begin{remark}
In other words the theorem states that affine sphere spacetimes are characterized by the property that for every test particle there is a constant $m$  such that for every close observer that looks at the test particle  $m_{ij}-m\delta_{ij}$ is  traceless not only at the zeroth order in $v$ but also at the first order. The constant $m$ is the rest mass of the particle. This characterization is reminiscent of the characterization of inertial coordinates systems which are those for which the apparent force acting on the particle has no linear contribution in ${\bm v}$, however, this is really a condition on the kinematical structure of the space.
\end{remark}

Joining the assumptions of Prop.\ \ref{njd} and the previous theorem we obtain  a characterization of Lorentzian spacetimes in Lorentz-Finsler theory.

\begin{proposition}
If at every event all observers experience  the non-relativistic characterization of mass as in the previous theorem, and if at least one observer measures an ellipsoidal (e.g.\ isotropic) speed of light then the spacetime is Lorentzian.
\end{proposition}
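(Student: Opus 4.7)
My plan is to combine the two preceding results in the excerpt: the theorem immediately before (which converts the non-relativistic mass condition into the vanishing of the mean Cartan torsion) and Proposition \ref{njd} (which upgrades an ellipsoidal velocity domain in an affine sphere spacetime to a quadric indicatrix). The conclusion will then be that the indicatrix is a quadric at every event, which is the defining feature of a Lorentzian manifold.

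Concretely, I would first read the hypothesis ``all observers experience the non-relativistic characterization of mass'' as applying pointwise, i.e.\ for each $x\in M$ and each $\hat y\in \mathscr{I}_x$ one has $m_{ij} - m\delta_{ij}$ traceless up to first order in $\bm v$. Then the theorem preceding Prop.\ \ref{njd} applies at each $x$ and yields $I_\mu(x,y)=0$ on the whole of $\Omega_x$. By Theorem \ref{equ}, this promotes $(M,\mathscr{L})$ to an affine sphere spacetime. Hence the hypotheses of Proposition \ref{njd} are legitimately in force.

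Next I would use the second hypothesis: at each event $x$ there is an observer $\hat y\in \mathscr{I}_x$ whose velocity domain $D_{\hat y}$ is ellipsoidal. Proposition \ref{njd} then says that $\mathscr{I}_x$ is the standard relativistic hyperboloid and, moreover, that $u_x(\bm v) = -\sqrt{1-\bm v^2}$ in the observer coordinates of \emph{any} observer at $x$. Using \eqref{jui}, the Finsler Lagrangian recovered from this indicatrix is $\mathscr{L}_x(y) = -\tfrac{1}{2}\eta_{\alpha\beta}y^\alpha y^\beta$ for the Minkowski form $\eta$ associated to the chosen observer frame at $x$; equivalently, $g_{\alpha\beta}(x,y)$ is independent of $y$ at that $x$, so the Cartan torsion vanishes identically there.

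Since this holds at every $x\in M$, the Cartan torsion $C_{\alpha\beta\gamma}$ vanishes on all of $\Omega$, and $\mathscr{L}$ is a quadratic form in $y$ at each point. Therefore $(M,\mathscr{L})$ is a Lorentzian manifold, as claimed. The only potentially delicate point—which is really a matter of unpacking the statements rather than a genuine obstacle—is confirming that the two hypotheses are read pointwise in $x$, so that the earlier theorem and Proposition \ref{njd} can be invoked fiberwise; once that is granted, the argument is essentially a two-line concatenation of the two results.
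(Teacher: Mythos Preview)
Your proof is correct and is exactly the argument the paper intends: the text introduces this proposition with the sentence ``Joining the assumptions of Prop.\ \ref{njd} and the previous theorem we obtain a characterization of Lorentzian spacetimes in Lorentz-Finsler theory,'' and offers no further proof. Your expansion---first using the preceding theorem pointwise to obtain $I_\alpha=0$ and hence an affine sphere spacetime, then applying Prop.\ \ref{njd} fiberwise to conclude that each indicatrix is a quadric---is precisely this concatenation made explicit.
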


Another characterization of Lorentzian spacetimes can be obtained looking at those spacetimes which satisfy the {\em relativity principle}. These are the affine sphere spacetimes for which the group of linear non-degenerate endomorphisms $G(\Omega_x)$ which leaves invariant $\Omega_x$ is independent of $x$ and acts transitively on $\Omega_x$. In other words the cone $\Omega_x$ is homogeneous \cite{sasaki80}. The action descends to a transitive isometric action on the affine sphere indicatrix $\mathscr{I}_x$ (if we had selected an arbitrary indicatrix, it  would not be the case). This property is the mathematical realization of the idea that all observers are kinematically equivalent.  The Finsler Lagrangian is really some power of the characteristic function of the cone, but we shall not enter on this correspondence here. The important point is that for homogeneous cones the domain of allowed velocities $D_{\hat y}$ is really independent of $\hat y$ (up to space rotations of the observer coordinates) and the boundary  $\p D$ is $C^2$ only for the ellipsoid \cite{vinberg67,benoist01,jo03}. Physically, we can now interpret this result as follows
\begin{theorem} \label{muf}
For an affine sphere spacetime which satisfies the relativity principle, the speed of light has a $C^2$ dependence on the direction if and only if the spacetime is Lorentzian.
\end{theorem}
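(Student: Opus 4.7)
The plan is to unpack the two hypotheses into purely geometric conditions on the velocity domain $D_{\hat y}\subset P_{\hat y}$ of an observer $\hat y\in\mathscr{I}_x$, then invoke the classification results already cited in the paragraph preceding the statement.

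First I would handle the trivial direction. If $(M,\mathscr{L})$ is Lorentzian then $\mathscr{L}$ is quadratic in $y$ and the indicatrix is a standard hyperboloid, so in observer coordinates $D_{\hat y}$ is the open unit ball. Its boundary $\partial D_{\hat y}$ (the locus parametrizing the speed of light as a function of spatial direction) is then the unit sphere and depends real-analytically, hence $C^2$, on direction.

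For the non-trivial direction, I would proceed as follows. The relativity principle means that the cone $\Omega_x$ is homogeneous under $G(\Omega_x)$. Fix an observer $\hat y$ and let $D_{\hat y}=(\hat y+P_{\hat y})\cap\Omega_x-\hat y$ be the velocity domain; concretely, $\partial D_{\hat y}$ is traced by the rays from $\hat y$ along the light cone, so the map assigning to a spatial direction $\bm n$ the corresponding maximal speed $c(\bm n)$ is exactly the radial graphing function of $\partial D_{\hat y}$. The assumption that $c$ be $C^2$ in $\bm n$ is therefore equivalent to $\partial D_{\hat y}$ being a $C^2$ hypersurface. Now homogeneity of $\Omega_x$ descends, via the projection on $\hat y+P_{\hat y}$, to a transitive action of a subgroup of affine automorphisms on $D_{\hat y}$, so $D_{\hat y}$ is itself a homogeneous bounded convex domain. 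The theorem of Vinberg--Benoist--Jo invoked above states that a homogeneous bounded convex domain has $C^2$ boundary only when it is an ellipsoid; hence $D_{\hat y}$ is ellipsoidal. Applying Proposition~\ref{njd} then forces $\mathscr{I}_x$ to be the special-relativistic hyperboloid at every point, i.e.\ the Cartan torsion vanishes, so $(M,\mathscr{L})$ is Lorentzian.

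The only genuine step here is the translation between ``speed of light is $C^2$ in the direction'' and ``$\partial D_{\hat y}$ is a $C^2$ hypersurface''; once that identification is made the proof is an application of the classification of homogeneous convex cones together with Proposition~\ref{njd}. The anticipated obstacle is merely bookkeeping: making sure the observer-coordinate description of $D_{\hat y}$ is invariant under the change-of-observer transformations~\eqref{pro1}--\eqref{pro3}, so that the $C^2$ hypothesis (stated for an arbitrary observer) indeed yields a property of the intrinsic cone $\Omega_x$ to which the Vinberg-type theorem applies. This invariance follows because two observers at the same point differ by an element of $G(\Omega_x)$ acting linearly on $\Omega_x$, which preserves the regularity class of $\partial D$. $\square$
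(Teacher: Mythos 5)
Your proposal is correct and follows essentially the same route as the paper, which proves the theorem by exactly this chain: relativity principle $\Rightarrow$ homogeneous cone $\Rightarrow$ homogeneous bounded convex velocity domain $D_{\hat y}$, then the Vinberg--Benoist--Jo classification ($\p D$ is $C^2$ only for the ellipsoid), then Proposition~\ref{njd}. One slip worth correcting: the induced transitive action on the section $D_{\hat y}$ is by \emph{projective}, not affine, transformations --- a bounded convex domain has compact affine automorphism group (it fixes the centroid), so it can never be affinely homogeneous --- and the cited classification is precisely about projectively homogeneous (quasi-homogeneous) convex domains, so the argument goes through once this is amended.
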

It can be shown that the spacetimes which satisfy the relativity principle have light cones which depart very much from isotropy \cite{minguzzi16c}, so all these odd features on the speed of light are not presents in spacetimes which have light cones obtained from small perturbations of the round cones. Of course, they will not satisfy the relativity principle, namely the perturbation spoils the Lorentz group and without restoring any other symmetry groups makes it possible to kinematically distinguish the   observers.


\subsection{Legendre transform and dispersion relations} \label{leg}
Let $x\in M$ and let  $V=T_xM$ (we recall that  we might write $I_x$ for $\Omega_x$). The {\em Legendre map}  $\ell\colon I_x\to V^*$ is defined by
\begin{equation} \label{mia}
y \mapsto g_y(y,\cdot)=\dd \mathscr{L}.
\end{equation}
A study in the context of Lorentz-Finsler geometry can be found in \cite{minguzzi13c}. By the Finslerian reverse Cauchy-Schwarz inequality \cite{minguzzi13c}   the Legendre map is a bijection between $I_x$ and the polar cone
\begin{equation} \label{pol}
I_x^{*}=\{p\in V^*\backslash 0: p(w)< 0 \ \textrm{for} \ \textrm{every} \ w\in I_x \}.
\end{equation}
Since $I_x$ is
sharp and  non-empty  so is $I^{*}_x$. Let $g_p$ (of components $g^{\alpha \beta}$) denote the inverse of $g_y$, where $y\in I_x$ is such that $\ell(y)=p$. On the polar cone we define the Finsler Hamiltonian
\[
\mathscr{H}(p):=\frac{1}{2}\,g_p(p,p)=\mathscr{L}(\ell^{-1}(p)) .
\]
It is the Legendre transform of $\mathscr{L}$, it is positive homogeneous of degree two and its Hessian is $g_p$, a metric of  Lorentzian signature.
Clearly, $\ell$ provides a bijection between $\mathscr{I}_x$ and  the dual indicatrix $\mathscr{I}_x^{*}:=\{p\colon -2\mathscr{H}(p)=1\}$.

Now, for every volume form $\mu=\rho(x) \dd^{n+1} y$ on $T_xM$ there is a dual volume form $\mu^*=\rho^{-1}(x) \dd^{n+1} p$ on $T^*_xM$ so that $\mu \mu^*=\vert(\dd p_\mu \wedge \dd y^\mu)^{n+1}\vert$, is the canonical volume form induced from the symplectic 2-form.

Since $\mathscr{H}$ plays the same role for $I^{*}_x$ that $\mathscr{L}$ plays for $I_x$, in order to establish whether $\mathscr{I}_x^{*}$ is an affine sphere we have just to calculate the mean Cartan torsion on $(V^*,\mathscr{H})$ in place of $(V,\mathscr{L})$. By analogy this is given by
\begin{equation} \label{suz}
(I^*)^\alpha=\frac{1}{2}\frac{\p}{\p p_\alpha} \log \vert \det g_{ p}\vert =-\frac{1}{2}\frac{\p y^\beta}{\p p_\alpha}  \frac{\p }{\p y^\beta}\log \vert \det g_{ y }\vert=-\frac{\p y^\beta}{\p p_\alpha}  I_\beta=-I^\alpha.
\end{equation}
Taking into account Eq.\ (\ref{mpc}) and Theorem \ref{rel}  we have just provided a rather simple Finslerian proof of the next duality result (known to Calabi \cite{gigena78,gigena81,loftin10})
\begin{proposition} \label{dus}
$\mathscr{I}_x$ is a hyperbolic affine sphere if and only if  $\mathscr{I}_x^{*}$ is a hyperbolic affine sphere. In this case $\mathscr{I}_x$ has  affine mean curvature $H=-1$ with respect to $\mu=\sqrt{\vert\det g_y\vert} \dd^{n+1} y$ and $\mathscr{I}^{*}$ has  affine mean curvature $H=-1$ with respect to $\mu^*=\sqrt{\vert\det g_p\vert} \dd^{n+1} p$. Given these volumes on $V$ and $V^*$ each affine sphere of affine mean curvature $H$ asymptotic to $\p {I}_x$ is mapped by $\ell$ to an affine sphere of affine mean curvature $H^{-1}$ asymptotic to $\p  {I}_x^{*}$.
\end{proposition}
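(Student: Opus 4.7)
The proof rests almost entirely on the symmetry between $(V,\mathscr{L})$ and $(V^*,\mathscr{H})$ as Lorentz--Finsler spaces, so that every tangent-side construction (metric, indicatrix, mean Cartan torsion) has a verbatim cotangent-side counterpart. My plan is therefore to reduce Proposition~\ref{dus} to Theorem~\ref{rel} applied twice, once on $V$ and once on $V^*$, bridged by the single identity $(I^*)^\alpha=-I^\alpha$.

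\emph{Step 1: the bridging identity.} I would compute $(I^*)^\alpha$ by the defining formula $(I^*)^\alpha=\frac{1}{2}\,\p_{p_\alpha}\log|\det g_p|$, using $g_p=(g_y)^{-1}$ (hence $\det g_p=1/\det g_y$) and the Legendre chain rule $\p y^\beta/\p p_\alpha=g^{\alpha\beta}$. This is exactly the short computation recorded in Eq.~(\ref{suz}) and yields $(I^*)^\alpha=-I^\alpha$. In particular $I_\alpha\equiv 0$ iff $(I^*)^\alpha\equiv 0$.

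\emph{Step 2: equivalence of the affine sphere conditions and identification of the distinguished volume forms.} Theorem~\ref{rel} says that $\mathscr{I}_x$ is a hyperbolic affine sphere centered at the origin of $V$ iff the mean Cartan torsion on $V$ vanishes, and that in this case the translationally invariant measure $\mu=\sqrt{|\det g_y|}\,\dd^{n+1}y$ assigns affine mean curvature $H=-1$ to $\mathscr{I}_x$. Applying exactly the same theorem to the Lorentz--Finsler pair $(V^*,\mathscr{H})$ (whose indicatrix is $\mathscr{I}_x^*$) gives the analogous statement for $\mathscr{I}_x^*$, with the measure $\mu^*=\sqrt{|\det g_p|}\,\dd^{n+1}p$ and $H=-1$. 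Step~1 now makes the two indicatrix conditions equivalent, and simultaneously fixes $\mu$ and $\mu^*$ as stated.

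\emph{Step 3: the scaling relation $H^*=H^{-1}$.} Every affine sphere asymptotic to $\p I_x$ is a homothetic rescaling $\lambda\,\mathscr{I}_x$ for some $\lambda>0$, and by the remark following Theorem~\ref{lpo} its affine mean curvature with respect to $\mu$ is $H=-\lambda^{-(2n+2)/(n+2)}$. Regarded as the indicatrix of the rescaled Lagrangian $\mathscr{L}/\lambda^{2}$, its intrinsic Legendre dual is the indicatrix of the correspondingly rescaled Hamiltonian $\lambda^{2}\mathscr{H}$, which is $\mathscr{I}_x^*/\lambda$. Applying the same rescaling formula on $V^*$ gives $H^*=-\lambda^{(2n+2)/(n+2)}$, so that $HH^*=1$.

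The only genuinely delicate point is Step~3: one must be careful that the Legendre map $\ell$, being positively homogeneous of degree $1$, scales rescalings of $\mathscr{I}_x$ and of $\mathscr{I}_x^*$ in opposite senses with respect to the intrinsic affine-sphere parametrisation, and it is this inversion of the rescaling parameter that converts the common exponent $(2n+2)/(n+2)$ into the reciprocal relation $H^*=H^{-1}$. Steps 1 and 2 are essentially bookkeeping once Eq.~(\ref{suz}) is in hand, so the whole argument is, as claimed in the excerpt, a ``rather simple Finslerian proof.''
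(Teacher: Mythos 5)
Your proof is correct and follows essentially the same route as the paper: the identity $(I^*)^\alpha=-I^\alpha$ of Eq.\ (\ref{suz}), Theorem \ref{rel} applied on both $V$ and $V^*$, and the rescaling law for the affine mean curvature (your use of the remark after Theorem \ref{lpo} is equivalent to the paper's appeal to Eq.\ (\ref{mpc}), both being consequences of Eq.\ (\ref{mon})). Your Step 3 in fact makes explicit the inversion of the scaling parameter under the duality, which the paper leaves implicit.
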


Let us study the Legendre map using inhomogeneous projective coordinates (Sect.\ \ref{pas}). Let $\{e_\alpha\}$ be a basis of $V$ so that $e_0\in I_x$ and $(e_0+\textrm{Span}\{e_i\})\cap I_x$ is bounded,  and let $y^\alpha$ be the induced coordinates on the vector space $V$.
Let ${\bm v}$ be inhomogeneous projective coordinates on $\{y\in V: y^0>0\}$ so that \begin{equation}
y=(y^0,{\bm y})=-\frac{1}{u}\,(1,{\bm v}).
\end{equation}
The indicatrix $\mathscr{I}_x\subset I_x$ is a graph whose radial graphing function is $u({\bm v})$, that is  $f\colon {\bm v} \mapsto -\frac{1}{u({\bm v})}\,(1,{\bm v})$ is the  hypersurface immersion of the indicatrix. Let $p=g_y(y,\cdot)=p_\mu \dd y^\mu$. In order to calculate $\ell(y)$ for $y\in \mathscr{I}_x$ we have to determine the value of $p_\mu$ such that $p(y)=-1$ and $p(f_*(\tilde e_j))=0$ where $\tilde e_j$ is the basis of $\mathbb{R}^n$. Recalling Eq.\ (\ref{pig}), namely $-uf_*(\tilde e_j)=u_j y+e_j$, the latter condition reads $p_j=u_j$. Thus the former condition reads $p_0+{\bm v}\cdot {\bm p}=u$. That is $p_0=-u^*$, the minus Legendre transform of $u$.

Let us calculate the affine metric and affine connection for the immersion $h\colon {\bm p}\mapsto (-u^*({\bm p}), {\bm p}) $ on $I^{*}_x$. Let $\xi^*({\bm p})= p$. To start with we have  $h_*(e^j)=D_{h_*(e^j)} \xi^*=(-\frac{\p u^*}{\p p_j}, e^j)=(-v^j,e^j)$. Let us denote for shortness $(u^*)^{ij}=\frac{\p^2 u^*}{\p p_i \p p_j}$
\begin{align*}
D_{h_*(e^i)} h_*(e^j)&=D_{h_*(e^i)} D_{h_*(e^j)} \xi^* =\frac{\p^2}{\p p_i \p p_j}(-u^*({\bm p}), {\bm p})=-\frac{(u^*)^{ij} }{u}\, (u,{\bm 0})\\
&=-\frac{(u^*)^{ij}}{u}\,  \{-(-{\bm p}\!\cdot\! {\bm v},{\bm p})+(-u^*, {\bm p})\}=\frac{(u^*)^{ij}}{u}\,  p_k h_*(e^k) -\!\frac{(u^*)^{ij}}{u}\,  \xi^*.
\end{align*}
Since the last equation is split  into a term tangent to the dual indicatrix and a term proportional to $\xi^*$, we can easily read the affine metric and connection coefficients.

\begin{theorem}
The dual indicatrix $\mathscr{I}^{*}_x$ is the image of the immersion
\begin{equation} \label{dua}
h\colon {\bm p}\mapsto (-u^*({\bm p}), {\bm p})
\end{equation}
where $u^*({\bm p})$ is the Legendre transform of $u({\bm v})$. With respect to the transverse field given by the position vector the affine metric is $h^*=-\frac{(u^*)^{ij}}{u}\,\dd p_i \dd p_j$ while the connection coefficients, defined by $\nabla^*_{e^j} e^i=(\nabla^*)_k^{ij} e^k$, are $(\nabla^*)_k^{ij}=\frac{(u^*)^{ij}}{u}\, p_k$.
\end{theorem}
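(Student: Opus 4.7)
The plan is to verify the statement by the direct computation sketched just above the theorem, organized around two ingredients: (i) use Legendre duality to parametrize $\mathscr{I}^{*}_x$ as a graph, and then (ii) apply the Gauss decomposition (\ref{gau}) with the centroaffine transverse field $\xi^{*}=p$ to read off the affine metric and the connection coefficients from the second derivatives of the graphing function. Everything rests on known material: the tangent-frame identity (\ref{pig}), the defining property $p(y)=2\mathscr{L}(y)=-1$ of the Legendre map, and the elementary Young identity for the Legendre transform.

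For (i), I would start from $p=\ell(y)=g_y(y,\cdot)$ for $y\in\mathscr{I}_x$. By definition of the Legendre map, $p(y)=-1$ and $p$ annihilates the tangent space to $\mathscr{I}_x$ at $y$. Combined with (\ref{pig}), namely $-u\, f_{*}(\tilde e_j)=u_j y+e_j$, the transversality $p(f_{*}(\tilde e_j))=0$ gives $p_j=u_j$; that is, the spatial components of $p$ are exactly the partial derivatives of the Lagrangian $u$ appearing in $\bm v$-coordinates. Substituting into $p(y)=-1$ forces $p_0={\bm p}\cdot{\bm v}-u({\bm v})=-u^{*}({\bm p})$, identifying the graphing function as the (minus) Legendre transform and hence showing that $\mathscr{I}^{*}_x$ is the image of $h\colon {\bm p}\mapsto(-u^{*}({\bm p}),{\bm p})$.

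For (ii), one differentiates and uses the Legendre relation $\partial u^{*}/\partial p_j=v^j$ to obtain $h_{*}(e^j)=(-v^j,e^j)$ and
\[
D_{h_{*}(e^i)}h_{*}(e^j)=-\frac{(u^{*})^{ij}}{u}\,(u,{\bm 0}).
\]
The only step carrying any algebraic subtlety—and the one I expect to be the main obstacle in keeping signs straight—is the splitting of the constant vector $(u,{\bm 0})$ into a part tangent to $\mathscr{I}^{*}_x$ and a part along $\xi^{*}=p$. The Young identity $u={\bm p}\cdot{\bm v}-u^{*}$, together with $p_0=-u^{*}$, yields the clean decomposition
\[
(u,{\bm 0})=-(-{\bm p}\cdot{\bm v},{\bm p})+(-u^{*},{\bm p})=p_k\,h_{*}(e^k)-\xi^{*}.
\]

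Finally, substituting this back gives
\[
D_{h_{*}(e^i)}h_{*}(e^j)=\frac{(u^{*})^{ij}}{u}\,p_k\,h_{*}(e^k)-\frac{(u^{*})^{ij}}{u}\,\xi^{*},
\]
and comparing with the Gauss equation $D_X Y=\nabla^{*}_X Y+h^{*}(X,Y)\xi^{*}$ reads off by uniqueness of the tangent/transverse splitting $h^{*}_{ij}=-(u^{*})^{ij}/u$ and $(\nabla^{*})^{ij}_k=p_k\,(u^{*})^{ij}/u$, which is the claim. Signs are internally consistent because $u<0$ throughout $\mathscr{I}_x$ (the indicatrix lies in the locus $\mathscr{L}<0$), so $h^{*}$ has Lorentzian signature in agreement with the duality already proved in Proposition \ref{dus}; no further global check is needed since $I^{*}_x$ is an open sharp convex cone and the parametrization covers $\mathscr{I}^{*}_x$ bijectively.
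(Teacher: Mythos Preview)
Your proposal follows exactly the same route as the paper's computation (which appears in the paragraph immediately preceding the theorem): identify $p_j=u_j$ and $p_0=-u^*$ from $p(y)=-1$ and $p(f_*(\tilde e_j))=0$, differentiate the graph $h$, and split the resulting constant vector into tangent and transverse parts to read off $h^*$ and $\nabla^*$ from the Gauss equation. Two minor slips to clean up: the intermediate expression should read $p_0=u-{\bm p}\cdot{\bm v}=-u^*$ (you have the middle term reversed), and the correct decomposition is $(u,{\bm 0})=-p_k\,h_*(e^k)+\xi^*$ (your sign error here cancels against the one you make when substituting, so the final displayed formula is right); finally, since $u<0$ and $(u^*)^{ij}$ is positive definite, $h^*$ is Riemannian, not Lorentzian.
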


The physical interpretation is obvious. We have seen that in observer coordinates $u({\bm v})$ is the Lagrangian (per unit mass) and ${\bm v}$ is the velocity, thus ${\bm p}$ is the momentum (per unit mass) and $u^*({\bm p})$ is the Hamiltonian (per unit mass). Thus the dual indicatrix is the Cartesian graph of the minus energy. Since observer coordinates are characterized by the expansion $u({\bm v})=-1+{\bm v}^2/2+\cdots$, they are also characterized by the expansion $u^*({\bm p})=1+{\bm p}^2/2+\cdots$. By the properties of the Legendre map, as $u^*$ is the energy ${\bm v}=\p E/\p {\bm p}$, thus the dependence $E({\bm p})$ might also be called {\em dispersion relation} and the usual velocity might be called {\em group velocity}.

\begin{center}

\begin{tabular}{@{} *5l @{}}
\emph{Math.\ objects} & \emph{Meaning} &  \\ \hline \hline
$\mathscr{L}(x,y)$    & Finsler Lagrangian \\
$\mathscr{H}(x,p)$   & Finsler Hamiltonian \\
${\bm v}$& velocity with respect to observer\\
${\bm p}$ & momenta as measured by observer\\
  $u({\bm v})$    & (observer) Lagrangian\\
 $u^*({\bm p})$ & (observer) Hamiltonian \\
 \hline
\end{tabular}

\end{center}

\begin{remark}
Observe that coordinates  ${\bm v}$ and ${\bm p}$ provide quite different parametrizations of the indicatrix and its dual (compare Eqs.\ (\ref{ndu}) and (\ref{dua})). The former are projective inhomogeneous coordinates while the latter coordinates, being Cartesian, might be called homogeneous. The coordinates ${\bm v}$ will be bounded while ${\bm p}$ certainly is not as the dual indicatrix is asymptotic to the dual cone.

Historically,  the first coordinates to be introduced in the study of affine spheres have been the dual coordinates  \cite{calabi72}. Of course, mathematically, one could use inhomogeneous projective coordinates on $T_x^*M$ rather than $T_xM$. Here it is the physical interpretation which dictates to use inhomogeneous coordinates on $T_xM$. In this way the boundedness of the domain $D$ expresses the finiteness of the (direction dependent) speed of light. Also observe that the sum of  velocities,  ${\bm v}+{\bm w}$, makes no sense since it is not projectively covariant. In fact the sum of velocities does not have the classical mechanics interpretation of change of observer. However, the sum of momenta $p_\alpha+q_\alpha$ must indeed be well defined as we need it in order to express, for instance, the conservation of momentum in the collisions of particles. Since the  components $p_\alpha$ are identified with the homogeneous coordinates on the cotangent space, this addition operation is indeed well defined.
\end{remark}

We can complement Def.\ \ref{ksp} with
\begin{proposition} \label{ksq}
An affine sphere spacetime is equivalently determined by:
\begin{itemize}
\item[(d)] affine complete, definite, hyperbolic affine sphere subbundle  of the cotangent bundle with center in the zero section.
\end{itemize}
\end{proposition}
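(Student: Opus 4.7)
The plan is to establish (a) $\Leftrightarrow$ (d) directly by Legendre duality, which suffices in view of Theorem \ref{equ}. Starting from (a), I would first observe that the Legendre map (\ref{mia}) sends the sharp convex cone $\Omega_x$ bijectively onto the polar sharp convex cone $I^*_x$ in $T^*_xM$, and produces a 2-homogeneous Finsler Hamiltonian $\mathscr{H}$ on $\Omega^*$ with Lorentzian Hessian $g_p$. The key computational input, Eq.~(\ref{suz}), shows that the cotangent mean Cartan torsion satisfies $(I^*)^\alpha = -I^\alpha$, so it vanishes precisely when $I_\alpha = 0$. I would then invoke Theorem \ref{rel} in the cotangent setting (treating $\mathscr{H}$ as a Lorentz-Finsler Lagrangian on $T^*M$, which is legitimate because $\mathscr{H}$ has all the defining features) to conclude that the dual indicatrix $\mathscr{I}^*_x$ is a hyperbolic affine sphere centered at the origin of $T^*_xM$, with $H = -1$ relative to $\mu^* = \sqrt{|\det g_p|}\, \dd^{n+1}p$.

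Next I would verify the remaining conditions in (d). \emph{Definiteness} of the induced affine metric is automatic because on any Lorentz-Finsler indicatrix the radial direction carries the negative signature, leaving a Riemannian induced metric (cf.\ the decomposition (\ref{kkk})). \emph{Affine completeness} follows from Theorem \ref{hop}, which equates it with proper embedding for definite hyperbolic affine spheres. Proper embedding of $\mathscr{I}^*_x$ is inherited from the setup: since $\mathscr{L}$ extends continuously by zero to $\partial \Omega$, the Legendre map extends to a homeomorphism $\bar \Omega_x \setminus 0 \to \bar I^*_x \setminus 0$, so $\mathscr{H}$ likewise vanishes at $\partial I^*_x$ and the level set $\{2\mathscr{H} = -1\}$ is a properly embedded hypersurface asymptotic to the sharp cone $\partial I^*_x$.

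For the reverse implication (d) $\Rightarrow$ (a), I would run the same argument with the roles of $T_xM$ and $T^*_xM$ exchanged, using the involutivity of the Legendre transform on positively 2-homogeneous strongly convex functions: the dual Hamiltonian on the polar of $I^*_x$ is identified with the original $\mathscr{L}$, and $I_\alpha = -(I^*)^\alpha = 0$. Much of the heavy lifting is in fact already packaged in Proposition \ref{dus}, which I would cite to short-circuit the affine sphere identification once the vanishing of $(I^*)^\alpha$ is in hand.

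The main obstacle I anticipate is not in the fibrewise identification, which is quite mechanical once (\ref{suz}) and Proposition \ref{dus} are in place, but in confirming that the pointwise statement assembles into a smooth subbundle over $M$. This, however, reduces to noting that $\mathscr{H}$ inherits its regularity on $\Omega^*$ from $\mathscr{L}$ on $\Omega$ through the Legendre map, so the $x$-dependence of $\mathscr{I}^*_x$ matches that of $\mathscr{I}_x$, and the subbundle structure follows. A secondary point to check is the consistency of the centers under duality: that the origin of $T_xM$ corresponds to the origin of $T^*_xM$. This is forced by Theorem \ref{rel} applied on both sides, since the centroaffine transverse field is the position vector from the origin in each case, and vanishing mean Cartan torsion is equivalent to this field being Blaschke up to a constant.
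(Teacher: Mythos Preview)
Your approach is correct and matches the paper's: the paper does not give a separate proof for Proposition~\ref{ksq} but places it immediately after Eq.~(\ref{suz}) and Proposition~\ref{dus}, treating it as their direct consequence via the observation that $\mathscr{H}$ is a Lorentz--Finsler Lagrangian on the polar cone to which Theorem~\ref{equ} applies verbatim. Your write-up simply spells out what the paper leaves implicit.

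One step deserves tightening. Your argument for affine completeness asserts that the Legendre map extends to a homeomorphism $\bar\Omega_x\setminus 0 \to \bar I^*_x\setminus 0$; this would require $\mathscr{L}$ to be $C^1$ at the boundary, which is not part of the standing hypotheses (cf.\ the separate regularity discussion around Corollary~\ref{hud}). A cleaner route, already available from the paper's own computation, is to use the graph representation (\ref{dua}): the dual indicatrix is the graph ${\bm p}\mapsto(-u^*({\bm p}),{\bm p})$ of the Legendre transform $u^*$, and since $u$ has infinite slope at $\partial D$ the domain of $u^*$ is all of $\mathbb{R}^n$. A convex graph over an entire hyperplane is properly embedded, and Theorem~\ref{hop} then gives affine completeness. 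Alternatively, once you have $(I^*)^\alpha=0$ and $-\det g_p=\rho^{-2}$, simply invoke Theorem~\ref{equ} on the cotangent side with the pair $(I^*_x,\mu^*)$: uniqueness in Theorem~\ref{cor} forces $\mathscr{H}$ to coincide with the Cheng--Yau Hamiltonian of the polar cone, whose indicatrix is affine complete by construction.
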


The relationship between $\mathscr{H}$ and $u^*$ is not as simple as that between $\mathscr{L}$ and $u$. Using positive homogeneity of $\mathscr{H}$ it is easy to verify that $\mathscr{H}((p_0,{\bm p}))=-\frac{1}{2} s^2$ where $s>0$ is such that $\frac{p_0}{s}=-u^*(\frac{{\bm p}}{s})$, however in general it is not possible to find $s$ from this equation.

In observer coordinates we can  easily expand $u^*$. In fact we already know the expansion up to quadratic order, and taking into account that $(u^*)^{ij}$ is the inverse of $u_{ij}$ we get (here $a,b,c=1,2,3$)
\begin{align*}
\frac{\p^3 u^*}{\p p_k \p p_j \p p_i}=\frac{\p  (u^*)^{ij}}{\p p_k}=- (u^*)^{ia} (u^*)^{jb} \frac{\p v^c}{\p p_k} \frac{\p u_{ab}}{\p v^c}=- (u^*)^{ia} (u^*)^{jb} (u^*)^{kc} \frac{\p u_{ab}}{\p v^c}.
\end{align*}
We are interested on this value for ${\bm p}=0$ for which $(u^*)^{ij}=\delta^{ij}$ thus arguing similarly for the fourth order term we arrive at the expansion (here $s=1,2,3$ and the Cartan torsion and curvature are evaluated at the observer $\hat y$)
\begin{align}
\begin{split}
u^*({\bm p})=&\sqrt{1+{\bm p}^2}-\frac{1}{3} \,C_{ijk} p_i p_j p_k +\frac{1}{4!}(12 C_{sil} C_{sjk}-2 C_{ijkl}) p_i p_j p_k p_l + o(\vert {\bm p}\vert^4) . \label{esp}
\end{split}
\end{align}
We recall that the we have used a mass normalized notation. In order to get the non-normalized versions, it is necessary to replace $u\to u/m$, ${\bm p} \to {\bm p}/m$, $u^* \to u^*/m$ in the previous formulas.

\subsubsection{ K\"ahler-Einstein condition on the cotangent space} \label{kst}
Proposition \ref{ksq} implies the validity of cotangent versions of the K\"ahler-Einstein characterizations of the affine sphere condition, cf.\ Theorems  \ref{poh} and \ref{poj}.

It is convenient to define a map $*\colon \Omega \to \Omega^*$, as $*=m \ell \circ i_{\mathscr{I}}$, where $\mathscr{I}$ is the indicatrix of a Finsler Lagrangian on $\Omega$, $\ell$ is the Legendre map, and $i_{\mathscr{I}}$ is an inversion with respect to the indicatrix (for homogeneous cones this map was introduced by Vinberg \cite{vinberg63}). In other words, $y^*:=m \ell(y)/[-2 \mathscr{L}(y)]$. Since we have analogous ingredients in $\Omega^*$, we define $p^*:=m \ell^{-1}(p)/[-2 \mathscr{H}(p)]$. It can be easily checked that $*$ is an involutive bijection, that $ 4 \mathscr{H}(y^*) \mathscr{L}(y)=m^2$, and that, defined the K\"ahler potential of $\Omega^*$, $\log V^*$, with $ V^*=(\tfrac{-2\mathscr{H}}{m})^{-m/2}$, we have $V^*(y^*) V(y)=1$.

The Cheng-Yau metric is defined as before through $\hat g_p=\dd ^2 \log V^*$, and the formula analogous to (\ref{nxp}), namely  $\det \hat g_p=-(\det g_p) (V^*)^2$ implies $\det \hat g_p\vert_{p=y^*}=(\det \hat g)^{-1}\vert_y$ (recall that $g(s p)=g(p)$ for $s>0$).

From here we can introduce the
 K\"ahler Ricci tensor  for both the Lorentzian and Riemannian metrics
\begin{align}
K^*{}^{\alpha \beta}&:=-\frac{\p^2}{\p p_\alpha \p p_\beta} \log \vert \det {g}_p\vert ,  \qquad
\hat K^*{}^{\alpha \beta}:=-\frac{\p^2}{\p p_\alpha \p p_\beta} \log \det \hat{g}_p. \label{dlp}
\end{align}
then the analogous results to Theorems  \ref{poh} and \ref{poj} state that the K\"ahler-Einstein condition involving any of these tensors is equivalent to the affine sphere condition.
%
%
%

\subsection{Group--phase duality}
We have argued that the velocity ${\bm v}$ is an inhomogeneous projective coordinate on the tangent space and that ${\bm p}$ is a homogeneous projective coordinate on the cotangent space. This approach has given a description of $\mathscr{I}_x$ through an observer Lagrangian $u({\bm v})$.

We might ask what is the description of the indicatrices if we introduce coordinates with reversed roles: inhomogeneous on the cotangent space and homogeneous on the tangent space.

Let us define
\begin{align}
\check {\bm v}&=(u^*({\bm p}))^{-1} {\bm p}, \qquad &&\textrm{ phase velocity}\\
\check {\bm p}&=-(u({\bm v}))^{-1} {\bm v}, \qquad &&\textrm{ phase momenta}\\
\check u(\check {\bm v})&=-(u^*({\bm p}))^{-1}, \qquad &&\textrm{ phase Lagrangian}\\
\check u^*(\check {\bm p})&=-(u({\bm v}))^{-1}, \qquad &&\textrm{ phase Hamiltonian}
\end{align}
so that  $\check u^*, u^*>0$; $\check u, u<0$.
We recall that $E=u^*$ is the energy so the definition of phase velocity is the usual one. The immersion of the dual indicatrix ${\bm p}\mapsto (-u^*({\bm p}),{\bm p})$ reads in the new coordinates
\begin{equation}
\check {\bm v}\mapsto -\frac{1}{\check u(\check {\bm v})}\, (-1,\check {\bm v})
\end{equation}
while the immersion of the indicatrix ${\bm v}\mapsto -\frac{1}{ u( {\bm v})} (1,{\bm v})$ reads in the new coordinates
\begin{equation}
\check {\bm p} \mapsto (\check u^*(\check {\bm p}),\check {\bm p}).
\end{equation}
Recalling the definition of polar cone $I^*_x$,
and noticing that  $p_\alpha y^\alpha < 0$ iff $ {\bm v}\cdot \check {\bm v}< 1$, we arrive at
\begin{theorem}
The phase velocity domain is $\check D=D^*$, where $D^*=\{{\bm z}\colon {\bm z}\cdot {\bm v}< 1, \forall {\bm v} \in D\}$ is the dual of the velocity domain. In particular, the velocity is bounded if and only if the phase velocity is bounded. This is so in the physical observer coordinates.
\end{theorem}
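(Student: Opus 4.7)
The plan is to reduce the identity $\check D = D^*$ to the characterization of the polar cone $I_x^*$ and then handle the boundedness by standard convex duality applied in the observer frame. The hint is already supplied in the paper: once the equivalence $p_\alpha y^\alpha<0 \Leftrightarrow {\bm v}\cdot \check {\bm v}<1$ is verified, the main claim is immediate.

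First, I would translate membership in $\check D$ into a condition on momenta. By construction $\check {\bm v}={\bm p}/u^*({\bm p})$, and the dual indicatrix is parametrized by ${\bm p}\mapsto(-u^*({\bm p}),{\bm p})$, so $\check {\bm v}\in \check D$ iff the corresponding covector $p=(-u^*,{\bm p})$ lies in the polar cone $I_x^*$ of (\ref{pol}). By positive homogeneity the condition $p(y)<0$ for all $y\in I_x$ reduces to testing against $y$ on the indicatrix, $y=-u({\bm v})^{-1}(1,{\bm v})$ with ${\bm v}\in D$. A direct computation gives $p_\alpha y^\alpha = u({\bm v})^{-1}\bigl(u^*({\bm p})-{\bm p}\cdot {\bm v}\bigr)$; since $u<0$ on $D$ and $u^*>0$, this is negative iff ${\bm p}\cdot {\bm v}<u^*$, i.e., dividing by $u^*$, iff $\check {\bm v}\cdot {\bm v}<1$. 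Quantifying over ${\bm v}\in D$ yields $\check {\bm v}\in \check D$ iff $\check {\bm v}\in D^*$.

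For the boundedness statement I would invoke standard polar duality together with two interior conditions that hold in the observer coordinates. The vector $\hat y=e_0$ is interior to the sharp convex cone $\Omega_x$, so ${\bm v}={\bm 0}$ lies in the interior of $D$; dually, the observer's own momentum satisfies $\ell(\hat y)=-\dd y^0$, yielding ${\bm p}={\bm 0}$ with $u^*(0)=1$, so $\check {\bm v}={\bm 0}$ is interior to $\check D$. Given a ball $B(0,r)\subset D$, choosing ${\bm v}=r{\bm z}/|{\bm z}|\in D$ shows that any ${\bm z}$ with $|{\bm z}|>1/r$ fails the defining inequality of $D^*=\check D$, so $\check D$ is bounded; interchanging the roles of $D$ and $\check D$ gives the converse. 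Hence in the physical observer coordinates both domains are simultaneously bounded. The only delicate step is the sign reconciliation in the pairing computation, after which the iff is a routine application of convex duality.
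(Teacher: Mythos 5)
Your proof is correct and follows essentially the same route as the paper, whose entire argument is the observation that $p_\alpha y^\alpha<0$ iff ${\bm v}\cdot\check{\bm v}<1$ — exactly your pairing computation $p_\alpha y^\alpha=u^{-1}(u^*-{\bm p}\cdot{\bm v})$ combined with the definition of the polar cone. Your second paragraph merely makes explicit the boundedness claim (via the interiority of the origin in both $D$ and $\check D$ in observer coordinates) that the paper leaves as an immediate consequence.
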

We recall that the domain $D$ and hence $D^*$ depends on the choice of observer, namely on the chosen point on the indicatrix, unless the indicatrix is homogeneous in which case, up to rotations, all the domains $D$ coincide.

It is not difficult to check (e.g.\ \cite{minguzzi16c}) that for the isotropic theory the group-phase duality is trivial: the phase quantities coincide with the usual (group) quantities. On the contrary under anisotropy they differ and there will be  directions for which the phase velocity is larger than the (group) velocity and conversely.

We showed in Prop.\ \ref{dus} that $\mathscr{I}_x$  is an affine sphere with  affine mean curvature $H$ if and only if $\mathscr{I}^*_x$ is an affine sphere with  affine mean curvature $H^{-1}$, thus in an affine sphere spacetime  we have the phase dual of Eq.\ (\ref{mon})
\begin{equation} \label{mond}
\det \frac{\p^2 \check u}{\p \check v_i \p \check v_j}=\rho^{-2} \Big(\frac{1}{H \check u}\Big)^{n+2},
\end{equation}
where $\rho^2=\vert \det g_{\alpha \beta}\vert$ and in the mass normalized notation, $H=-1$.


%

\subsubsection{Momentum of a photon}
In this section we explain how to represent a photon. In general Lorentz-Finsler theories the momentum of a photon is given by a point $p\in \p \Omega_x^*$, where $\Omega_x^*$ is the polar cone of $\Omega_x$, the vertical domain of the Lorentz-Finsler Lagrangian at $x$.  In the observer coordinates of an observer $\hat{y}$, the momentum reads $p_{\mu}=h \nu(-1,\check {\bm p})$ where $\check {\bm p}\in \p D^*_{\hat y}$,  $\nu$ is the frequency and $h$ is Planck's constant. The frequency can also be written $-p_\mu \hat y^\mu=h \nu$. Observe that $\check {\bm p}$ is not necessarily normalized unless the speed of light is isotropic.

Suppose that the boundary $\p \Omega_x$ is $C^2$ and with strongly convex sections.
There is  a duality between the projective images of $\p \Omega^*_x$ and $\p \Omega_x$, in fact the hyperplane tangent to $\p \Omega^*_x$ at $p$ defines the direction of a null vector belonging to $\p \Omega_x$, and conversely. In the last section we shall show that this is all is required in order to have  a lightlike geodesic flow on spacetime and a well defined transport of momenta.

If we have more than that, namely if
$g$ is defined and non-degenerate at the boundary of $\Omega_x$, then the Legendre map establishes a one-to-one correspondence between $\p \Omega_x$ and $\p \Omega_x^*$ (cf.\ \cite{minguzzi13c,minguzzi14h}), namely between lightlike vectors and lightlike momenta, which, as we argue in the last section, is not really required for the observational interpretation of the theory.

Suppose we have less than that. Strong convexity of the cone guarantees that at the projective level the Legendre map $P\p \Omega_x\to P\p \Omega^*_x$ is injective, and differentiability guarantees that it is single valued. So  $\p \Omega_x$ could be
  non-differentiable and so have edges or  have  have flat (or non-strongly convex in the projective sense of $D$) parts. Of course, the cone $\Omega_x^*$ would have  a dual behavior. As a consequence, in the  rough theory a point on an edge of $\p \Omega_x$ corresponds to many momenta, and a point on an edge of $\p \Omega_x^*$ corresponds to many velocities. In general, if there is still strong convexity of $\Omega_x^*$ the lack of non-differentiability could be a minor problem since a convex function is almost everywhere differentiable, so the pathological momenta would form a set of vanishing measure. In any case in the general rough theory there is no more a one-to-one correspondence between $P\p \Omega_x$ and $P\p \Omega_x^*$ and the best way to represent a photon is probably as a suitable equivalence class of pairs $(p,y)\in \p \Omega_x^*\times \p \Omega_x$ with  $p_\mu y^\mu=0$.
These photons can then be regarded as superpositions of extremal photons, which correspond to values of $p,y$ belonging to the extremal points of $\p \Omega_x^*$ and $\p \Omega_x$.

If one is interested in alternative models which retain most of the good features of general relativity then the rough cone models should be discarded at least on a first study. In fact there are plenty of models with smooth but non-isotropic cones that might attract the researcher's attention. Rough models  have been included in our analysis because they make their appearance in the study of alternative Finslerian realizations of the relativity principle \cite{minguzzi16c} (see also Theor.\ \ref{muf}).

\subsection{Non-relativistic spacetimes are improper affine sphere bundles} \label{aae}
The new formulation of spacetime in terms of affine spheres makes it possible to understand the difference between relativistic and non-relativistic spacetimes without making reference to invariance groups.

Let us consider the dual affine indicatrix ${\bm p}\mapsto (-u^*({\bm p}), {\bm p})$ in observer coordinates. In the non-relativistic limit we can approximate $u^*=1+{\bm p}^2/2$ thus the dual indicatrix becomes ${\bm p}\mapsto (-1-{\bm p}^2/2, {\bm p})$ which is an improper affine sphere with affine normal $-\frac{\p}{\p p_0}$. This fact suggests to consider improper affine spheres in place of proper affine spheres.
In both the relativistic and non-relativistic theories the energy will be $-p_0$, that is, {\em minus} the zeroth component of the momenta in observer coordinates.

In order to understand the following construction let us recall the transformation rules for kinetic energy (over mass) and momentum (over mass) in classical physics under change of reference frame. Suppose to be in a reference frame $K$ and to change to the frame $K'$ which moves with velocity ${\bm v_0}$ with respect to $K$. If ${\bm p}$ and $T={\bm p}^2/2$ denote the momentum and kinetic energy of a particle in $K$, then those measured in $K'$ can be obtained with the affine transformation
\begin{align}
{\bm p}'&={\bm p}-{\bm v}_0, \label{ae1}\\
 T'&=T-{\bm p}\cdot {\bm v_0}+\frac{{\bm v}_0^2}{2}. \label{ae2}
\end{align}
Coordinates $(-T,{\bm p})$ can be thought  to be determined by a frame in a $n+1$-dimensional affine space. The equation $T={\bm p}^2/2$, being preserved by the frame change, defines an elliptic paraboloid on the affine space.

It is convenient to define the notion of non-relativistic spacetime in analogy with the characterization (d) (Prop.\ \ref{ksq}) for the relativistic case. The idea is to replace the proper affine spheres with {\em improper} affine spheres. While the former have a center and hence can be thought to belong to a vector space (or to an affine space with a selected special point) here we must necessarily work on an affine space.

Furthermore, it is necessary to work on the cotangent space rather than in the tangent space, for a parabolic affine sphere on the latter would determine a special vector field on $M$ given by the affine normal. This vector field would be interpreted as a privileged observer, a feature not present in classical theories. On the cotangent space the affine normal determines instead a one-form field $\psi$ which, if exact, could be written $\psi=\dd t$ where $t$ is the classical time foliation of $M$. Of course on the level sets $t=cnst.$ we would like to have defined a Riemannian (space) metric. This ingredient will be again a consequence of the parabolic sphere on the cotangent bundle.

Another difference is that proper affine spheres might be used to fix a volume form through the  normalization condition $H=-1$, which is why the volume form does not appear in (d). For improper spheres $H=0$, thus the volume form cannot be removed using this trick. Furthermore, parabolic affine spheres on the cotangent bundle obtained from each other by translations along the affine normal will be regarded as equivalent. Physically, this is related to the fact that in classical theories the energy is defined only up to an arbitrary constant.




We recall that every affine space $E$ is associated to a vector space $V$, and every vector space has a natural structure of affine space.  Analogously, every affine bundle is associated to a vector bundle, and  every vector bundle can be regarded itself as an affine bundle.
We also recall that a reference frame on $E$ is given by a pair $(p,\{ e_\alpha\})$ where $p\in E$ and $\{ e_\alpha\}$ is a basis on $V$, then every point $q\in E$ can be written $q=p+x^\alpha e_{\alpha}$ where $\{x^\alpha\}$ are the coordinates determined by the frame.
In the next definition the {\em affine cotangent bundle} is the cotangent bundle regarded as an affine space, or better said it is the affine bundle associated with the cotangent bundle.

\begin{definition} \label{nre}
A {\em non-relativistic spacetime} is a pair given by (a) a volume form on $M$, and (b)
an affine complete, definite, parabolic affine sphere subbundle  of the affine cotangent bundle (two parabolic affine spheres are regarded as equivalent if they are obtained through translation along the affine normal).
\end{definition}

Let $E^*$ denote the affine bundle and let $E^*_x$ be its fiber at $x\in M$. $E^*_x$ is an affine space associated with $T^*_xM$. The volume form on $M$ induces a dual volume form $\rho^{-1}(x) \dd p_0 \dd p_1 \cdots \dd p_n$ on $T_x^*M$.
We know from J\"orgens-Calabi-Pogorelov-Cheng-Yau's Theorem \ref{ksw} that the affine sphere is an elliptic paraboloid. We now choose a frame on $E^*_x$. Let $p$ belong to the elliptic paraboloid and let $\{e^\alpha\}$ be a basis of $T_x^*M$ such that $-e^0$ is the affine normal (it is defined using the dual volume form). Denoting with ${p_\alpha}$ the induced coordinates on $E_x$ we have that the affine sphere is a graph  $p_0=b^i p_i-\frac{1}{2}\, c^{ij} p_i p_j$. Applying some linear transformations which consist in a redefinition of $e^i, i=1,\cdots, n$, we can always bring the equation of the elliptic paraboloid to the form $p_0=-\frac{1}{2} {\bm p}^2$.  As the affine normal is $-\p/\p p_0$, Eq.\ (\ref{moo}) is satisfied with $\rho =1$, thus in observer coordinates the dual volume form is $\vert\dd p_0 \dd p_1 \cdots \dd p_n\vert$.

Any coordinate system on $E^*_x$ which brings the equation of the paraboloid to this canonical form is called {\em observer coordinate system} (see Fig.\ \ref{ind}). Clearly, the origin of the coordinate system belongs necessarily to the parabolic sphere since the choice $p_0={\bm p}=0$ satisfies the equation $p_0=-\frac{1}{2} {\bm p}^2$.

\begin{proposition}
Any two observer coordinate systems on $E^*_x$ with origin on the same point of the paraboloid are
 related by $p_{\tilde 0}=p_0$, $p_{\tilde i}=O^{ i}_{\ \tilde i} p_i$ where $O$ is an orthogonal transformation.

\end{proposition}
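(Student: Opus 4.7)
The plan is to exploit three geometric features that observer coordinates simultaneously encode, and to show that matching them leaves only the orthogonal group of freedom in the spatial block. Since both frames share the same origin $p^{*}$ on the paraboloid, the change of coordinates is a linear isomorphism, which I would write in block form as
\[
p_{\tilde 0}=a\,p_0+b^{j}p_{j},\qquad p_{\tilde i}=c_{i}\,p_0+D_{i}{}^{j}p_{j}.
\]
The goal is to determine $a$, $b^{j}$, $c_{i}$, $D_{i}{}^{j}$ from the defining properties of an observer coordinate system collected in the construction preceding the proposition.

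First I would use the fact that in each observer system the Blaschke affine normal is $-\p/\p p_0$ (respectively $-\p/\p p_{\tilde 0}$) at every point of the paraboloid, and that this vector field is an intrinsic object determined by the dual volume form and Theorem \ref{lpp}. Equating the two expressions via the chain rule forces $\p/\p p_{\tilde 0}=\p/\p p_0$, which gives $a=1$ and $c_{i}=0$. Equivalently, the zeroth basis vector of each observer frame must point along the common affine normal. Next I would impose that both systems bring the paraboloid to the canonical form $p_0=-\tfrac12|{\bm p}|^{2}$. Substituting the reduced transformation into the tilded equation gives, for every ${\bm p}$,
\[
-\tfrac12|{\bm p}|^{2}+b^{j}p_{j}=-\tfrac12\,\delta^{\tilde i\tilde j}D_{\tilde i}{}^{k}D_{\tilde j}{}^{l}\,p_{k}p_{l}.
\]
Matching the linear part in ${\bm p}$ yields $b^{j}=0$, and matching the quadratic part yields $\delta^{\tilde i\tilde j}D_{\tilde i}{}^{k}D_{\tilde j}{}^{l}=\delta^{kl}$, i.e.\ $D$ is orthogonal. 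Relabelling $D=O$ gives the claimed form, and as a consistency check $\det L=\det D=\pm 1$, compatible with the preservation of the dual volume form ($\rho=1$ in both systems by Theorem \ref{lpp}).

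The main obstacle I expect is the first step: one must verify that the Blaschke normal really is $-\p/\p p_0$ in every observer frame, rather than just in a single chosen one. This follows because the Blaschke normal is canonically attached to the pair (paraboloid, dual volume form), and both ingredients agree across observer frames by definition — the paraboloid as a subset of $E^*_x$ is unchanged and the dual volume form is the fixed one induced from $M$. Thus the identification of affine normals across frames is automatic, and the rest of the argument reduces to the elementary coefficient-matching above.
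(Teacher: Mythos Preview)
Your argument is correct and follows essentially the same approach as the paper: both start from a linear change of frame (same origin), use the invariance of the Blaschke affine normal to get $a=1$, $c_i=0$, and then use the canonical form of the paraboloid to extract $b^j=0$ and the orthogonality of the spatial block. The only cosmetic difference is that the paper splits the last step in two, arguing separately that $b^i=0$ (otherwise the graph $p_{\tilde 0}(\tilde{\bm p})$ would acquire a linear term) and that the spatial transformation is orthogonal (by invariance of the affine metric $\delta^{ij}\,\dd p_i\,\dd p_j$), whereas you obtain both at once by matching coefficients in the paraboloid equation; these are equivalent since the affine metric of the graph is precisely the Hessian of the graphing function.
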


\begin{proof}
For fixed origin observer coordinates are uniquely determined up to orthogonal transformations of the spatial part. To see this observe that any two choices, coming from the respective choices for the basis, are linearly related.
The affine normal and the affine metric of the parabolic sphere are independent of the coordinate system chosen.
Since the affine normal is an invariant we must have for any two choices of observer coordinates $\p/\p p_0= \p/\p p_{\tilde 0}$ which implies $\p p_{\tilde i}/\p p_0=0$ and $\p p_{\tilde 0}/\p p_0=1$. Similarly $\p p_{ i}/\p p_{\tilde 0}=0$. Since $p_{\tilde 0}=p_0+b^i p_i$ it must be $b^i=0$ otherwise we would have that the affine sphere graph $p_{\tilde 0}(\tilde {\bm p})$ has a linear term contrary to the definition of observer coordinates. Thus $p_0$ and $p_{\tilde 0}$ coincide. The spatial components linearly transform among themselves.
Since the affine metric is invariant, we have
 $\delta^{ij} \dd p_i \dd p_j= \delta^{\tilde i \tilde j}\dd p_{\tilde i} \dd p_{\tilde{j}}$, which implies that the transformation $\tilde {\bm p}({\bm p})$ is an orthogonal transformation. $\square$
 \end{proof}

The affine sphere represents once again the space of observers (massive particles) and the coordinate system depends furthermore on the orientation of the comoving laboratory. The just found $O(n)$ invariance of the canonical form of the paraboloid is expression of its isotropy. Contrary to the relativistic case {\em there is no anisotropic non-relativistic theory} since there is just one possible observer space, and this space is isotropic. This fact is a consequence of Theorem \ref{ksw} and physically should be expected. In fact if the speed of light is finite then its value can differ in different directions. On the contrary if it is infinite then it is isotropically so.

So far we have only obtained the kinematical aspects of the theory at the single point.

We observe that the normal to the paraboloid $\psi=-e^0=-\p/\p p_0\in T_x^*M$ and the metric $\gamma:=\delta^{ij} \dd p_i \dd p_j \in T_xM\otimes_M T_xM$, where $\{p_i\}$ are observer coordinates, are well defined as independent of the reference frame ($\gamma$ is an extension of the affine metric obtained imposing $\gamma(\cdot, e^0)=0$). We have therefore a triple $(M, \gamma, \psi)$ where $\gamma$ is a contravariant metric and $\psi$ is a non-vanishing one-form such that $\gamma^{\alpha \beta} \psi_\beta=0$. Any such triple is called {\em Galilei structure} \cite{toupin58,trautman63,kunzle72,dixon75,kunzle76,duval85,duval14}
and is precisely the geometrical structure at the basis of  classical (non-relativistic) physics.
The fact that given a Galilei structure it is possible to reconstruct a paraboloid on the cotangent space (and hence on the affine space associated to it) is pretty obvious using coordinates $\{p_\alpha\}$ which diagonalize $\gamma$ and such that $\psi=-\p/\p p_0$.

\begin{proposition}
The non-relativistic spacetime given by Def.\ \ref{nre} is equivalent to a Galilei structure $(M, \gamma, \psi)$.
\end{proposition}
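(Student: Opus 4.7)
The plan is to exhibit mutually inverse constructions between the two sides of the claimed equivalence, essentially formalizing the observations already sketched in the paragraphs preceding the statement. In one direction, starting from a non-relativistic spacetime $(M,\mu,\{N_x\})$, I apply the J\"orgens--Calabi--Pogorelov--Cheng--Yau theorem (Theor.~\ref{ksw}) fibrewise: the affine-complete definite parabolic sphere $N_x \subset E^*_x$ must be an elliptic paraboloid, which in observer coordinates takes the canonical form $p_0 = -\tfrac{1}{2}{\bm p}^2$. The affine normal $-\p/\p p_0$ is an invariant translational direction of $E^*_x$ and therefore descends to a non-vanishing covector $\psi_x \in T^*_xM$; the paragraph preceding the statement has already noted that this direction is intrinsic and compatible with the translation-equivalence, since translates along the affine normal share the same normal direction. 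The affine metric on $N_x$, which reads $\delta^{ij}\,\dd p_i\,\dd p_j$ in observer coordinates, extends uniquely to a symmetric contravariant tensor $\gamma_x$ on $T_xM$ annihilating $\psi_x$, namely $\gamma^{\alpha\beta}\psi_\beta = 0$. The preceding Proposition (observer frames are related by spatial orthogonal transformations) ensures that both $\psi$ and $\gamma$ are independent of the observer chosen.

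For the converse, given a Galilei triple $(M,\gamma,\psi)$, the degeneracy condition lets me select at each $x$ a basis $\{e^\alpha\}$ of $T^*_xM$ with $-e^0 = \psi_x$, $\gamma^{ij} = \delta^{ij}$ and $\gamma^{0\alpha} = 0$; this is the non-relativistic analogue of the observer frame. Fixing any origin in the affine space $E^*_x$, the locus $\{p : p_0 = -\tfrac{1}{2}{\bm p}^2\}$ is an elliptic paraboloid, and by Theor.~\ref{lpp} applied with $u({\bm p}) = \tfrac{1}{2}{\bm p}^2$, $c=1$ and $\rho=1$ it is a parabolic affine sphere (affine mean curvature $H=0$) with affine normal $-\p/\p p_0 = \psi_x$ and affine metric exactly $\delta^{ij}\,\dd p_i\,\dd p_j$, that is, the restriction of $\gamma$. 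A change of origin amounts to a translation along $\psi_x$, hence lies in the same equivalence class; a change of spatial basis by an orthogonal transformation leaves the paraboloid invariant. The volume form $\mu$ on $M$ is recovered from the condition $\rho = 1$ in observer coordinates imposed by Eq.~(\ref{moo}), which simultaneously fixes the normalization $\p/\p p_0$ of the affine normal used above.

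By construction the two assignments are mutual inverses: on either side one passes to observer coordinates, reads off the canonical form, and then reconstructs the same invariants from it. Smoothness in $x$ of $(\gamma,\psi)$ and of the affine sphere distribution transfer to each other through the smooth choice of observer frames, which is standard for frame bundles. The only subtle point, and what I would call the main conceptual (rather than technical) obstacle, is to argue that the quotient by translations along the affine normal on the affine-sphere side really corresponds to the arbitrariness of the zero of energy in classical mechanics; this is immediate here because such translations are precisely the vertical shifts $p_0 \to p_0 + c$, which change no Galilei datum since $\gamma$ annihilates $\psi$ and $\psi$ itself is unaffected.
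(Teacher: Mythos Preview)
Your proposal is correct and follows essentially the same route as the paper: the paper's ``proof'' is the discussion in the two paragraphs immediately preceding the proposition, which extracts $\psi$ as the affine normal and $\gamma$ as the extension of the affine metric in observer coordinates, and then remarks that the converse reconstruction of the paraboloid from $(\gamma,\psi)$ is ``pretty obvious using coordinates $\{p_\alpha\}$ which diagonalize $\gamma$ and such that $\psi=-\p/\p p_0$.'' You have simply filled in the details of that sketch, including the explicit appeal to Theor.~\ref{lpp} and the recovery of the volume form via $\rho=1$, and made the mutual-inverse verification explicit.
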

Once again we see that the definition through affine spheres does not involve metrical-algebraic concepts.

The metric $\gamma$ is equivalent to a metric on the quotient space $T^*_xM/\psi$, which gives an inverse metric $a$ on its dual $(T^*_xM/\psi)^*$. This dual is the vector space of linear functionals on $T^*_xM/\psi$, namely it is the vector space of linear functionals on $T^*_xM$ which vanish on $\psi$. This is precisely $S_x:=\ker \psi \subset T_xM$. Thus the metric $\gamma$ is equivalent to a metric $a$ on $S_x$ called {\em space metric}. Finally, if $\psi$ is locally (globally) exact then the Galilei structure is said to be locally (globally) integrable. There is a function $t\colon U\to \mathbb{R}$, $U\subset M$, such that $\psi=\dd t$. This function is the classical time of the theory.

\begin{remark}{\em K\"ahlerian characterization of non-relativistic  spacetimes}\\
We know that in the relativistic case the (local) affine sphere condition is equivalent to the (vertical) K\"ahler-Ricci flatness of the cotangent space (Sect.\  \ref{kst}). We have also argued that the cotangent indicatrix must be affine complete (Remark \ref{ric}), otherwise it would be possible to reach the maximum speed in finite proper time. These elements determined the characterization (c) of relativistic affine sphere spacetimes. Can non-relativistic spacetimes be characterized with similar conditions? The K\"ahler-Ricci flatness condition cannot be imposed at the level of a Lorentz-Finsler  Hamiltonian $\mathscr{H}$,  since working with an indicatrix asymptotic to a cone would imply an assumption on the finiteness of the speed of light. In the non-relativistic theory we do not have such Finsler Hamiltonian. Still we have the classical Hamiltonian $u^*$ which is used to express the indicatrix as a graph. As a consequence,  we expect that the next result should hold

\begin{theorem}
Assume that  the Blaschke (affine) metric\footnote{The Blaschke metric is $[\det (u^*)^{ij}]^{-1/(n+2)} (u^*)^{ij}$, since the indicatrix is a graph, see \cite[Ex.\ 3.3]{nomizu94}.} of the strictly convex smooth indicatrix ${\bm p}\mapsto (-u^*({\bm p}), {\bm p})$ is complete and the  K\"ahler-Ricci flatness condition
\begin{equation}
\frac{\p^2 }{\p p_i\p p_j } \log \det  \frac{\p^2 u^* }{\p p_i\p p_j }=0
\end{equation}
holds, then the indicatrix is an elliptic paraboloid.
\end{theorem}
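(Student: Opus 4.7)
The plan is to reduce the statement to the J\"orgens-Pogorelov-Calabi-Cheng-Yau classification of parabolic affine spheres (Theorem~\ref{ksw}). Integrating the K\"ahler-Ricci flatness condition $\p_i\p_j\log\det u^*_{ij}=0$ on $\mathbb{R}^n$ gives
\[
\det \frac{\p^2 u^*}{\p p_i\p p_j}= e^{a+b^i p_i}
\]
for some constants $a\in\mathbb{R}$ and $b=(b^i)\in\mathbb{R}^n$. If $b=0$ the right-hand side is the constant $e^a$, and Theorem~\ref{lpp} (choosing the volume form $\rho\,\dd^n p$ so that $\rho^2$ absorbs $e^a$) identifies the graph $(-u^*({\bm p}),{\bm p})$ as a definite improper affine sphere with constant Blaschke normal parallel to $-\p/\p p_0$. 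Completeness of its Blaschke metric is exactly the hypothesis of Theorem~\ref{ksw}, which forces the graph to be an elliptic paraboloid.

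The crux is therefore to rule out $b\neq 0$ using the completeness of the Blaschke metric
\[
h^B_{ij}= e^{-(a+b\cdot p)/(n+2)}\,u^*_{ij}.
\]
Assume for contradiction that $b\neq 0$; after an affine change of coordinates take $b=|b|e_1$. The eigenvalue bound $\lambda_{\min}(u^*_{ij})\le (\det u^*_{ij})^{1/n}$ together with the conformal factor yields
\[
\lambda_{\min}(h^B_{ij}(p))\le e^{2(a+|b|p_1)/(n(n+2))},
\]
so the smallest Blaschke eigenvalue collapses exponentially as $p_1\to-\infty$. Following the instantaneous shrinking eigendirections, and combining this with Hadamard's inequality $\det u^*_{ij}\le\prod_i u^*_{ii}$ and a Fubini-type integration across the transverse slabs $\{p_1=\mathrm{const}\}$, I would construct an $h^B$-Cauchy sequence $(p_k)$ with $(p_k)_1\to-\infty$, which has no limit in $\mathbb{R}^n$ and thus contradicts completeness.

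The main obstacle is precisely this last construction: the eigendirection of the collapsing eigenvalue of $u^*_{ij}$ depends on $p$ and need not be aligned with $e_1$, so one cannot simply integrate along the $-e_1$ ray; the transverse slab integration is needed to balance the exponential decay of one eigenvalue of $h^B$ against the potential exponential growth of the complementary ones. Once $b=0$ is secured in this way, the first paragraph's argument closes the proof.
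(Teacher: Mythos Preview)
Your reduction is sound: integrating the K\"ahler--Ricci flatness condition on a convex domain yields $\det u^*_{ij}=e^{a+b\cdot p}$, and when $b=0$ the graph is a parabolic affine sphere, so Theorem~\ref{ksw} applies directly under the completeness hypothesis. For the record, the paper does not supply its own argument here; it quotes the statement as a theorem of Li and Xu \cite{li09,li09b} (later simplified in \cite{xu15}), whose proof proceeds via differential inequalities for the cubic form of the affine hypersurface rather than through a direct length estimate of the kind you sketch.

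The gap is exactly where you flag it, and it is genuine. The bound $\lambda_{\min}(h^B)\le e^{2(a+|b|p_1)/(n(n+2))}$ is correct but does not by itself yield a curve of finite $h^B$-length with $p_1\to-\infty$. Any curve reaching $p_1=-\infty$ must carry net $e_1$-displacement, yet you only control the shrinking eigendirection of $h^B$, which is free to rotate with $p$ and need not have a definite $e_1$-component; an integral curve of the smallest-eigenvalue direction could, a priori, wander in the transverse variables without $p_1$ ever diverging. The Hadamard/Fubini sketch does not close this: Hadamard gives $\prod_i u^*_{ii}\ge e^{a+b\cdot p}$, a lower bound on the product of diagonal entries that says nothing about any individual $h^B_{ii}$ being small, and on each slab $\{p_1=c\}$ the $h^B$-volume form is $e^{(a+|b|c)/(n+2)}\,\dd p_2\cdots\dd p_n$, whose transverse integral is infinite, so no finite-width crossing emerges. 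Converting the determinant collapse into an actual incompleteness statement is precisely the nontrivial analytic content that the Li--Xu papers supply; as written, your proposal is a correct outline with its crux left open.
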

In short  the condition on the parallel nature of the Blaschke normals used in the Def.\ \ref{nre} of non-relativistic spacetime, and related to the existence of a `local space' $\ker \psi$, can be replaced by the K\"ahler-Ricci flatness condition. In fact, this theorem along with many of the previously physically motivated result has been  proved. It can be found in a recent work by Li and Xu \cite{li09,li09b,xu15} in which they generalized the classical J\"orgens at al. theorem (Theor.\ \ref{ksw}). It would be nice if one could replace the condition on the affine completeness of the metric with the requirement of unboundedness of the velocity, namely with the request that the map ${\bm p}\mapsto {\bm v}:=\p u^*({\bm p})$ is surjective. This is again possible, see \cite[Main Th.\ $\!\!\!$]{li09b}. It is even possible to remove the condition on the unboundedness of the velocity provided $u\to \infty$ as we approach the maximum velocity. This condition roughly is demanding that though in principle the velocity could be bounded in some directions, the dual indicatrix should not be contained in any cone, namely the `relativistic' Finslerian framework should not apply.
\end{remark}

\begin{remark}{\em Minimal kinematical data}\\
It is convenient to take one step back and consider a minimal set of data which clarifies the connection between relativistic and non-relativistic theories. Let us consider a smooth strictly convex hypersurface $\mathscr{I}^*_x$ on the affine cotangent space. We assume that there are no distinct parallel hyperplanes tangent to $\mathscr{I}^*_x$. Let $\Omega_x$ be the collection of $y\in T_xM$ such that for some $q \in T_x^*M$, $\mathscr{I}^*_x\subset q+\{p\in T_x^*M, y(p)<0\}$.
In other words, we are considering all the tangent hyperplanes to $\mathscr{I}^*_x$ using those to construct a cone of vectors on the tangent space. To every point $p\in \mathscr{I}^*_x$ corresponds a half line (passing through the origin) in $\Omega_x$ given by those vectors whose kernel on $T_x^*M$ is parallel to the hyperplane tangent to $\mathscr{I}^*_x$ at $p$. Conversely, to every half-line in $\Omega_x$  corresponds a momenta on $\mathscr{I}^*_x$ obtained (Hilbert's trick) translating the kernel of the vector till it touches $\mathscr{I}^*_x$. It is also not difficult to show that $\Omega_x$  is open and convex. The strict convexity and smoothness of $\mathscr{I}^*_x$ implies that there is a bijection between momenta, i.e.\ points of  $\mathscr{I}^*_x$,  and velocities, i.e.\ half-lines of $\Omega_x$. Observe that so far   $\Omega_x$ could be non-sharp and even an half-space. The cone $\Omega_x$ gives the set of allowed directions for the timelike curves on $M$. The bijection with $\mathscr{I}^*_x$ provides (for unit mass particles) the momenta of the particle moving on such worldline. An hypersurface $\mathscr{I}^* \subset T^*M$ with the properties given above might be called a {\em minimal spacetime}.

We are going to show that in both relativistic and non-relativistic spacetimes $\mathscr{I}^*_x$ is regarded as an equiaffine hypersurface, the difference being that in the former case the center is `finitely placed' on the cotangent space while in the latter case it is at infinity.  The affine sphere condition implies that there is just one natural choice of center. To see this let us introduce affine coordinates on the cotangent affine space so that $\mathscr{I}^*_x$ is the image of a graph ${\bm p}\mapsto (-u^*({\bm p}), {\bm p})$, $u^*>0$, (observe that we have not yet introduced a star operation, this is just notation). The vector which corresponds to $p=(-u^*({\bm p}), {\bm p})$ is $y=s(1,\p u^*)=:s(1,{\bm v})$, $s>0$. This is a half-line. If we want to select just a representative we need to find an hypersurface $\mathscr{I}\subset \Omega_x$ transverse to the rays which serves as  normalization.

If the hypersurface $\mathscr{I}^*_x$ is regarded as centroaffine with `finitely placed' center and we had wisely selected the affine coordinates so that the center corresponds to the origin, then, since $p$ is transverse to $\mathscr{I}^*_x$, the most natural normalization is obtained imposing $y(p)=-1$, namely $s=-1/u$. In this way we get the representation of the relativistic indicatrix ${\bm v}\mapsto -\frac{1}{u({\bm v})} (1,{\bm v})$. If instead the center is at infinity then the affine coordinates are chosen so that $e_0=\p/\p p_0$ points towards the center, and since it is transverse to $\mathscr{I}^*_x$  the most natural normalization is $y(e_0)=1$ which gives the non-relativistic indicatrix ${\bm v}\mapsto  (1,{\bm v})$. Observe that even in the non-relativistic case the domain of  ${\bm v}$ could be bounded since it is just a section of $\Omega_x$. Note also that there could be several choices of center for $\mathscr{I}^*_x$. It is the affine sphere (or the K\"ahler) condition which implies that there is a natural choice. In the non relativistic case it implies that $\Omega_x$ is an half space, which will allow us to interpret  $\mathscr{I}$ as the first Jet bundle (see below). Another role of the affine sphere condition is the following: on a minimal spacetime given the distribution $x\mapsto \Omega_x$ we cannot recover $\mathscr{I}^*_x$ (up to translations), in fact any strictly convex compact deformation of $\mathscr{I}^*$ would give the same light cone distribution. The affine sphere condition makes it possible to recover the minimal data.
\end{remark}

\begin{remark}
There is a subtle point which requires some clarification. It is well known that ${\bm v} \mapsto (-u({\bm v}),{\bm v})$ is a parabolic affine sphere on $\mathbb{R}^{n+1}$ if and only if ${\bm p}\mapsto (-u^*({\bm p}),{\bm p})$ is a parabolic affine sphere on the dual space $\mathbb{R}_{n+1}$, where $u^*$ and $u$ are Legendre transform of each other. This fact is rather elementary as the Hessian of $u$ is the inverse of the Hessian of $u^*$, and the parabolic affine sphere equation is a condition on the unimodularity for the Hessian (\ref{moo}). One might therefore suspect that  a parabolic affine sphere on the  affine cotangent bundle should determine a parabolic affine sphere on the affine tangent bundle and conversely. We have seen a similar result in the hyperbolic case. This expectation is incorrect, for one reason, while in the proper affine sphere case one can work in dual {\em vector} spaces rather than in  {\em affine} spaces, and the correspondence can be made coordinate independent as it is evident in the Legendre approach passing through $\mathscr{L}$ and $\mathscr{H}$, see Eq.\ (\ref{mia}), here the duality map would depend on the coordinate systems placed on the affine cotangent space  and is therefore, ill defined.
\end{remark}

Still there is one question left. The observer space in the hyperbolic case can be given a geometrical interpretation using either a tangent or a cotangent approach. Here, in the non-relativistic parabolic theory we have given only the cotangent version of the observer space. What is the tangent counterpart?

The 1-form field $\psi$ not only determines its kernel $S_x$, it also determines the locus $\mathscr{I}_x=\{y\in T_xM\colon \psi(y)=1\}$ which is an affine space modeled over the vector space $S_x$. Given $\mathscr{I}_x$ one has a unique one form $\psi$ such that $\mathscr{I}_x=\psi^{-1}(1)$. The triple $(M,\gamma,\psi)$ is therefore equivalent to a triple $(M,\mathscr{I}, a)$ where $x\mapsto \mathscr{I}_x$ is a distribution of affine subspaces of the slit tangent bundle, and $a$ is a metric over them. The observer space in the tangent approach is $\mathscr{I}_x$.  If $\psi$ is exact  we have a time fibration $t\colon M \to T$, $T\subset \mathbb{R}$, and $\mathscr{I}_x$ is nothing but the Jet space $J_x^1(\mathbb{R},M)$, of sections (classical motions) $s\colon T\to M$.

Another way to reach the same conclusion is as follows.
The observer space is an elliptic paraboloid on the cotangent affine space $E^*_x$. The quotient $E^*_x/\psi$ where $\psi$ is the affine normal is in one-to-one correspondence with the paraboloid. It is an affine space modeled over the vector space $T_x^*M/\psi$. As a consequence, since the latter is nothing but the dual space to $S_x$ and $\gamma/\psi$ provides a bijection between the former and the latter, the observer space can be identified with an affine space $\mathscr{I}_x$ associated to $S_x$.

Let $p\in \mathscr{I}_x$ denote an observer. We can find a basis $\{e_\alpha\}$ of the tangent space such that $e_0=p$, $e_i\in S_x$ and $a(e_i, e_j)=\delta_{ij}$. Then every element of the tangent space can be written $y=(u,v_1,\cdots,v_n)$ where the locus $\mathscr{I}_x$ reads $u=1$. This choice of coordinates is an {\em observer coordinate system}, and it is clear that any two choices for the same observer $p$ differ only for the orientation of the axes, namely for an orthogonal transformation of ${\bm v}$. Changes of coordinates between different observers are linear, and since they must preserve both the equation $y^0=1$ and the diagonal form of $a$ they are necessarily of the form $u'=u$, $v^{\tilde i}=O^{\tilde i}_{\ j}(v^j-v_0^j u)$. Let $q_\alpha$ be coordinates in $T_x^*M$ dual to $y^\alpha$, then the coordinate change induces a coordinate change $-q_{\tilde 0}=-q_0+{\bm v}_0 \cdot {\bm q}$, $q_{\tilde i} O^{\tilde i}_{\ j}=q_j$, which is the linear part of the affine change (\ref{ae1})-(\ref{ae2}) where affine coordinates are used.
We remark that there is no Legendre connection between $u$ and $-p_0$.

\begin{figure}[ht!]
\centering
 \includegraphics[width=9cm]{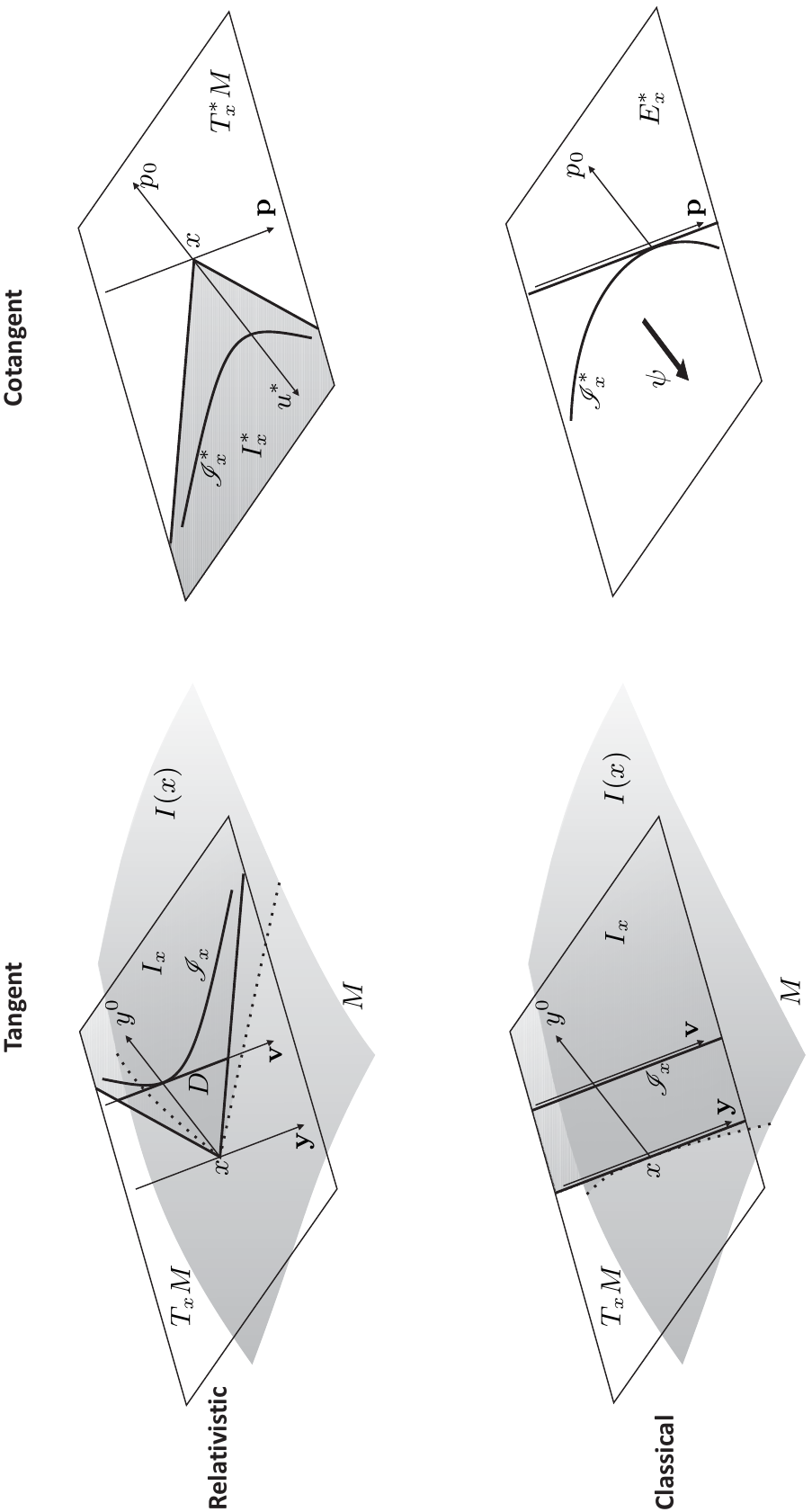}
 \caption{The observer coordinates in the tangent/cotangent space and in the relativistic/classical case. The observer is the intersection between the $y^0$ (or $p_0$) axis and the  indicatrix $\mathscr{I}_x$ (resp.\ $\mathscr{I}_x^*$). The space axes are tangent to the indicatrix at the observer point and diagonalize the affine metric. In the cotangent approach it becomes clear that the difference between relativistic and classical theories stays in the nature of the affine complete affine sphere, respectively hyperbolic or parabolic.} \label{ind}
\end{figure}

We have established that the velocity space is given by a hyperplane  $\mathscr{I}_x$ parallel to $S_x$ on $T_xM$. The cone of the relativistic theory degenerates to the open half space $I_x$  containing $\mathscr{I}_x$. A curve, $s \mapsto x(s)$ is timelike if its tangent vector belongs to $I_{x(s)}$ for every $s$. The following result is pretty easy to prove (take a piecewise timelike curve which approximates a parallelogram whose opposite sides belong to the integral lines of two  vector fields in $\ker \psi$)
\begin{proposition}
The non-relativistic spacetime is locally chronological if and only if $\ker \psi$ is locally integrable, namely $\dd \psi\wedge \psi=0$.
\end{proposition}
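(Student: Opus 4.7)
The plan is to prove both directions separately, using Frobenius' theorem for the easy one and a Lie-bracket parallelogram construction (as hinted) for the other.

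For the forward direction, assume $\dd\psi\wedge\psi=0$. By Frobenius, $\ker\psi$ is an involutive distribution, so in a neighbourhood $U$ of any point we may write $\psi=f\,\dd t$ for smooth functions $f,t$ with $f$ nowhere zero; reversing the sign of $t$ if necessary we take $f>0$. Any curve $s\mapsto x(s)$ that is timelike in $U$ satisfies $\psi(\dot x)>0$, since by definition $\dot x\in I_{x(s)}=\{y:\psi(y)>0\}$, and therefore $\tfrac{d}{ds}(t\circ x)=f^{-1}\psi(\dot x)>0$. Hence $t$ is strictly increasing along any timelike curve contained in $U$, which rules out closed timelike curves in $U$.

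For the converse, suppose $\dd\psi\wedge\psi\neq 0$ at some $x_{0}$. Then $\ker\psi$ is not involutive at $x_{0}$, so there exist smooth sections $X,Y$ of $\ker\psi$ near $x_{0}$ with $\psi([X,Y])(x_{0})\neq 0$; replacing $Y$ by $-Y$ if needed, assume $\psi([X,Y])(x_{0})<0$. Fix a timelike vector field $Z$ near $x_{0}$, so $\psi(Z)>0$. With $\epsilon,\delta>0$ small, form the piecewise-smooth loop obtained by flowing successively by the vector fields $X+\delta Z$, $Y+\delta Z$, $-X+\delta Z$, $-Y+\delta Z$, each for parameter time $\epsilon$. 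Every segment is timelike, because $\psi$ of its tangent equals $\delta\,\psi(Z)>0$. The standard Baker--Campbell--Hausdorff estimate for composed flows gives the displacement from start to end
\[
D(\epsilon,\delta)=4\epsilon\delta\,Z(x_{0})+\epsilon^{2}[X,Y](x_{0})+O(\epsilon^{2}\delta+\epsilon^{3}).
\]

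Close the loop by appending one further segment in the direction $-D(\epsilon,\delta)$. This closing segment is timelike provided $\psi(-D(\epsilon,\delta))>0$, i.e.
\[
-4\epsilon\delta\,\psi(Z)-\epsilon^{2}\psi([X,Y])(x_{0})+o(\epsilon^{2})>0,
\]
which, using $\psi([X,Y])(x_{0})<0$, amounts to $\epsilon\,|\psi([X,Y])(x_{0})|>4\delta\,\psi(Z)+o(1)$. Choosing $\delta\ll\epsilon$ and $\epsilon$ small enough satisfies the inequality and confines the loop to any preassigned neighbourhood of $x_{0}$, contradicting local chronology.

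The only genuinely delicate point is the final step: the slanted sides contribute to the displacement linearly in $\delta$, whereas the curvature-of-the-distribution contribution $\epsilon^{2}[X,Y]$ is quadratic in $\epsilon$, so one must choose the slant $\delta$ small compared with $\epsilon$ to ensure that the nonzero Lie bracket dominates and makes the closing segment timelike in the correct direction. Once this asymptotic balance is set up, the construction works in any neighbourhood of $x_{0}$.
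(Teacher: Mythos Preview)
Your proof is correct and follows exactly the approach the paper sketches: the forward direction via Frobenius and a local time function, and the converse via a piecewise timelike curve approximating the commutator parallelogram of two sections of $\ker\psi$. The paper gives only a one-line hint for the construction, so you have in fact supplied the details it omits, including the tilting by $\delta Z$ and the correct asymptotic balancing $\delta\ll\epsilon$ needed to make the closing segment timelike.
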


 In this work we considered just the kinematical aspects of the relativistic and non-relativistic theories. We can add further structure as follows.
In order to compare the kinematical structure at different spacetime points we shall need a connection. In the classical case a natural possibility could be given by a torsionless linear connection $\nabla$ of the tangent bundle such that $\nabla \gamma=\nabla\psi=0$, namely a connection that respects the kinematical structure \cite{toupin58,kunzle72}. Let $T(X,Y)=\nabla_XY-\nabla_Y X-[X,Y]$, since
\begin{align*}
\dd \psi(X,Y)&=X (\psi(Y))-Y(\psi(X))-\psi([X,Y])\\
&=\psi(T(X,Y))+(\nabla_X\psi)(Y)-(\nabla_Y\psi)(X)=0 ,
\end{align*}
the one-form $\psi$ is closed and hence locally exact. Thus the existence of a classical time is really an outcome of the theory.



\section{Geodesic flow on the light cone bundle and causality} \label{qon}

Let $N=\p \Omega$ be the hypersurface of $TM\backslash 0$ given by the lightlike vectors.
In this section we assume that
\begin{quote}
($\star$)  $\qquad  N$ is $C^{2,1}$ and the open sharp cones $\Omega_x$ are strongly  convex\footnote{Actually we could just demand $C^{1,1}$ differentiability in the $x$ variable, and $C^{2,1}$ differentiability in mixed or vertical variables. The convexity assumption  means that the local graph of a section of $N_x$, being $C^2$, has positive second order term in the Taylor expansion. The convexity requirement on the cone has strong implications for differentiability (by Alexandrov's theorem convexity implies twice differentiability a.e.) so these conditions are likely to be relaxable.}
\end{quote}
and prove that these differentiability conditions guarantee a well defined notion of lightlike geodesic on spacetime.
  Observe that due to the low regularity of $\mathscr{L}$ at the boundary of the cone, we cannot expect to construct the lightlike spray from the metric as done in the Lorentz-Finsler theories for which $\mathscr{L}$ is $C^2$ and Lorentzian on $\bar{\Omega}$ (for other results related to the differentiability of the Lagrangian at the light cone see \cite{minguzzi14h}).

\begin{proposition}
Assume ($\star$). There exist maps $\hat{\mathscr{L}}\colon U \to \mathbb{R}$, $U$ open set,  $N \subset U$, $\hat{\mathscr{L}}\vert_N=0$, $\dd  \hat{\mathscr{L}}_x\vert_N\ne 0$, $\hat{\mathscr{L}}<0$ on $\Omega\cap U$, $\hat{\mathscr{L}}$ is positive homogeneous of degree two in the fiber    and it has the degree of differentiability of $N$.
\end{proposition}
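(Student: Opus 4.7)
The plan is to reduce $N$ to a $C^{2,1}$ hypersurface in a fixed fiberwise sphere bundle, construct there a signed defining function via an auxiliary distance, and extend it homogeneously of degree two.

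Fix any smooth Riemannian metric $h$ on $M$ (available by paracompactness) and set $F(y):=\sqrt{h(y,y)}$, a smooth function on $TM\setminus 0$ that is positive homogeneous of degree one. Its unit level set $\Sigma:=\{y\in TM\setminus 0 : F(y)=1\}$ is a smooth submanifold transverse to the radial direction. Since $N$ is an $\mathbb{R}_+$-invariant $C^{2,1}$ hypersurface of $TM\setminus 0$, the intersection $\tilde N:=N\cap \Sigma$ is a $C^{2,1}$ hypersurface of $\Sigma$ which locally separates $\tilde\Omega:=\Omega\cap\Sigma$ from its complement. Equip $\Sigma$ with any auxiliary smooth Riemannian metric; by the classical regularity of the distance function to a $C^{2,1}$ hypersurface, the signed distance $d_{\tilde N}$ (chosen negative on $\tilde\Omega$) is $C^{2,1}$ on a tubular neighbourhood $W\subset\Sigma$ of $\tilde N$ and satisfies $\dd d_{\tilde N}\neq 0$ along $\tilde N$.

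Set $U:=\{y\in TM\setminus 0 : y/F(y)\in W\}$, an open conic neighbourhood of $N$, and define
\[
\hat{\mathscr{L}}(y) := F(y)^{2}\, d_{\tilde N}\!\bigl(y/F(y)\bigr), \qquad y\in U.
\]
This function is $C^{2,1}$ (it inherits the regularity of $N$), and is positive homogeneous of degree two: for $s>0$ one has $F(sy)=sF(y)$ and $sy/F(sy)=y/F(y)$, whence $\hat{\mathscr{L}}(sy)=s^{2}\hat{\mathscr{L}}(y)$. It vanishes precisely on $N$, and it is negative on $\Omega\cap U$ because $d_{\tilde N}<0$ on $\tilde\Omega$.

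It remains to verify that the fiber differential $\dd\hat{\mathscr{L}}_x$ does not vanish on $N$. Let $y\in N_x$ and pick $v\in T_xM$ transverse to $T_yN_x$ inside $T_xM$. Since $N_x$ is a cone, the Euler vector $y$ is tangent to $N_x$ at $y$, so $v$ has non-radial component; consequently $\pi_* v\neq 0$, where $\pi(y):=y/F(y)$, and because $v$ is vertical and transverse to $\tilde N_x\subset \Sigma_x$ the pushforward $\pi_* v$ is transverse to $\tilde N$ in $\Sigma$. Using $d_{\tilde N}(y/F(y))=0$ on $N$ we get
\[
\dd\hat{\mathscr{L}}_x(v)=F(y)^{2}\,\dd d_{\tilde N}|_{y/F(y)}(\pi_* v)\neq 0.
\]
The only nontrivial technical input is the $C^{2,1}$ regularity of the signed distance to a $C^{2,1}$ hypersurface in a Riemannian manifold; the strong convexity of $\Omega_x$ is not used at this stage and will instead enter in the subsequent construction of the lightlike geodesic flow.
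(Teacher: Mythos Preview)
Your proof is correct and follows essentially the same strategy as the paper: intersect the conic hypersurface $N$ with a transversal section of the slit tangent bundle, produce there a $C^{2,1}$ defining function via a tubular-neighbourhood construction, and extend homogeneously of degree two. The only cosmetic difference is your choice of section—the unit sphere bundle of an auxiliary Riemannian metric—whereas the paper uses a distribution of affine hyperplanes $P_x$ (locally $y^0=1$) and then invokes the formula $\hat{\mathscr{L}}=-\tfrac12 (y^0)^2\,\hat u^2(\bm y/y^0)$; your version has the mild advantage of being manifestly global and coordinate-free.
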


Here  $\hat{\mathscr{L}}_x(y)=\hat{\mathscr{L}}(x,y)$ and it is understood that $\dd  \hat{\mathscr{L}}_x=\frac{\p \hat{\mathscr{L}}}{\p y^\mu} \dd y^\mu$ while $\dd  \hat{\mathscr{L}}=\frac{\p \hat{\mathscr{L}}}{\p y^\mu} \dd y^\mu+\frac{\p \hat{\mathscr{L}}}{\p x^\mu} \dd x^\mu$, thus we have also $\dd \hat{\mathscr{L}}\ne 0$ on $N$.

 Notice that  $\hat{\mathscr{L}}$ is not asked to define  affine spheres on $\Omega$, in fact its domain is just an open neighborhood of $N$, nor it  is demanded to have invertible Hessian on $U$. Of course, there will be many functions with these properties; they might be called {\em subsidiary Lagrangians}. We denote with $\hat L$ the family of such functions.
\begin{proof}
Take a smooth section $x\mapsto P_x\cap N$ of the light cones introducing a distribution of affine hyperplanes $P_x$ on $TM$ (in suitable local coordinates this distribution $x \to P_x$ reads $y^0=1$). By the existence of tubular neighborhoods we can introduce a function $\hat u^2$ near the section so that it has the same degree of differentiability of $N$, vanishes and has non-zero differential on $P_x\cap N$. The function $\hat{\mathscr{L}}$ is obtained using Eq.\ (\ref{osd}). $\square$
\end{proof}


 The  affine sphere Lagrangian $\mathscr{L}$ in general does not belong to $\hat L$. Here the idea is to define a dynamics through $\hat{ \mathscr{L}}$ and show that it is really independent of $\hat{\mathscr{L}}$. As consequence, the dynamics follows solely from the distribution of cones $N$. The result is analogous to the general relativistic result according to which unparametrized lightlike geodesics depend only on the light cone distribution. In that case the result is immediate given the fact that this distribution determines the conformal class of the metric. Here the proof  is totally different since we have the added difficulties of anisotropy and lack of twice differentiability at the cone.


As a first result we prove that every function in $\hat L$ has Lorentzian Hessian at $N$.

\begin{lemma}
Asssume $(\star)$. Every element of $\hat L$ has Lorentzian vertical Hessian on $N$.
\end{lemma}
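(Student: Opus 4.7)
The plan is to exploit the positive two-homogeneity of $\hat{\mathscr{L}}$ together with the strong convexity of $\Omega_x$. Fix $y\in N_x:=N\cap T_xM$ and write $\hat g$ for the vertical Hessian $\p^2\hat{\mathscr{L}}/\p y^\mu\p y^\nu$ at $(x,y)$, regarded as a symmetric bilinear form on $T_xM$.

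First I would apply Euler's identity twice to obtain $\hat g(y,\cdot)=\dd\hat{\mathscr{L}}_x$ and $\hat g(y,y)=2\hat{\mathscr{L}}$, both evaluated at $y$. Since $\hat{\mathscr{L}}|_N=0$, $y$ is $\hat g$-null; since $\dd\hat{\mathscr{L}}_x|_N\ne 0$, the covector $\hat g(y,\cdot)\in T_y^*T_xM$ is non-zero, and its kernel is exactly $T_yN_x$. Thus $y\in T_yN_x$ and $\hat g(y,\cdot)$ vanishes identically on $T_yN_x$.

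Next I would compute $\hat g(X,X)$ for $X\in T_yN_x$ via the affine probe $t\mapsto y+tX$. The second order Taylor expansion at $t=0$ reads
\[
\hat{\mathscr{L}}(y+tX)=\tfrac{t^2}{2}\,\hat g(X,X)+o(t^2),
\]
the zeroth and first order terms vanishing because $y\in N$ and $X\in\ker\dd\hat{\mathscr{L}}_x$. When $X\propto y$ the line stays on the cone $N_x$ and the identity only reproduces $\hat g(y,y)=0$. When $X\in T_yN_x\setminus\mathbb R y$, strong convexity of $\Omega_x$, applied to a transverse section of the cone through $y$ (which by $(\star)$ is a strictly convex $C^{2,1}$ body), yields a quadratic lower bound $\hat{\mathscr{L}}(y+tX)\ge c(X)t^2$ for small $t$ and some $c(X)>0$. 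Hence $\hat g|_{T_yN_x}$ is positive semidefinite of rank $n-1$, with one-dimensional kernel exactly $\mathbb R y$.

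To finish, pick any $X_0\in T_xM$ transverse to $N_x$ at $y$ and set $\kappa:=\dd\hat{\mathscr{L}}_x(X_0)=\hat g(y,X_0)\ne 0$. A standard Gram--Schmidt against a basis $\{E_1,\ldots,E_{n-1}\}$ of a complement of $\mathbb R y$ inside $T_yN_x$, on which $\hat g$ is positive definite, replaces $X_0$ by $X_0'$ with $\hat g(X_0',E_i)=0$ while preserving $\hat g(y,X_0')=\kappa$. In the basis $(E_1,\ldots,E_{n-1},y,X_0')$ the Gram matrix of $\hat g$ then splits as the orthogonal sum of a positive definite $(n-1)\times(n-1)$ block and the $2\times 2$ block
\[
\begin{pmatrix} 0 & \kappa \\ \kappa & \hat g(X_0',X_0') \end{pmatrix},
\]
whose determinant is $-\kappa^2<0$ and whose signature is therefore $(-,+)$. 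Altogether $\hat g$ at $y$ has signature $(-,+,\ldots,+)$, Lorentzian, and this conclusion is independent of the particular representative $\hat{\mathscr{L}}\in\hat L$ since only $\dd\hat{\mathscr{L}}_x(y)$ up to a non-zero scalar enters the argument.

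The main obstacle is the conversion of the qualitative strong convexity of $\Omega_x$, which is degenerate along the radial direction, into the quantitative estimate $\hat{\mathscr{L}}(y+tX)\ge c(X)t^2$ used above; one would handle this by intersecting $\Omega_x$ with an affine hyperplane transverse to $\mathbb R y$, applying the usual $C^2$ strong convexity on the resulting compact convex section, and pulling the inequality back to $T_xM$ via the positive two-homogeneity of $\hat{\mathscr{L}}$.
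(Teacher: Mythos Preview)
Your proof is correct and arrives at the Lorentzian signature by a route that differs from the paper's in two respects. First, to show $\hat g(X,X)>0$ for $X\in T_yN_x\setminus\mathbb{R}y$, the paper takes a curve $y(t)$ \emph{on} $N_x$ with $\dot y(0)=X$, differentiates $\hat{\mathscr{L}}(y(t))\equiv 0$ twice, and reads off $\hat g(X,X)=-\dd\hat{\mathscr{L}}_x(\ddot y(0))>0$ because strong convexity forces $\ddot y(0)$ to point into $\Omega_x$; you instead push along the affine chord $y+tX$, which by strong convexity exits $\bar\Omega_x$ quadratically, and compare with the Taylor expansion. Both exploit the same geometric fact, but the paper's version avoids having to control $\hat{\mathscr{L}}$ off $N$ and hence the homogeneity/section manoeuvre you flag at the end. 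Second, once positivity on a complement of $\mathbb{R}y$ inside $T_yN_x$ is known, the paper finishes by a case analysis excluding the signatures $(0,0)$, $(\pm,0)$, $(-,-)$ for the residual two-plane via $\hat g(y,\cdot)=\dd\hat{\mathscr{L}}_x\ne 0$; your explicit Gram--Schmidt block decomposition with the $2\times 2$ block of determinant $-\kappa^2<0$ is more direct and transparent. The closing sentence about independence of the representative $\hat{\mathscr{L}}\in\hat L$ is unnecessary: the lemma asserts a property of each element separately, not a uniformity statement.
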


\begin{proof}
For $y\in N_x$, let $g_y:=\frac{\p^2\hat{\mathscr{L}}}{\p y^\mu \p y^\nu}(x,y) \dd y^\mu \dd y^\nu$. By positive homogeneity $g_y(y,y)=2 \hat{\mathscr{L}}=0$
 which proves that the signature of $g_y$ has some zeros or both plus and minus signs. Let $w\in T_y N_x$, $w\not\propto y$, and let $y(t)$ be a curve on $N_x$, where $y=y(0)$, $w=\dot y(0)$. Since $\hat{\mathscr{L}}_x(y(t))=0$ we have differentiating twice and setting $t=0$, $g_y(w,w)+\dd \hat{\mathscr{L}}_x(\ddot y(0))=0$. But since the cone is convex $\ddot y(0)$ points towards the interior namely $\dd \hat{\mathscr{L}}_x(\ddot y(0))<0$  thus $g_y(w,w)>0$ which shows that the signature has at least  $n-1$ plus signs since any $n-1$-dimensional subspace which stays on $\ker \dd \hat{\mathscr{L}}_x\vert_y$ and is transverse to $y$ is a positive subspace for the Hessian. For the other two signs we are left with the possibilities $(-,+)$, $(-,-)$ $(0,0)$, $(\pm,0)$ since they are the only ones that admit a vector with null $g_y$-square. However, $(+,0)$ and $(0,0)$ are not acceptable since there the vectors with null $g_y$-square necessarily annihilate $g_y$, while we have by positive homogeneity $g_y(\cdot,y)=\dd \hat{\mathscr{L}}_x\vert_y\ne 0$ since $y\in N_x$. The choice $(-,-)$ is  not
 acceptable since we have just shown that $g_y$ restricted to $\ker \dd \hat{\mathscr{L}}_x\vert_y$ has signature $(0,+,\cdots,+)$, while for this choice there is no $n$-dimensional subspace with this signature (for these arguments it is useful to keep in mind Cauchy interlacing theorem). The choice $(-,0)$ is  not
 acceptable since again we have just shown that $g_y$ restricted to $\ker \dd \hat{\mathscr{L}}_x\vert_y$ has signature $(0,+,\cdots,+)$ where $y$ annihilates the restricted quadratic form. Here we do have a $n$-dimensional subspace with this signature but the only possibility is that $y$ annihilates $g_y$, which is impossible since $g_y(\cdot,y)=\dd \hat{\mathscr{L}}_x\vert_y\ne 0$. Thus the only possibility is that $g_y$ is Lorentzian. $\square$
\end{proof}

Let $\hat{\mathscr{L}} \in \hat L$, any $C^2$ solution   $t \mapsto x(t)$ of the Lagrange equation
\begin{equation} \label{fod}
\frac{\dd }{\dd t} \frac{\p \hat{\mathscr{L}}}{\p y^\mu}-\frac{\p \hat{\mathscr{L}}}{\p x^\mu}=0, \qquad y^\mu=\frac{\dd x^\mu}{\dd t},
\end{equation}
is such that $\frac{\dd }{\dd t} \hat{\mathscr{L}}(x(t),\dot x(t))=0$, thus it belongs to $N$ if any of its points belongs to it. This equation really defines a flow on $N$, in fact since the vertical Hessian is Lorentzian and hence invertible on $N$ it can be written in the (spray) form
\[
\frac{\dd y^\mu}{\dd t}+g_{y}^{\mu \sigma}\left(\frac{\p^2 \hat{\mathscr{L}}}{\p x^\nu\p y^\sigma} y^\nu -\frac{\p \hat{\mathscr{L}}}{\p x^\sigma}\right)=0, \qquad y^\mu =\frac{\dd x^\mu}{\dd t} .
\]
which is a  Lipschitz vector field over $N$. Existence and uniqueness of the solution is now standard from the theory of ODEs \cite{hartman64}. The map $t \mapsto x(t)$ will be at least $C^{2,1}$; more if $N$ has stronger differentiability properties. We now show that this unparametrized flow depends solely on the distribution of light cones.

\begin{proposition}
Assume $(\star)$.  The unparametrized lightlike solutions to (\ref{fod}) do not depend on the choice of $\hat{\mathscr{L}} \in \hat L$, and so depend solely on the distribution of light cones $N$.
\end{proposition}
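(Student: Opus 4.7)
Plan for the proof.

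The natural approach is to show that any two subsidiary Lagrangians $\hat{\mathscr{L}}_1,\hat{\mathscr{L}}_2\in\hat L$ differ by a positive multiplicative factor near $N$, and then to check that this multiplicative ambiguity reduces, on the zero level set $N$, to a mere change of curve parametrization. First I would establish the factorization: since both functions vanish on the hypersurface $N$ and $\hat{\mathscr{L}}_1$ has non-vanishing differential there, the Hadamard (division) lemma gives $\hat{\mathscr{L}}_2=\phi\,\hat{\mathscr{L}}_1$ on a possibly smaller neighbourhood $U'\supset N$, with $\phi$ sharing the available regularity of the cone data. Taking $\phi=\hat{\mathscr{L}}_2/\hat{\mathscr{L}}_1$ off $N$ and using that both Lagrangians are negative on $\Omega\cap U$ and positive homogeneous of degree two, one sees that $\phi>0$ and, after division, is positive homogeneous of degree zero in $y$.

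Next I would differentiate. Using $\hat{\mathscr{L}}_1=0$ and $d\hat{\mathscr{L}}_1/dt=0$ along a lightlike curve $x(t)$ on which the Lagrange equation $\text{EL}_1^{\mu}=0$ is satisfied, the Leibniz rule applied to $\hat{\mathscr{L}}_2=\phi\,\hat{\mathscr{L}}_1$ collapses and yields
\[
\text{EL}_2^{\mu}\;=\;\phi\,\text{EL}_1^{\mu}+\frac{d\phi}{dt}\,\frac{\partial\hat{\mathscr{L}}_1}{\partial y^{\mu}}\;=\;\frac{d\phi}{dt}\,\frac{\partial\hat{\mathscr{L}}_1}{\partial y^{\mu}}.
\]
Thus along an $\hat{\mathscr{L}}_1$-solution, $\text{EL}_2^{\mu}$ is not in general zero, but it is purely ``radial'' in momentum space, i.e.\ proportional to the non-zero null covector $\partial\hat{\mathscr{L}}_1/\partial y^{\mu}$.

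The key observation is then that this residual term is exactly the anomaly produced by a change of parameter, so it can be cancelled. A short computation using positive homogeneity shows that under $t\mapsto s(t)$ with $\lambda:=dt/ds$, a degree-two Lagrangian transforms its Euler--Lagrange expression as $\widetilde{\text{EL}}_i^{\mu}(s)=\lambda^{2}\,\text{EL}_i^{\mu}(t)+\dot\lambda\,(\partial\hat{\mathscr{L}}_i/\partial y^{\mu})$. Combining this with the formula above (and with $\partial\hat{\mathscr{L}}_2/\partial y^{\mu}=\phi\,\partial\hat{\mathscr{L}}_1/\partial y^{\mu}$ on $N$) reduces $\widetilde{\text{EL}}_2^{\mu}=0$ to the scalar ODE $\lambda\phi'/\phi+\dot\lambda=0$, that is $\lambda=K/\phi(x(t),\dot x(t))$ for a positive constant $K$. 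This is a Lipschitz ODE for $t(s)$ which admits a unique positive solution, so such a reparametrization exists.

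The main obstacle, and the point that must be handled carefully, is the division step: one needs $\phi$ to be smooth (or at least $C^{1}$) up to $N$ even though the Lagrangians are only as differentiable as $N$ itself. I would treat this locally by introducing a transverse hypersurface to $N$, using $\hat{\mathscr{L}}_1$ as a defining function in a tubular neighbourhood, and invoking a parametric version of Hadamard's lemma that preserves regularity and fibrewise homogeneity. Once this is in place the argument is symmetric in $\hat{\mathscr{L}}_1$ and $\hat{\mathscr{L}}_2$, so lightlike solutions interchange under reparametrization; projecting to $M$ yields that the unparametrized null curves in $M$ depend only on the distribution $N$ of light cones, completing the proof.
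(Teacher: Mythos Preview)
Your approach is correct and follows essentially the same route as the paper: relate the two Lagrangians by a positive factor on $N$, show that the Euler--Lagrange expressions differ only by a term proportional to $\partial\hat{\mathscr{L}}/\partial y^\mu$, and absorb that term by the reparametrization $t'=\int \phi^{-1}\,\dd t$ (your $\lambda = K/\phi$). The computations and the final ODE agree with the paper's.

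There is one technical difference worth noting. You take the detour through Hadamard's lemma to obtain a factorization $\hat{\mathscr{L}}_2=\phi\,\hat{\mathscr{L}}_1$ in a \emph{neighbourhood} of $N$, and you correctly flag the regularity of $\phi$ as the delicate point. The paper avoids this step entirely: since the lightlike solution $(x(t),\dot x(t))$ stays in $N$ (by conservation of $\hat{\mathscr{L}}$), it suffices to use the pointwise relation $\dd\hat{\mathscr{L}}=\phi\,\dd\hat{\mathscr{L}}'$ \emph{on} $N$, i.e.\ $\partial_{x^\mu}\hat{\mathscr{L}}=\phi\,\partial_{x^\mu}\hat{\mathscr{L}}'$ and $\partial_{y^\mu}\hat{\mathscr{L}}=\phi\,\partial_{y^\mu}\hat{\mathscr{L}}'$ there. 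One then substitutes these identities into $\frac{\dd}{\dd t}\partial_{y^\mu}\hat{\mathscr{L}}-\partial_{x^\mu}\hat{\mathscr{L}}=0$ along the curve and differentiates; no extension of $\phi$ off $N$ is needed, and $\phi$ is automatically as regular as the first derivatives of the subsidiary Lagrangians restricted to $N$. This buys a cleaner argument with no division-lemma bookkeeping, while your route would also go through once you verify that Hadamard with $C^{2,1}$ data yields $\phi\in C^{1,1}$.
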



\begin{proof}
Let $\hat{\mathscr{L}}' \in \hat L$ be another choice. Since $\hat{\mathscr{L}}$ and $\hat{\mathscr{L}}'$ vanish on the hypersurface $N$ and are negative on $\Omega$ we must have $\dd \hat{\mathscr{L}}=\phi \,\dd \hat{\mathscr{L}}'$ for some positive function $\phi$ on $N$. This means that for  $(x,y)\in N$,
\begin{equation} \label{cji}
\frac{\p \hat{\mathscr{L}} }{\p x^\mu}= \phi \frac{\p \hat{\mathscr{L}}'}{\p x^\mu}, \qquad \frac{\p \hat{\mathscr{L}}}{\p y^\mu}=\phi \frac{\p \hat{\mathscr{L}}'}{\p y^\mu}.
\end{equation}
(The latter equation and the positive homogeneity of both $\hat{\mathscr{L}}$ and $\hat{\mathscr{L}}'$ prove that $\phi$ is positive homogeneous of degree zero.)
From (\ref{fod}) it is immediate that $ \hat{\mathscr{L}}$ is conserved and so that $\dot x$ is lightlike for every $t$, namely $(x(t),\dot x(t))\in N$.
Then over the curve
\[
0=\frac{\dd }{\dd t} \frac{\p \hat{\mathscr{L}}}{\p y^\mu}-\frac{\p \hat{\mathscr{L}}}{\p x^\mu}=\phi \left(\left(\frac{\dd }{\dd t} \log \phi \right)\frac{\p \hat{\mathscr{L}}'}{\p y^\mu}+\frac{\dd }{\dd t} \frac{\p \hat{\mathscr{L}}'}{\p y^\mu}-\frac{\p \hat{\mathscr{L}}'}{\p x^\mu}\right)(x(t),\dot x(t))
\]
Let $t'(t)$ be another parametrization to be determined. Since $\hat{\mathscr{L}}'$ is positive homogeneous of degree two in $y$, so is it partial derivative with respect to $x$, while the partial derivative with respect to $y$ is positive homogeneous of degree one, thus from
\[
\left(\!\left(\frac{\dd }{\dd t} \log \phi\right) \frac{\dd t'}{\dd t}\frac{\p \hat{\mathscr{L}}'}{\p y^\mu}+\frac{\dd }{\dd t} \left(\!\left(\frac{\dd t'}{\dd t}\right) \frac{\p \hat{\mathscr{L}}'}{\p y^\mu}\right)\!-\!\left(\frac{\dd t'}{\dd t}\right)^{\!\!2}\frac{\p \hat{\mathscr{L}}'}{\p x^\mu}\right)\left(\! x'(t'),\tfrac{\dd x'}{\dd t'}(t')\right)=0,
\]
where $x'(t'):=x(t)$, we get
\[
\left(\frac{\dd t'}{\dd t}\right)^{-1}\left(\frac{\dd }{\dd t} \log \left(\phi\frac{\dd t'}{\dd t} \right)\right)\frac{\p \hat{\mathscr{L}}'}{\p y^\mu}+ \frac{\dd }{\dd t'} \frac{\p \hat{\mathscr{L}}'}{\p y^\mu}-\frac{\p \hat{\mathscr{L}}'}{\p x^\mu}=0
\]
The choice
\begin{equation} \label{mkz}
t'= \int \frac{1}{\phi(x(t),\dot x(t))} \, \dd t
\end{equation}
shows that the reparametrization $x'(t')$ satisfies (\ref{fod}) with $\hat{\mathscr{L}}$ replaced by $\hat{\mathscr{L}}'$ and $t$ replaced by $t'$.  $\square$
\end{proof}

We conclude
\begin{theorem}
Assume $(\star)$, then we have a natural definition of lightlike geodesic which coincides with that for a subsidiary Lagrangian vertically $C^{2,1}$  at the light cone.
\end{theorem}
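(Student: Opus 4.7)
The plan is to define a lightlike geodesic, unparametrized, as the projection to $M$ of any integral curve of the Lagrangian flow (\ref{fod}) induced on $N$ by any subsidiary Lagrangian $\hat{\mathscr{L}} \in \hat L$. The content of the theorem is then: (i) such a flow exists and lies in $N$, (ii) the resulting family of unparametrized curves on $M$ does not depend on the choice in $\hat L$, so it is intrinsic to the cone distribution, and (iii) when $\mathscr{L}$ itself happens to admit a vertically $C^{2,1}$ extension through $N$, the standard lightlike geodesics are recovered. The first existence step of $\hat{\mathscr{L}}$ was already established under $(\star)$, as was the Lorentzian character of its vertical Hessian on $N$.

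First I would invert the vertical Hessian (available by the lemma and the assumption $(\star)$) to rewrite (\ref{fod}) as the spray
\[
\frac{\dd y^\mu}{\dd t}+g_y^{\mu\sigma}\Bigl(\tfrac{\p^2\hat{\mathscr{L}}}{\p x^\nu\p y^\sigma}y^\nu-\tfrac{\p\hat{\mathscr{L}}}{\p x^\sigma}\Bigr)=0,
\]
which is a Lipschitz vector field on $N$ since the regularity of $N$ entails that $\hat{\mathscr{L}}$, its first derivatives, and the entries of $g_y^{-1}$ are Lipschitz along $N$. Standard ODE theory (Picard/Cauchy--Lipschitz, as in \cite{hartman64}) then yields existence and uniqueness of the integral curves through each $(x,y)\in N$; positive homogeneity of degree two gives $\dd\hat{\mathscr{L}}/\dd t=0$ along solutions, so the flow preserves $N$ and the projected curve $t\mapsto x(t)$ is lightlike.

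Next, independence from the choice of $\hat{\mathscr{L}}\in \hat L$ is precisely the content of the previous proposition: given two subsidiary Lagrangians $\hat{\mathscr{L}},\hat{\mathscr{L}}'\in \hat L$, the relation $\dd\hat{\mathscr{L}}=\phi\,\dd\hat{\mathscr{L}}'$ on $N$ together with the reparametrization (\ref{mkz}) converts one solution to the other. Hence the image of the curve in $M$, as an unparametrized subset, depends only on $N$.

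Finally, for the coincidence claim, suppose that the original Finsler Lagrangian $\mathscr{L}$ admits a vertically $C^{2,1}$ extension which is Lorentzian on a neighborhood of $N$, $\mathscr{L}<0$ on $\Omega$, and has $\dd\mathscr{L}|_N\ne 0$. Then $\mathscr{L}$ itself lies in $\hat L$, so the lightlike geodesics defined in the usual (Euler--Lagrange) sense for $\mathscr{L}$ are a particular instance of the family just constructed, and by the intrinsicness proved in the preceding step they coincide as unparametrized curves with those obtained from any other $\hat{\mathscr{L}}\in \hat L$. The main obstacle, already disposed of by the previous lemma, is verifying the Lorentzian, hence invertible, character of the vertical Hessian of an arbitrary $\hat{\mathscr{L}}\in \hat L$ at $N$; with that in hand the argument here is essentially the assembly of the two preceding propositions into a single statement. $\square$
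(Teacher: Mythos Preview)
Your proposal is correct and follows essentially the same route as the paper: the theorem is stated there as an immediate conclusion (``We conclude'') of the preceding results, namely the existence of subsidiary Lagrangians, the lemma on the Lorentzian vertical Hessian on $N$, the spray form giving a Lipschitz vector field with Picard existence/uniqueness, and the proposition on independence of the choice of $\hat{\mathscr{L}}\in\hat L$. Your step (iii), observing that a vertically $C^{2,1}$ Lorentz-Finsler Lagrangian $\mathscr{L}$ itself belongs to $\hat L$ so that the new definition agrees with the usual one, makes explicit what the paper leaves implicit in the word ``coincides''.
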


Let $x(t)$ be a lightlike solution to (\ref{fod}), we say that $t$ is a $\hat{\mathscr{L}}$-parametrization of the unparametrized geodesic $x$. Since $\hat{\mathscr{L}}$ is positive homogeneous of degree two, any two $\hat{\mathscr{L}}$-parametrizations are affinely related.
We say that the $\hat{\mathscr{L}}$-parametrization $t$ is {\em syntonized} with the $\hat{\mathscr{L}}'$-parametrization $t'$ if at every point of $x$
\[
{\dd t}=\phi \,\dd t' \quad \textrm{ where } \phi \textrm{ is given by } \quad \dd \hat{\mathscr{L}}= \phi\, \dd \hat{\mathscr{L}}'
\]
The two parametrizations are syntonized at one event if they are at every event.

\begin{theorem} \label{hst}
Parametrizations relative to different Lagrangians in $\hat L$ assign the same momenta to the lightlike geodesic if and only if they are syntonized.
\end{theorem}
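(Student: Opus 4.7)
The plan is to compute the two momentum 1-forms along the common unparametrized geodesic and compare them using (a) the proportionality (\ref{cji}) of the vertical differentials of $\hat{\mathscr{L}}$ and $\hat{\mathscr{L}}'$ on $N$, and (b) the positive homogeneity of degree one of the vertical derivative of a degree-two Lagrangian. First I would fix an unparametrized lightlike geodesic and let $t\mapsto x(t)$ be an $\hat{\mathscr{L}}$-parametrization, $t'\mapsto x'(t')$ an $\hat{\mathscr{L}}'$-parametrization with $x(t)=x'(t'(t))$, so that $\dot x'(t')=(\dd t/\dd t')\,\dot x(t)$ with $\dd t/\dd t'>0$. Write the associated momenta
\[
p_\mu=\frac{\p \hat{\mathscr{L}}}{\p y^\mu}(x,\dot x),\qquad p'_\mu=\frac{\p \hat{\mathscr{L}}'}{\p y^\mu}(x',\dot x').
\]

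Next I would exploit positive homogeneity: since $\hat{\mathscr{L}}'$ is positive homogeneous of degree two in $y$, the vertical derivative $\p\hat{\mathscr{L}}'/\p y^\mu$ is positive homogeneous of degree one, hence
\[
p'_\mu=\frac{\p \hat{\mathscr{L}}'}{\p y^\mu}\!\left(x,\frac{\dd t}{\dd t'}\,\dot x\right)=\frac{\dd t}{\dd t'}\,\frac{\p \hat{\mathscr{L}}'}{\p y^\mu}(x,\dot x).
\]
Now I would invoke (\ref{cji}), which holds at $(x,\dot x)\in N$ with the positive conformal factor $\phi$ defined by $\dd \hat{\mathscr{L}}=\phi\,\dd \hat{\mathscr{L}}'$: it gives $\p\hat{\mathscr{L}}'/\p y^\mu=\phi^{-1}\,\p\hat{\mathscr{L}}/\p y^\mu$. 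Substituting,
\[
p'_\mu=\frac{1}{\phi}\frac{\dd t}{\dd t'}\,p_\mu.
\]

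From this identity the equivalence is immediate. Since $\dd\hat{\mathscr{L}}_x|_N\neq 0$, the covector $p_\mu$ is nonzero along $x$, so $p'_\mu=p_\mu$ at every event if and only if $\phi^{-1}(\dd t/\dd t')=1$ at every event, that is, $\dd t=\phi\,\dd t'$ along $x$, which is exactly the syntonization condition. I do not foresee a real obstacle here: the only points to watch are the non-vanishing of $p_\mu$ (to make the $\Leftarrow$ direction an iff), the correct use of the homogeneity degree, and the fact that (\ref{cji}) is applicable because the curve lies in $N$ (which was already established, since $\hat{\mathscr{L}}$ is conserved along solutions of (\ref{fod})). $\square$
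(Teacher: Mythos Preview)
Your proof is correct and follows essentially the same approach as the paper: both reduce the statement to the identity $p_\mu=\phi\,(\dd t'/\dd t)\,p'_\mu$ obtained by combining the proportionality (\ref{cji}) of the vertical differentials on $N$ with the degree-one positive homogeneity of $\p\hat{\mathscr{L}}'/\p y^\mu$. You have merely written out the steps more explicitly and added the observation that $p_\mu\neq 0$ (from $\dd\hat{\mathscr{L}}_x|_N\neq 0$) to secure the ``only if'' direction.
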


\begin{proof}
It follows from the identity
\begin{align}
&\frac{\p \hat{\mathscr{L}}}{\p y^\mu}(x(t),\tfrac{\dd x}{\dd t}(t))=\phi \frac{\p \hat{\mathscr{L}}'}{\p y^\mu} (x(t),\tfrac{\dd x}{\dd t}(t))=\phi \frac{\dd t'}{\dd t}\frac{\p \hat{\mathscr{L}}'}{\p y^\mu} (x'(t'), \tfrac{\dd x'}{\dd t'}(t')) 
\end{align} $\square$
\end{proof}

\subsubsection{Transport of momenta}
The notion of affine parameter on the lightlike geodesic might appear to be  necessary for the physical interpretation  of the theory. However, this is not so, what is mandatory is the possibility of transporting the momenta, while there could be no bijection between momenta and vectors. In fact,  observations are obtained coupling a momenta relative to a massive/massless particle with a velocity space vector (hence timelike) representing an observer. Null vectors are really just intermediate tools in theories like general relativity, used to express important concepts such as lightlike geodesics and momenta. They can be used because in those metrical theories there happen to be a one to one correspondence with lightlike momenta, a correspondence which is purely accidental and related to the fact that $g$ makes sense at the boundary of the cone. This correspondence is now lost in our theory without, however, leading to any loss of physical content. This is really a good feature of  our theory as  it helps us to identify and remove unnecessary mathematical constraints.

Let us elaborate on this important point.
We have seen that each Lagrangian $\hat{\mathscr{L}}$ has Lorentzian Hessian on the light cone, thus there is a  bijective map from lightlike vectors to lightlike momenta. However, this map depends on $\hat{\mathscr{L}}$ and so is not physically relevant. Without a Lagrangian we have just a bijection between lightlike momenta {\em directions} and lightlike vector {\em directions}, a map which follows from the duality between the strongly convex  $C^2$ cone $\Omega_x \subset T_xM\backslash 0$, and its polar $\Omega^*_x \subset T^*_x M\backslash 0$. It turns out that this is all we need for moving the momenta of  a lightlike particle along its geodesic.


So suppose we are given a photon of momenta $p$ at some event. We find first the lightlike direction that corresponds to it using the duality of the cones, then the unparametrized lightlike geodesic.  We choose an element  $\hat{\mathscr{L}}\in \hat L$ and a starting velocity which is mapped to $p$ by the Legendre map of  $\hat{\mathscr{L}}$. This assigns a parametrization to the lightlike geodesic. If we were to choose a different element of $\hat L$ we would still get a syntonized parametrization by Theorem \ref{hst}, thus  since parametrizations that are syntonized at one event remain syntonized, by the same theorem the momenta assigned to the other points of the lightlike geodesic does not  depend on the choice of $\hat{\mathscr{L}}$.

We conclude
\begin{theorem}
Assume $(\star)$.  There is a natural unparametrized flow on $N^*$ such that every integral curve $\gamma$ projects on an unparametrized lightlike geodesic $\sigma$. If $(x, p)\in \gamma$ and $y$ is tangent to $\sigma$ at $x$ then $p_\mu y^\mu=0$.
\end{theorem}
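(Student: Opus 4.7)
The plan is to construct the flow on $N^*$ by lifting the already-established unparametrized lightlike geodesic flow on $N$ through the Legendre map of any $\hat{\mathscr{L}}\in \hat L$, and then to show the result is intrinsic to the cone data. Fix $\hat{\mathscr{L}}\in\hat L$ and a lightlike geodesic $t\mapsto x(t)$ solving (\ref{fod}) with $y(t):=\dot x(t)\in N_{x(t)}$. Define
\[
p_\mu(t) := \frac{\p \hat{\mathscr{L}}}{\p y^\mu}\bigg|_{(x(t),y(t))}, \qquad \gamma(t):=(x(t),p(t))\in T^*M.
\]
Positive homogeneity of degree two of $\hat{\mathscr{L}}$ in $y$ combined with Euler's identity yields $p_\mu y^\mu = y^\mu\p_{y^\mu}\hat{\mathscr{L}} = 2\hat{\mathscr{L}}(x,y) = 0$ on $N$, which is precisely the pairing identity stated in the theorem.

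Next, I need $p(t)\in N^*_{x(t)}$. By hypothesis $\dd\hat{\mathscr{L}}_x\neq 0$ on $N$, and since $\hat{\mathscr{L}}|_N=0$, the covector $\dd\hat{\mathscr{L}}_x|_y$ annihilates the tangent hyperplane $T_yN_x$. The strong convexity of $\bar\Omega_x$ assumed in $(\star)$ implies that this affine tangent hyperplane supports $\bar\Omega_x$, so $p$ coincides, up to a positive multiplicative constant, with the unique supporting covector of $\bar\Omega_x$ at $y$; hence $p\in N^*_{x(t)}$. This matches the projective duality between $PN$ and $PN^*$ described in the discussion of photon momenta.

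For independence from $\hat{\mathscr{L}}$, let $\hat{\mathscr{L}}'\in\hat L$ and let $t'$ be the syntonized parameter defined by (\ref{mkz}). The proposition preceding Theorem~\ref{hst} shows that $x$ satisfies (\ref{fod}) for $\hat{\mathscr{L}}'$ in the parameter $t'$, and Theorem~\ref{hst} then ensures the assigned momenta coincide at every event along the curve; a non-syntonized $\hat{\mathscr{L}}'$-parametrization only rescales $p$ by a positive function along the ray of $N^*_{x(t)}$ generated by $p(t)$, leaving the underlying unparametrized curve in $N^*$ untouched. The resulting family of orbits is the desired natural unparametrized flow on $N^*$, each orbit of which projects to a lightlike geodesic $\sigma$ on $M$. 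The main obstacle is the containment $p\in N^*$: the algebraic identity $p(y)=0$ falls out of Euler for free, but the one-sided inequality $p(z)\le 0$ for every $z\in\bar\Omega_x$ required by (\ref{pol}) rests on the full strength of strong convexity under $(\star)$. With that in hand, the lifted vector field on $N^*$ is Lipschitz, so standard ODE theory provides existence and uniqueness of the flow. $\square$
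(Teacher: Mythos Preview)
Your argument is correct and follows the same route as the paper: the theorem is stated in the paper as the conclusion of the preceding ``Transport of momenta'' discussion, which constructs the flow by lifting the lightlike geodesic flow on $N$ to $N^*$ via the Legendre map of any $\hat{\mathscr{L}}\in\hat L$ and invokes Theorem~\ref{hst} on syntonized parametrizations to show the result is independent of that choice. You have made explicit two points the paper leaves implicit---the Euler identity giving $p_\mu y^\mu=0$ and the supporting-hyperplane argument placing $p$ in $N^*_x$---but the overall strategy is identical.
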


\subsubsection{Affine parameter}
Some results and constructions might still require some notion of affine parameter. In order to introduce this concept the next argument could be of value.
Suppose to have been given not just the distribution of light cones but a Lorentz-Finsler Lagrangian which is Lipschitz on $N$ (this is basically the case of affine sphere relativity provided we assume that the $x$ dependence is not problematic and the vertical Lipschitz constant is locally uniform in $x$; recall the mentioned result by Cheng and Yau \cite[p.\ 53]{cheng77} which assures Lipschitzness in the vertical variable, or  Cor.\ \ref{hud}). We want to show that there is a natural parametrization on the lightlike geodesics determined up to affine transformations. Let $x(t)$ be a lightlike geodesic parametrized accordingly to the Lagrangian $\hat{\mathscr{L}} \in \hat L$.  The differential $\dd \mathscr{L}_x$ exists almost everywhere\footnote{It is really defined everywhere if the cones are smooth see Cor.\ \ref{hud}.} on $N_x$, thus we can find a locally bounded almost everywhere positive measurable function $\varphi:N \to \mathbb{R}$ such that at every $x$ and almost every $y\in N_x$, $\dd \mathscr{L}_x=\varphi \dd \hat{\mathscr{L}}_x$.
Defining
\[
\lambda=\int \varphi(x(t),\dot x(t))\dd t
\]
we obtain a parametrization which up to affine transformations it is independent of $\hat{\mathscr{L}}\in \hat L$ as it follows from Eqs. (\ref{cji}) and (\ref{mkz}). If $\mathscr{L}$ is just Lipschitz at the cone this parameter is defined only on almost every geodesic, since it is required that $\dd \hat{\mathscr{L}}_x$ exists on almost every point of the geodesic. The lightlike geodesic with images that do not comply with this requirement can be still approximated by lightlike geodesics for which the affine parameter is defined.

\subsubsection{Causality theory}
We know that the limit of an accumulating sequence of Finsler causal curves is still a Finsler causal curve, the argument being that given in \cite[Remark 2]{minguzzi14h}.
Let $y_n \in T_xM$, $y_n \to y$, where $y_n$ are timelike and $y$ is lightlike. We did not prove that the timelike geodesics with starting velocity $y_n$ converge in the limit to the lightlike geodesic with starting velocity $y$. We were not able to do so since the non-differentiability of the Lagrangian might imply that the limit is chronal though causal. Still the body of causality theory does not depend on this result but rather on the local achronality of lightlike geodesics (for instance, this property assures that  Cauchy horizons are generated by lightlike geodesics). Here we prove that the lightlike geodesics previously introduced are locally achronal (with respect to the timelike curves defined by $\mathscr{L}$).

\begin{proposition}
Assume ($\star$). The subsidiary Lorentzian Lagrangian $\mathscr{L}\colon U \to \mathbb{R}$, $N \subset U$, can be chosen so that $U=TM\backslash 0$, it has Lorentzian Hessian and   the same degree of differentiability of $N$.
\end{proposition}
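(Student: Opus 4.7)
\emph{Plan.} The previous proposition provides a subsidiary Lagrangian $\hat{\mathscr{L}}$ on an open conic neighborhood $U$ of $N$, and the preceding lemma shows that its vertical Hessian is Lorentzian on all of $N$; by openness of the non-degenerate signature, the Hessian remains Lorentzian on some smaller conic open neighborhood $U_0\Subset U$. It therefore suffices to extend $\hat{\mathscr{L}}|_{U_0}$ to a globally defined function on $TM\setminus 0$, positively homogeneous of degree two in the fibers, vanishing on $N$, negative on $\Omega$, with the regularity of $N$ and everywhere Lorentzian vertical Hessian.

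First I would build local Lagrangians chart by chart. Pick a locally finite cover $\{V_\alpha\}$ of $M$ and a smooth section $T_\alpha$ of $\Omega|_{V_\alpha}$ (possible because each $\Omega_x$ is a convex cone). In coordinates adapted to $T_\alpha$, the cone sits inside a half-space and its transverse cross-section is a bounded strongly convex domain. Using the construction in the proof of the previous proposition, with the local $\hat u^2_\alpha$ extended globally on the transverse hyperplane and chosen strongly concave with quadratic growth at infinity, I obtain via Eq.~(\ref{osd}) and an auxiliary Riemannian radial parametrization a subsidiary Lagrangian $\mathscr{L}_\alpha$ on all of $TV_\alpha\setminus 0$ with the desired properties there. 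Then I would glue by a partition of unity $\{\chi_\alpha\}$ subordinate to $\{V_\alpha\}$,
\[
\mathscr{L}(x,y)=\sum_\alpha \chi_\alpha(x)\mathscr{L}_\alpha(x,y),
\]
which preserves positive homogeneity, the vanishing on $N$, and the sign structure by linearity.

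The crucial verification is that the vertical Hessian of $\mathscr{L}$, being the convex combination $\sum_\alpha \chi_\alpha(x)g^\alpha_y$ of local Lorentzian Hessians, is itself Lorentzian. This holds as soon as the $g^\alpha_y$ share a common timelike direction, since the set of Lorentzian forms with a fixed timelike vector is convex. For $y\in\Omega$ the vector $y$ itself serves, since $g^\alpha_y(y,y)=2\mathscr{L}_\alpha(y)<0$ for every $\alpha$; for $y\in N$ the argument in the preceding lemma applies verbatim to the convex combination (each $g^\alpha_y|_{T_yN}$ is positive semi-definite with null space exactly $\langle y\rangle$, hence so is the sum, forcing signature $(-,+,\ldots,+)$ for the full Hessian); for $y$ outside $\bar\Omega$ one must engineer each $\mathscr{L}_\alpha$ so that $T_\alpha$ is $g^\alpha_y$-timelike at every $y$, i.e.\ $\mathscr{L}_\alpha$ strongly concave along $T_\alpha$. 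This last requirement, achievable using the remaining freedom in choosing $u_\alpha^2$ and its extension to all of $\mathbb{R}^n$, is the \emph{main obstacle} of the proof. A final blending with $\hat{\mathscr{L}}$ near $N$ via a projective (zero-degree homogeneous in $y$) cutoff ensures that $\mathscr{L}$ coincides with the Lorentzian $\hat{\mathscr{L}}$ on a neighborhood of $N$, delivering the desired globally defined Lagrangian with the same regularity as $N$.
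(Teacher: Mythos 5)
Your gluing scheme rests on the claim that ``the set of Lorentzian forms with a fixed timelike vector is convex,'' and this is false. In two dimensions take $g_1=-\dd t^2+\dd x^2$ and $g_2=-\dd t^2-4\,\dd t\,\dd x+(\epsilon-4)\,\dd x^2$ with $0<\epsilon<1$: both are Lorentzian (determinants $-1$ and $-\epsilon$) and both make $\p_t$ timelike, yet their average $-\dd t^2-2\,\dd t\,\dd x+\tfrac{\epsilon-3}{2}\dd x^2$ has determinant $\tfrac{1-\epsilon}{2}>0$ and is negative definite. What is true is that the forms which are negative on a fixed vector $v$ \emph{and} positive definite on a fixed complementary hyperplane $H$ form a convex set of Lorentzian forms; but your local Hessians $g^\alpha_y$ come with $\alpha$-dependent orthogonal complements, and outside $\bar\Omega_x$ you exhibit neither a common $v$ nor a common $H$, so the signature of $\sum_\alpha\chi_\alpha(x)\,g^\alpha_y$ is not controlled there. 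Worse, the step you yourself flag as the ``main obstacle'' --- producing on each chart a Lagrangian with Lorentzian vertical Hessian on \emph{all} of $TV_\alpha\setminus 0$ --- is exactly the single-chart instance of the proposition being proved, so deferring it to ``the remaining freedom in choosing $u_\alpha^2$'' leaves the argument circular. (The closing blend with $\hat{\mathscr{L}}$ through a $y$-dependent cutoff has the further defect that the vertical Hessian then acquires cross terms in the first and second $y$-derivatives of the cutoff, so it is not a convex combination of the two Hessians at all.)

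The paper resolves precisely this obstacle fiberwise and without any gluing: it parametrizes the indicatrix of $\hat{\mathscr{L}}$ on the collar $\{-\epsilon<u\le 0\}\subset\bar D$ by the graphing function $u$ of Eq.\ (\ref{sop}), notes that the Lorentzian Hessian condition is equivalent to convexity of $u$ (Eqs.\ (\ref{kod}) and (\ref{bcp})), proves that $u$ is globally convex on the collar via Tietze--Nakajima (or Ghomi), extends it convexly to all of $D$ by Min Yan's convex extension theorem --- which yields the subsidiary Lagrangian on $\bar\Omega_x$ --- and then extends beyond the cone to $T_xM\setminus 0$ by the main result of \cite{minguzzi14h}. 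Any repair of your approach would need those two extension results anyway, at which point the partition of unity buys nothing.
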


\begin{proof}
On $T_xM$ let us consider the parametrization of the  indicatrix of $\hat{\mathscr{L}}$ (defined near the light cone) through the function $u$ (see Eq.\ (\ref{sop})). Here $u$ is defined just on a half neighborhood $U=\{v\in \bar D: -\epsilon<u(v)\le 0\}$, $ \p D\subset U \subset \bar D$, where $u=0$ on $\p D$. The Lorentzian nature of the Hessian of $\hat{\mathscr{L}}$ is equivalent to the convexity of $u$ (see Eqs.\ (\ref{kod}) and (\ref{bcp})). So we want to show that the $C^2$ function $u$ can be extended to the whole $D$ preserving convexity and its degree of differentiability. Due to a theorem by Min Yan \cite[Theor.\ 3.2]{minyan13} we have just to show that $u$ is convex on $U$ (which means $u(\alpha v_1+\beta v_2)\le \alpha u(v_1)+\beta u(v_2)$ for $\alpha,\beta\ge 0$, $\alpha+\beta=1$, whenever the linear combination belongs to $U$). Let $h$ be the extension of $u$ obtained setting $h=-\epsilon$ inside the region bounded by $\p U $. Its epigraph is convex since it is locally convex and connected (Tietze-Nakajima's theorem) thus $u\vert_U$ was convex (alternatively apply directly the theorem by Ghomi \cite{ghomi02}). We have extended the subsidiary Lagrangian to $\bar{\Omega}_x$. The extension over $T_xM$ follows from the main result of \cite{minguzzi14h}. $\square$
\end{proof}


Let $N$ be $C^{3,1}$. As the original and the subsidiary Lagrangian share the same timelike cones and lightlike geodesics the causality of our original Finsler theory coincides with that of the subsidiary Lagrangian. Causality theory for Lagrangians $C^{3,1}$ defined on $TM\backslash 0$ has been developed in \cite{minguzzi13d} where we proved that lightlike geodesics are achronal \cite[Theor.\ 6]{minguzzi13d}. Thus the same holds for the present theory
\begin{corollary}
The lightlike geodesics just introduced are locally achronal with respect to the chronological relation determined by the light cones.
\end{corollary}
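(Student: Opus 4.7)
The plan is to reduce the statement to the known achronality result for globally $C^{3,1}$ Finsler Lagrangians, exploiting the fact that chronology in our rough theory is a statement about the cone distribution $N$ only, not about any particular Lagrangian representing it.

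First I would invoke the preceding proposition to obtain a subsidiary Lagrangian $\hat{\mathscr{L}}$ defined on all of $TM\setminus 0$, with Lorentzian Hessian and the same differentiability as $N$, which is assumed $C^{3,1}$. By construction $\hat{\mathscr{L}}$ vanishes on $N$ and is negative on $\Omega$, so it has the same timelike cone distribution as the original Lorentz–Finsler structure. Consequently, the class of piecewise timelike ($\equiv$ chronological) curves determined by $\hat{\mathscr{L}}$ coincides with the one determined by the original Lagrangian: the chronological relation $\ll$ is the same for both.

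Next I would check that the lightlike geodesics introduced in this section agree with the lightlike geodesics of $\hat{\mathscr{L}}$ in the usual Euler–Lagrange sense. This is immediate from the construction: our unparametrized lightlike geodesic flow was defined via the Lagrange equations \eqref{fod} for any element of $\hat{L}$, and we proved independence from that choice. Since $\hat{\mathscr{L}}\in\hat{L}$, the lightlike geodesics of the present theory are precisely those of the smooth globally defined subsidiary Lagrangian.

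Finally I would apply Theorem~6 of \cite{minguzzi13d}, which states that for Lorentz–Finsler Lagrangians of class $C^{3,1}$ on $TM\setminus 0$ with Lorentzian fiberwise Hessian, lightlike geodesics are locally achronal. Combined with the two observations above—shared chronology and shared lightlike geodesics—this yields the claim. The only potential obstacle is to make sure that the extension procedure indeed preserves the $C^{3,1}$ class needed to quote \cite{minguzzi13d}; but this is exactly what the preceding proposition guarantees under our standing assumption ($\star$) strengthened to $C^{3,1}$, so no further work is required. $\square$
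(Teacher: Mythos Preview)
Your proof is correct and follows essentially the same route as the paper: extend to a globally defined $C^{3,1}$ subsidiary Lagrangian via the preceding proposition, observe that it shares both the timelike cones (hence the chronological relation) and the lightlike geodesics with the original structure, and then quote \cite[Theor.~6]{minguzzi13d}. The paper's argument is exactly this, stated in the paragraph immediately preceding the corollary.
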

As a consequence, standard results of causality theory follow  (e.g.\ \cite[Lemma 2]{minguzzi13d}).
Actually, we have shown that  causality theory does not depend on a Lagrangian as its main results follow just from the definition of $N$.

\section{Conclusions}
Finsler gravity theory has proved successful in generalizing some standard results of causality theory including singularity theorems. 
In this work we have argued that the family of Finsler spacetimes might be too broad. We showed that in order to establish a one-to-one correspondence with the family of pairs given by (a) sharp convex cone structure and (b) spacetime volume form, we have to restrict ourselves to {\em affine sphere spacetimes}, namely to Finsler spacetimes having affine sphere indicatrices, or equivalently, having vanishing mean Cartan torsion.  Only through this restriction we can preserve the deep correspondence between spacetimes and pairs of spacetime measures and conic orders.

We obtained this result taking advantage of some deep mathematical theorems on affine differential geometry proved in the sixties and seventies by Calabi, Cheng, Yau and other mathematicians. In particular, we showed that for a consistent physical interpretation projective coordinates should be used on the tangent bundle and homogeneous coordinates should be used on the cotangent bundle. These are the coordinate systems which  simplify the Monge-Amp\`ere equation of the affine sphere and which  admit the simplest physical interpretation in terms of velocity and momentum variables. We have also shown that the non-relativistic theory admits the same formulation; it is sufficient to replace proper affine spheres with improper affine spheres. In fact, we have shown that improper affine sphere spacetimes are equivalent to Galilei structures, namely to the kinematical structures representing non-relativistic physics.

In the last section we have returned to the more general context of Lorentz-Finsler  theories. We have shown that the unparametrized lightlike geodesic flow and the transport of lightlike momenta over lightlike geodesics can be consistently defined without using the differentiability of the Lagrangian at the cone, but using just differentiability and convexity conditions on the distribution of light cones. In fact, we have shown that the lightlike geodesics and the transport of lightlike momenta depend solely on the light cone distribution, not on the Lagrangian.
This  is a surprising and very satisfactory result. The affine parameter could be defined in some cases but we  argued that it does not seem necessary for most of the physical interpretation of the theory. Finally, we have shown that the basic results of causality theory do not depend on some Lagrangian differentiability condition at the light cone.

Much of the difficulties connected with the Finslerian generalization of general relativity are due to an overabundance of mathematical objects.
Fortunately, in the last years a less tensorial, more geometrical approach has helped to clarify several aspects of this theory. Its ability to involve some beautiful but somewhat overlooked chapters of mathematics, such as affine differential geometry, might signal that we could be on the right path for the development of a physical extension of general relativity. Indeed, the overall feeling is that we might be rapidly progressing towards a mature gravitational theory.


\section*{Acknowledgments}
I thank Xu-Jia Wang for suggesting to pass through the regularity of the graph of $u$ to estimate the regularity of $u^2$ (cf.\ Prop.\ \ref{qoi}). I announced the main idea of this  work  in \cite{minguzzi15c}, and  explored other aspects of the condition $I_\alpha=0$ in \cite{minguzzi14c,minguzzi15d}.
This work has been partially supported by GNFM of INDAM.


\end{document}